\providecommand{\tightlist}{%
  \setlength{\itemsep}{0pt}\setlength{\parskip}{0pt}}
\theoremstyle:=definition,remark,plain\do{%
        \expandafter\g@addto@macro\csname th@\theoremstyle\endcsname{%
            \addtolength\thm@preskip\parskip
            }%
        }
\theoremstyle{plain}
\newtheorem{theorem}{Theorem}
\newtheorem{lemma}[theorem]{Lemma}
\newtheorem{corollary}{Corollary}
\theoremstyle{definition}
\newtheorem{definition}{Definition}
\theoremstyle{remark}
\declaretheorem[name=Remark, qed={\lower-0.3ex\hbox{$\triangleleft$}}]{remark}
\DeclareMathOperator{\sign}{sign}
\newcommand{\pushright}[1]{\ifmeasuring@#1\else\omit\hfill$\displaystyle#1$\fi\ignorespaces}
\newcommand{\pushleft}[1]{\ifmeasuring@#1\else\omit$\displaystyle#1$\hfill\fi\ignorespaces}
\newcommand{\hideFromPandoc}[1]{#1}
\newcommand{\startsupp}{%
\newpage
\appendix %
}
\newcommand{\putacknowledgement}{}
\newcommand{\figureone}{
  \scantokens{
\begin{figure*}[t]
\centering
\begin{subfigure}[t]{.32\linewidth}
\centering\includegraphics[width=1.0\textwidth]{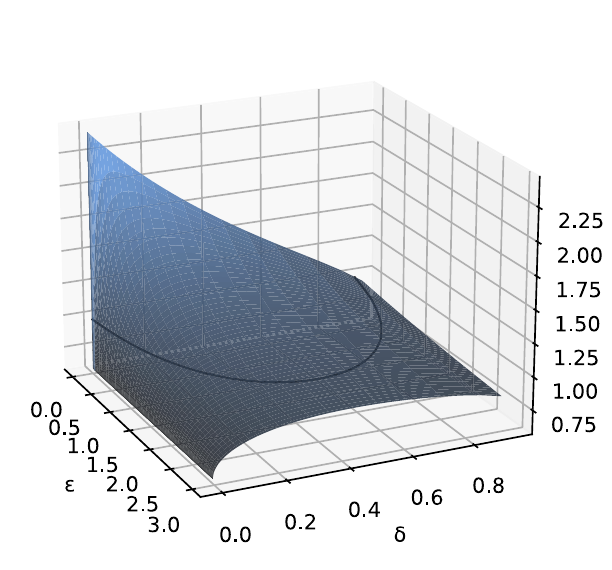}
\subcaption{$v_{1, \log}$}\label{fig:rat01}
\end{subfigure} 
\begin{subfigure}[t]{.32\linewidth}
\centering\includegraphics[width=1.0\textwidth]{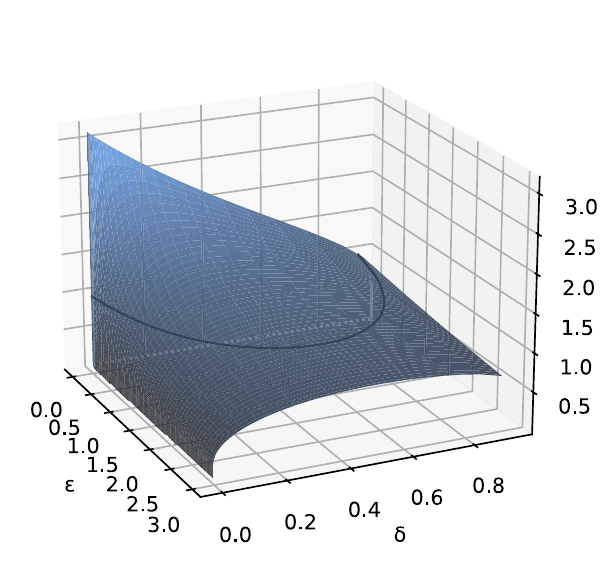}
\subcaption{$v_{1, 2}$}\label{fig:rat02}
\end{subfigure} 
\begin{subfigure}[t]{.32\linewidth}
\centering\includegraphics[width=1.0\textwidth]{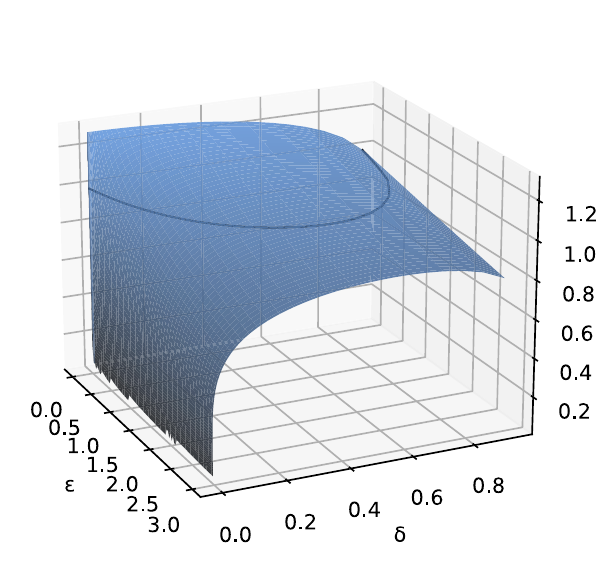}
\subcaption{$v_{\log, 2}$}\label{fig:rat12}
\end{subfigure}
\caption{\label{fig:rat} For pairs of Logistic($\log$), Laplace (1),
and Gaussian ({2}) distributions: plots of variance ratios
$v_{a,b}$ and contours where the resulting variances are equal.}
\end{figure*}}}
\newcommand{\figuretwo}{
  \scantokens{
\begin{figure*}[t]
\centering
\begin{subfigure}[t]{.24\linewidth}
\centering\includegraphics[width=\textwidth]{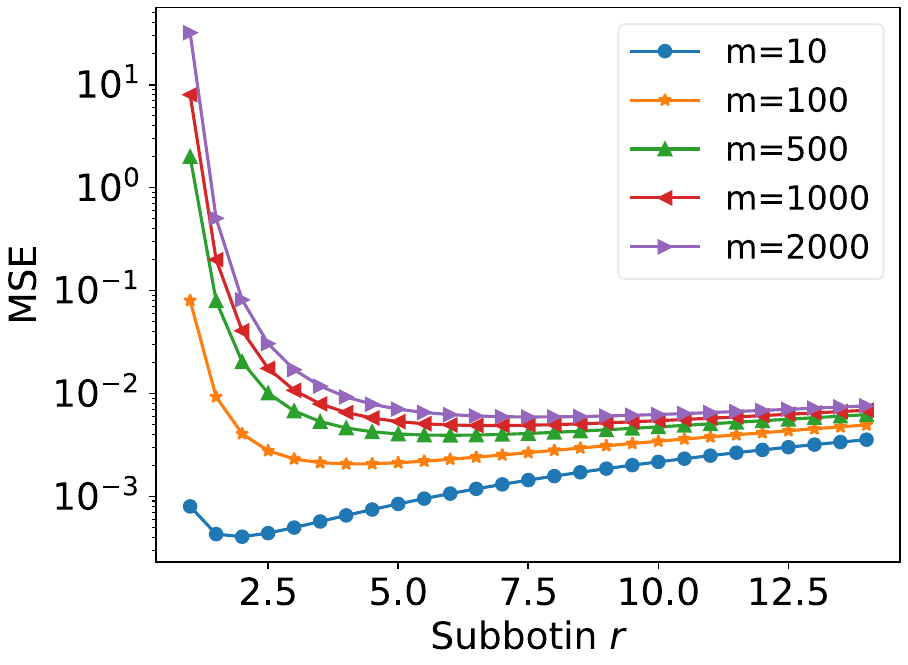}
\subcaption{Mean squared errors for $\epsilon=1$}\label{fig:vars}
\end{subfigure} 
\begin{subfigure}[t]{.24\linewidth}
\centering\includegraphics[width=\textwidth]{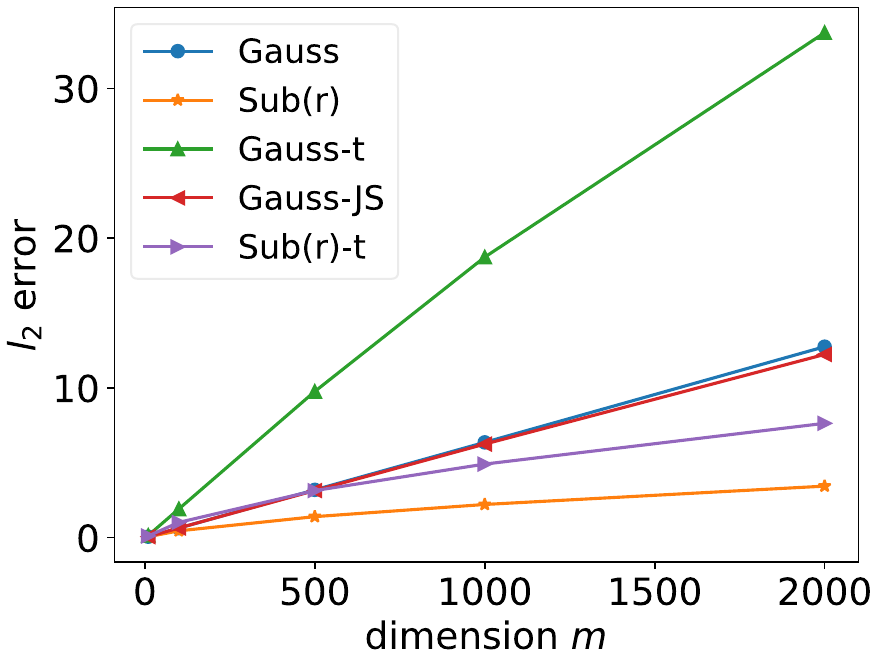}
\subcaption{$l_{2}$ errors for $\epsilon=1$}\label{fig:e1err}
\end{subfigure} 
\begin{subfigure}[t]{.24\linewidth}
\centering\includegraphics[width=\textwidth]{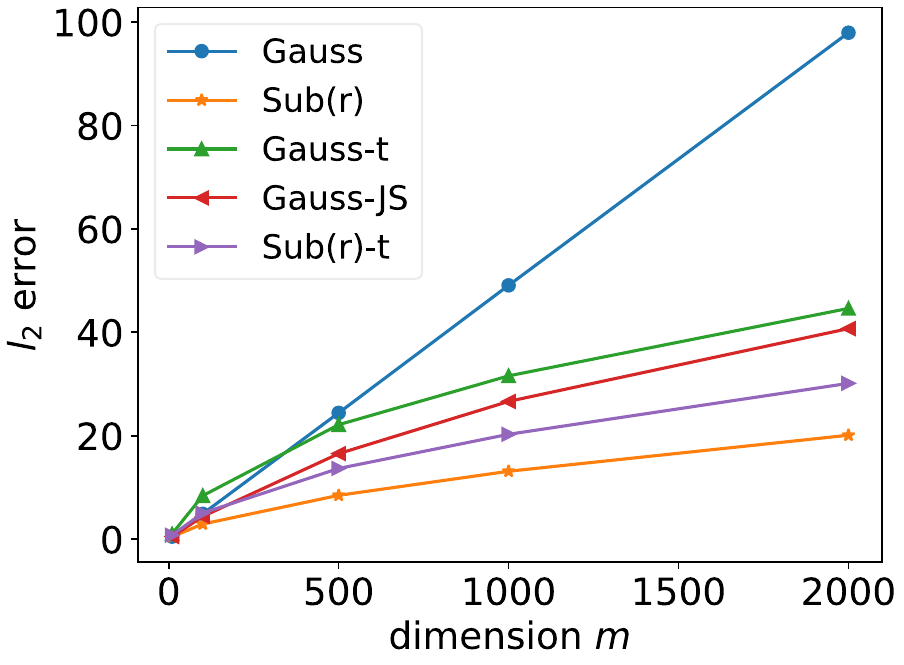}
\subcaption{$l_{2}$ errors for $\epsilon=0.1$}\label{fig:e01err}
\end{subfigure} 
\begin{subfigure}[t]{.24\linewidth}
\centering\includegraphics[width=\textwidth]{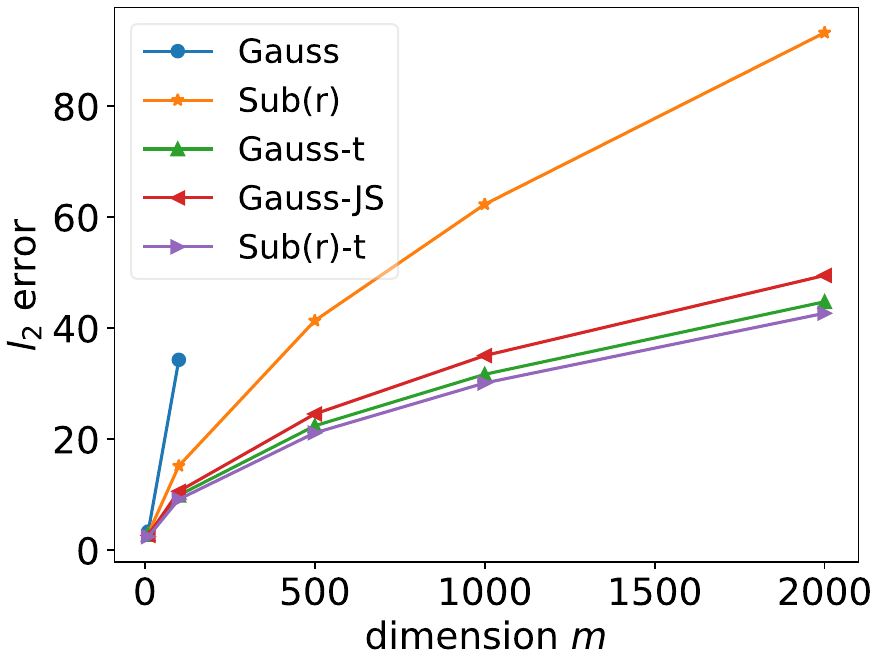}
\subcaption{$l_{2}$ errors for $\epsilon=0.01$}\label{fig:e001err}
\end{subfigure} 
\caption{\label{fig:experiment} Graphical summary of the mean estimation experiments.}
\end{figure*}}}
\newcommand{\figoneaistats}{}
\newcommand{\figtwoaistats}{}
\newcommand{\figoneother}{\figureone}
\newcommand{\figtwoother}{\figuretwo}
\title{Differential privacy for symmetric log-concave
mechanisms\thanks{This work has been funded in part by Innlandet
Fylkeskommune, as well as Research Council of Norway grants 308904 and
309834. We thank the anonymous reviewers for helpful comments.}}
\author{Staal A.
Vinterbo\footnote{Department of Information Security and Communication Technology, Norwegian University of Science and Technology.}}
\date{}
\newcounter{changebar}
\newcommand{\changebarcolor}{red}
\newcommand{\changestart}[1][red]{%
    \renewcommand{\changebarcolor}{#1}%
    \stepcounter{changebar}%
    \tikzmarknode{chbar-\thechangebar-start}{\strut}%
}
\newcommand{\changeend}{%
    \tikzmarknode{chbar-\thechangebar-end}{\strut}%
    \begin{tikzpicture}[remember picture, overlay]
        \draw[very thick, \changebarcolor] ([xshift={\oddsidemargin+1in-10pt}]current page.west |- chbar-\thechangebar-start.north) -- ([xshift={\oddsidemargin+1in-10pt}]current page.west |- chbar-\thechangebar-end.south);
    \end{tikzpicture}%
}
\begin{document}

\maketitle

\begin{abstract}
Adding random noise to database query results is an important tool for
achieving privacy. A challenge is to minimize this noise while still
meeting privacy requirements. Recently, a sufficient and necessary
condition for \((\epsilon, \delta)\)-differential privacy for Gaussian
noise was published. This condition allows the computation of the
minimum privacy-preserving scale for this distribution. We extend this
work and provide a sufficient and necessary condition for
\((\epsilon, \delta)\)-differential privacy for all symmetric and
log-concave noise densities. Our results allow fine-grained tailoring of
the noise distribution to the dimensionality of the query result. We
demonstrate that this can yield significantly lower mean squared errors
than those incurred by the currently used Laplace and Gaussian
mechanisms for the same \(\epsilon\) and \(\delta\).
\end{abstract}

\newcommand{\xqed}[1]{%
  \leavevmode\unskip\penalty9999 \hbox{}\nobreak\hfill
  \quad\hbox{\ensuremath{#1}}}

\newcommand{\eprop}{}

\newcommand{\mc}[1]{\ensuremath{\mathcal{#1}}}
\newcommand{\mb}[1]{\ensuremath{\mathbf{#1}}}
\newcommand{\mbb}[1]{\ensuremath{\mathbb{#1}}}

\newcommand{\xcirc}[1]{\vcenter{\hbox{$#1\circ$}}}
\newcommand{\fcomp}{\mathbin{\mathchoice
  {\xcirc\scriptstyle}
  {\xcirc\scriptstyle}
  {\xcirc\scriptscriptstyle}
  {\xcirc\scriptscriptstyle}
}}

\newcommand{\R}{\ensuremath{\mathbb{R}}}
\newcommand{\N}{\ensuremath{\mathbb{N}}}

\newcommand{\X}{\ensuremath{\mathcal{X}}}
\newcommand{\Y}{\ensuremath{\mathcal{Y}}}
\newcommand{\cov}{\operatorname{cov}}
\newcommand{\var}{\operatorname{Var}}
\newcommand{\E}{\operatorname{E}}
\newcommand{\tr}{\operatorname{Tr}}
\newcommand{\strings}{\ensuremath{\mathbb{S}}}

\newcommand{\mult}{\operatorname{Mult}}
\newcommand{\Bet}{\ensuremath{\mathcal{B}}}
\newcommand{\Lap}{\ensuremath{\mathcal{L}}}
\newcommand{\gam}{\operatorname{Gamma}}
\newcommand{\Exp}{\operatorname{Exp}}
\newcommand{\bit}{\mbb{B}}
\newcommand{\biti}{\ensuremath{\bit^\infty}}
\newcommand{\ms}[1]{\ensuremath{\mathit{#1}}}
\newcommand{\deq}{\ensuremath{\stackrel{\tiny d}{=}}}
\newcommand{\dto}{\ensuremath{\stackrel{\tiny d}{\rightarrow}}}
\newcommand{\rto}{\ensuremath{\stackrel{\tiny\mc{R}}{\rightarrow}}}

\newcommand{\fault}{\ensuremath{\mathrm{fault}}}

\newcommand{\Sketch}{\textsc{Sketch}}
\newcommand{\Assume}{\textsc{Assume}}
\newcommand{\Prove}{\textsc{Prove}}
\newcommand{\Case}{\textsc{Case}}
\newcommand{\Proof}{\textsc{Proof}}
\newcommand{\Let}{\textsc{Let}}

\newcommand{\M}{\mathcal{M}}
\newcommand{\share}{\operatorname{share}}
\newcommand{\csum}{\operatorname{sum}}

\newcommand{\proto}{\textsc{MPLM}}
\newcommand{\corr}{\operatorname{corr}}
\newcommand{\1}{\ensuremath{\mathbf{1}}}
\newcommand{\0}{\ensuremath{\mathbf{0}}}

\newcommand{\diag}{\operatorname{diag}}

\newcommand{\Ri}{\ensuremath{\mbb{R}_{I}}}

\newcommand{\fmod}{\operatorname{fmod}}
\newcommand{\imod}{\operatorname{imod}}

\newcommand{\erf}{\operatorname{erf}}

\newcommand{\mats}{\ensuremath{\mathcal{M}}}

\newcommand{\lfun}{\ensuremath{\lambda}}

\newcommand{\Sub}[1]{\ensuremath{\text{Subbotin}_{#1}}}
\newcommand{\MSE}{\operatorname{MSE}}

\newcommand{\vvv}[1]{\ensuremath{\mathbf{#1}}}
\newcommand{\w}[1]{\ensuremath{\mathbb{#1}}}

\newcommand{\supp}{\operatorname{Supp}}

\hypertarget{introduction}{%
\section{INTRODUCTION}\label{introduction}}

\changestart \textbf{Update:} Lemma \ref{lem:iso} is invalid, as is
Theorem \ref{thm:gensub} building on it. This also invalidates
essentially all the results of of Section \ref{sec:extensions} and the
corresponding conclusions. Thanks to the community for pointing out the error in Lemma \ref{lem:iso}. \changeend

The Hippocratic oath, estimated to be about 2500 years old, mentions
patient privacy. Privacy is also recognized as a fundamental human right
\citep{UNUDHR} and is featured in some form in the constitutions of more
than 180 countries \citep{elkinsConstituteWorldsConstitutions2013}.
Still, privacy, and data privacy in particular, is very much an unsolved
problem.

Data privacy rests on keeping the data itself confidential as well as
controlling information leakage by processes that have access to the
data. In the following we consider leakage from database query results.
A problem is that the result of a database query can be highly revealing
about an individual. For example, consider the case where all records
except one are known to an adversary. Now, any result of any linear
query immediately reveals the query value of the unknown record.

The definition of \((\epsilon, \delta)\)-differential privacy
\citep{Dwork2006a, our-data-ourselves-privacy-via-distributed-noise-generation}
addresses the problem of revealing query responses by requiring query
response algorithms to be insensitive to single record changes. This
insensitivity is achieved by random perturbations in the computation of
query results and is quantified by that any single record change is not
allowed to change the likelihood of any output event by more than a
multiplicative factor \(e^{\epsilon}\) plus a small additive term
\(\delta\). For small \(\epsilon\) and very small \(\delta\), this
provides a strong form of privacy but still allows producing high
quality statistics, particularly when they are informative and the data
is representative \citep{10.1145/1536414.1536466}.

For real-valued queries, the prototypical differentially private
response algorithms add noise distributed according to either the
Laplace or Gaussian distributions to the real query answer. These
algorithms are called the Laplace and Gaussian mechanisms, respectively
(see e.g., \citet{dworkAlgorithmicFoundationsDifferential2014}).

For these mechanisms that add symmetric noise to query results,
differential privacy places a condition on the scale parameter of the
noise distributions. More privacy requires larger scale. Of course, a
larger scale decreases the concentration of the now noisy query result
around its true value. Therefore, a goal is to find the minimal scale
that meets the requirement of differential privacy.

Pursuing this goal, there exists a line of work into sufficent
conditions on the scale for the \((\epsilon, \delta)\)-differentially
private Gaussian mechanism, starting with what could be called the
standard closed form scale bound for the Gaussian mechanism noted in
\citet{Dwork2006a} (and also discussed in
\citet{dworkAlgorithmicFoundationsDifferential2014}), via an improvement
in the underlying analytic sufficient condition by \citet{6606817}, to
the analytic sufficient and necessary condition by
\citet{pmlr-v80-balle18a}. To our knowledge, the Gaussian distribution
is the only continuous distribution for which a necessary and sufficient
condition for \((\epsilon, \delta)\)-differential privacy has been
described so far.

Inspired by the above, we extend the line of work by a sufficient and
necessary condition for \((\epsilon, \delta)\)-differential privacy for
mechanisms that add noise distributed according to log-concave and
symmetric densities.

Our results allow us to demonstrate the optimization of utility by not
only minimizing the scale of a fixed mechanism's perturbations to meet
the privacy requirements, but also simultaneously choose the mechanism
itself by taking the dimensionality of the query function into account.

More background and other related works are described in Section
\ref{sec:background}.

\hypertarget{summary-of-findings}{%
\subsection{Summary of findings}\label{summary-of-findings}}

Let \(d\) be a database, \(q\) be a real-valued query function on
databases, \(s \geq 0\), and \(X\) be a random variable distributed
according to density \(f\). We will call an algorithm that returns a
variate of the random variable \(q(d) + sX\) a ``mechanism'' and often
identify the mechanism by its random variable. When \(f = e^{-\psi}\)
where \(\psi\) is even and convex, we call the associated mechanism
symmetric log-concave.

Our main theoretical findings are

\begin{itemize}
\item
  Lemma \ref{lem:neccsuff} that states a sufficient and necessary
  condition for \((\epsilon, \delta)\)-differential privacy for
  symmetric log-concave mechanisms for real valued query functions, and
\item
  Lemma \ref{lem:iso} that states that the condition in Lemma
  \ref{lem:neccsuff} can be extended naturally to query functions with
  values in \(\ensuremath{\mathbb{R}}^{n}\) for \(n\geq 1\) and
  mechanisms that add noise vectors distributed according to spherically
  symmetric log-concave distributions that are spherical with respect to
  the norm used to define the global sensitivity of the query. In
  particular, this is the case for noise vectors of iid
  \ensuremath{\text{Subbotin}_{p}} random variables and \(p\)-norms for
  \(p\geq 1\) as described in Theorem \ref{thm:gensub}, giving rise to
  an infinite family of mechanisms containing both the Laplace and the
  Gaussian mechanisms.
\end{itemize}

These results enable maximizing utility by not only finding the optimal
member of a fixed noise scale family such as the Gaussian, but also
simultaneously selecting the scale family itself depending on the query
response dimensionality. In our case, given the one-to-one
correspondence between vector valued \ensuremath{\text{Subbotin}_{p}}
mechanisms and the \(p\)-norm used to define the global sensitivity
\(\Delta\) of the query function, we select the scale family by
including the parameter \(p\) into the parameter optimization process.
We demonstrate this idea by empirically showing that depending on the
number of columns in a real valued data table, there are different
\ensuremath{\text{Subbotin}_{p}} mechanisms that yield the smallest
\(l_{2}\)-error for estimating the average row under given privacy
constraints. As the number of columns increases, the error of the
selected mechanism becomes much smaller than if just considering the
Laplace or Gaussian scale families.

In addition, we show that a minimum scale bound that fulfills the
condition in Lemma \ref{lem:neccsuff} is linear in \(\Delta\) (Lemma
\ref{lem:standards}), we produce closed form necessary and sufficient
conditions for \((\epsilon, \delta)\)-differential privacy and scale
\(s\) \begin{align*}
  s &\geq \frac{\Delta}{\epsilon - 2 \log\left(1-\delta \right)},
      \text{and}\\
  s &\geq \frac{\Delta}{2 \log\left(\frac{e^{\frac{\epsilon}{2}}+
     \sqrt{\delta \left(e^{\epsilon}+\delta  
     -1\right)}}{1-\delta}\right)},
\end{align*} for the Laplace and Logistic mechanisms, respectively
(Theorems \ref{thm:laplace} and \ref{thm:logistic}). In the case of
mechanisms supported on \(\ensuremath{\mathbb{R}}\), we prove that
one-dimensional mechanisms that can achieve
\((\epsilon,0)\)-differential privacy can exhibit arbitrarily smaller
variances than those that that cannot for small enough \(\delta\)
(Theorem \ref{thm:unboundeds}). We demonstrate that this \(\delta\) need
not be very small by numerically showing that the Laplace and Logistic
mechanisms, which can achieve \((\epsilon, 0)\)-differential privacy,
exhibit smaller variance than the Gaussian, that cannot, for a
significant range of privacy parameters, for example when
\(\epsilon \geq 0.05\) and \(\delta \leq 0.001\).

\hypertarget{sec:prelims}{%
\section{PRELIMINARIES}\label{sec:prelims}}

We briefly recapitulate select definitions and known results.

\hypertarget{subbotin-densities}{%
\subsection{Subbotin densities}\label{subbotin-densities}}

Let \(C(r) := 2 \Gamma \left(\frac{1}{r}\right) r^{\frac{1}{r}-1}\)
where \(\Gamma\) denotes the gamma function, and let \begin{align*}
f_{r}(x) := \frac{e^{-\frac{|x|^{r}}{r}}}{C(r)}.
\end{align*} The family of densities \(f_{r}\) for \(r>0\) are called
Subbotin \citep{subbotinLawFrequencyError1923} densities, exponential
power distribution densities, or the generalized normal distribution
\citep{doi:10.1080/02664760500079464} densities. This family includes
the standard Laplace density as \(f_{1}\), the standard Gaussian density
as \(f_{2}\), and the uniform density on \((-1,1)\) in the limit as
\(r\to \infty\). Let \(\Gamma(s, x)\) denote the upper incomplete gamma
function. Then, \begin{align*}
F_{r}(x) &:= \int_{-\infty}^{x} f_{r}(w) dw \\
&= \frac{1}{2}+\sign\left(x \right)\left(\frac{1}{2}-\frac{\Gamma
\left(\frac{1}{r}, \frac{{\mid x \mid}^{r}}{r}\right)}{2 \Gamma
\left(\frac{1}{r}\right)}\right),\\
F^{-1}_{r}(p) &= \sign\left(p-\frac{1}{2}\right)
F^{-1}_{\Gamma}\left(2|p-\frac{1}{2}|, \frac{1}{r}, \frac{1}{r}\right)^{\frac{1}{r}},
\end{align*} where \(F^{-1}_{\Gamma}(p, \alpha, \beta)\) is the inverse
CDF of the Gamma distribution with density
\(f_{\Gamma}(x)=\frac{\beta^{\alpha}x^{\alpha-1}e^{-\beta x}}{\Gamma(\alpha)}\).
For a random variable \(X\) distributed according to \(f_{r}\),
\begin{align*}
\operatorname{Var}(X) &= r^{\frac{2}{r}} \frac{\Gamma \left( \frac{3}{r} \right)}
{\Gamma \left( \frac{1}{r} \right)}.
\end{align*} We will denote the Subbotin distribution with density
\(f_{r}\) as \ensuremath{\text{Subbotin}_{r}}.

The function \begin{align*}
\psi_{r}(x) := \frac{|x|^{r}}{r} + \log(C(r)) = -\log(f_{r}(x))
\end{align*} is even as \(x\) only enters as \(|x|\), and also convex
for \(r\geq 1\) as \(\psi_{r}''(x) = \frac{|x|^{r}(r-1)}{x^{2}} \geq 0\)
then. Consequently, the Subbotin densities are even and log-concave for
\(r \geq 1\).

\hypertarget{differential-privacy}{%
\subsection{Differential privacy}\label{differential-privacy}}

Formally, let a database \(d\) be a collection of record values from
some set \(V\). Two databases \(d\) and \(d'\) are \emph{neighboring} if
they differ in one record. Let \(\ensuremath{\mathcal{N}}\) be the set
of all pairs of neighboring databases.

\begin{definition}

The global sensitivity of a real-valued function \(q\) on databases is
\[
\Delta := \sup_{(d,d') \in \ensuremath{\mathcal{N}}} |q(d) - q(d')|.
\]

\end{definition}

\begin{definition}[$(\epsilon,\delta)$-differential privacy \cite{Dwork2006a,our-data-ourselves-privacy-via-distributed-noise-generation}]

A randomized algorithm \(M\) is called
\((\epsilon,\delta)\)-differentially private if for any measurable set
\(S\) of possible outputs and all
\((d,d') \in \ensuremath{\mathcal{N}}\) \[
\Pr(M(d) \in S) \leq e^{\epsilon}\Pr(M(d') \in S) + \delta,
\] where the probabilities are over randomness used in \(M\). By
\(\epsilon\)-differential privacy we mean \((\epsilon, 0)\)-differential
privacy.

\end{definition}

We are now ready to develop our results.

\hypertarget{the-one-dimensional-case}{%
\section{THE ONE-DIMENSIONAL CASE}\label{the-one-dimensional-case}}

\hypertarget{sec:contrib}{%
\subsection{A condition for real-valued mechanisms}\label{sec:contrib}}

The following is our first main result.

\begin{lemma}\label{lem:neccsuff}

Let \(X\) be a random variable distributed according to density
\(f(x) = e^{-\psi(x)}\) where
\(\psi : \ensuremath{\mathbb{R}}\to (-\infty,\infty]\) is convex and
even, with \(f\) having support \((-a, a)\) for \(a \in (0,\infty]\).
Then for a real-valued function \(q\) on databases with global
sensitivity \(\Delta \geq 0\) and a database \(d\), the mechanism
returning a variate of \(q(d) + s X\) is
\((\epsilon, \delta)\)-differentially private for \(\epsilon \geq 0\)
and \(\delta \geq 0\) if and only if for
\(F(x) = \int_{-\infty}^{x} f(u)du\) \begin{align} \label{eq:neccsuff} 
F \left( \frac{\Delta - t}{s} \right) - 
e^{\epsilon}F \left(  -\frac{t}{s} \right)
\leq \delta 
\end{align} where \begin{align} 
\label{eq:tstar} 
t := \sup \left\{z < as \mid \psi\left(\frac{z}{s}\right) -
\psi\left(\frac{z - \Delta}{s} \right) \leq \epsilon\right\}.
\end{align} Furthermore, if the above holds for scale \(s > 0\), then it
also holds for scales \(s' > s\).

\end{lemma}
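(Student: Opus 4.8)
The plan is to reduce $(\epsilon,\delta)$-differential privacy to a statement about the hockey-stick divergence between the two translated densities, exploit log-concavity to show that the optimal distinguishing event is a half-line, and then read off \eqref{eq:neccsuff} using symmetry. Writing $\mu = q(d)$ and $\mu' = q(d')$ for a neighboring pair, the output laws are $P$ and $Q$ with densities $\tfrac1s f(\tfrac{y-\mu}{s})$ and $\tfrac1s f(\tfrac{y-\mu'}{s})$. I would first recall that $M$ is $(\epsilon,\delta)$-private iff $\sup_S[P(S)-e^{\epsilon}Q(S)]\le\delta$ for every neighboring pair (in both orderings), and that this supremum is attained at $S^{*}=\{y: p(y)>e^{\epsilon}q(y)\}$. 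Since both densities translate together, $P(S^{*})-e^{\epsilon}Q(S^{*})$ depends only on the separation $\Delta_0=\mu-\mu'$, and by evenness of $f$ it suffices to treat $\Delta_0\ge 0$.

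The crucial step is to determine the shape of $S^{*}$. For $\Delta_0>0$ the log-likelihood ratio is $h(y)=\psi(\tfrac{y-\mu'}{s})-\psi(\tfrac{y-\mu}{s})$, a difference of the convex function $\psi$ at two points whose gap $\Delta_0/s$ is constant; convexity makes $h$ non-decreasing, so $\{h>\epsilon\}$ is a half-line. Taking $\mu'=0$, $\mu=\Delta_0$ and using $1-F(x)=F(-x)$, the event $S^{*}=(t,\infty)$ gives $P(S^*)=F(\tfrac{\Delta_0-t}{s})$ and $Q(S^*)=F(-\tfrac{t}{s})$, so the divergence equals the left-hand side of \eqref{eq:neccsuff} with $\Delta_0$ in place of $\Delta$ and $t$ the crossing threshold. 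The one genuinely delicate point, and the part I expect to cost the most care, is the behaviour at the edge of the support when $a<\infty$: there $h$ jumps to $+\infty$ on the region where $q=0<p$, so $S^{*}$ also contains that region; the restriction $z<as$ together with the supremum in \eqref{eq:tstar} is exactly what pins down the correct threshold, and I would verify that the formula degenerates correctly in the extreme cases (for instance, when $h\le\epsilon$ throughout $(-as,as)$, forcing $t=as$ and $Q(S^*)=0$).

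Next I would show that the worst case is maximal separation. Writing $\bar F=1-F$ and optimising over the threshold, the divergence becomes $G(\Delta_0/s):=\sup_{u}\big[\bar F(u-\Delta_0/s)-e^{\epsilon}\bar F(u)\big]$ after the substitution $u=t/s$. Because the subtracted term $e^{\epsilon}\bar F(u)$ is free of $\Delta_0$ while $\bar F(u-\Delta_0/s)$ increases with $\Delta_0$ at each fixed $u$, the map $\Delta_0\mapsto G(\Delta_0/s)$ is a supremum of non-decreasing functions, hence non-decreasing. Thus $G(\Delta_0/s)\le G(\Delta/s)$ for every admissible $\Delta_0\in[0,\Delta]$, which yields sufficiency at once; for necessity I would use that separations approach the supremum $\Delta$ arbitrarily closely, together with the left-continuity of $G$, to transfer a violation at $\Delta$ to an actual neighboring pair.

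Finally, the scale claim falls out of the same monotonicity observation: the condition is $G(\Delta/s)\le\delta$, and since $\rho\mapsto G(\rho)$ is non-decreasing by the identical argument, increasing $s$ decreases $\rho=\Delta/s$ and so cannot increase $G$. Hence validity of \eqref{eq:neccsuff} at scale $s$ forces it at every $s'>s$.
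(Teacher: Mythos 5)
Your proposal is correct and shares the paper's overall architecture: reduce to the likelihood-ratio level set $S^{*}=\{p>e^{\epsilon}q\}$ (the paper's Lemma \ref{lem:equiv}), normalize to a non-negative separation by translation and evenness (Lemmas \ref{lem:translation} and \ref{lem:symmetry}), and use the monotone likelihood ratio property of log-concave translation families (Lemma \ref{lem:mlr}) to identify $S^{*}$ as a half-line, with the boundary-of-support case $a<\infty$ handled by splitting off the region where $f$ vanishes but its translate does not (the paper's Lemmas \ref{lem:supports}--\ref{lem:probscrit}). Where you genuinely diverge is the monotonicity step. The paper's Lemma \ref{lem:monotonicity} differentiates $g(\epsilon,d,s)=F\left(\frac{d-t}{s}\right)-e^{\epsilon}F\left(-\frac{t}{s}\right)$ with the threshold $t(\epsilon,d,s)$ treated implicitly, relying on the envelope-type cancellation $f(\alpha)-e^{\epsilon}f(\beta)=0$ at the crossing point to kill the $\partial t$ terms. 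You instead rewrite the divergence as $G(\rho)=\sup_{u}\left[\bar F(u-\rho)-e^{\epsilon}\bar F(u)\right]$ with $\rho=\Delta_{0}/s$ and read off monotonicity in $\rho$ as a supremum of non-decreasing functions. This is cleaner, needs no differentiability of $t$, and simultaneously yields the worst-case-at-maximal-separation claim, the scale claim, and the scaling identity $s(\epsilon,\delta,\Delta)=\Delta\,s(\epsilon,\delta,1)$ of the paper's Lemma \ref{lem:standards}; your left-continuity argument for transferring a violation at the possibly unattained supremum $\Delta$ to an actual neighboring pair is a point the paper passes over quickly. The one place you would still need to match the paper's care is the bookkeeping in (\ref{eq:tstar}) when the supports of $f$ and its translate barely overlap or are disjoint (where $\psi$ takes the value $+\infty$ on both sides), which you correctly flag as the delicate point but do not carry out.
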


\begin{remark}\label{rem:epsdiff}

Whenever \(t = as\), the criterion (\ref{eq:neccsuff}) reduces to
\(F \left(\frac{\Delta}{s} - a\right) \leq \delta\). When in addition
\(a = \infty\) and therefore \(t=\infty\), then the mechanism is
\((\epsilon,\delta)\)-differentially private for all \(\delta \geq 0\).

When \(a < \infty\) and \(\Delta \geq 2as\), we get that
\(f((x - \Delta)/s)/f(x/s) = 0\) for all \(x \in (-as,as)\) and
therefore the mechanism is not \((\epsilon, \delta)\)-differentially
private for any \(\delta < 1\). \qedhere

\end{remark}

The smallest finite scale \(s\) for which (\ref{eq:neccsuff}) holds is
monotone in \(\epsilon\), \(\delta\), and \(\Delta\). We make this
precise in the following.

\begin{lemma}\label{lem:standards}

Let \(s(\epsilon, \delta, \Delta)\) be the smallest \(s\) such that
(\ref{eq:neccsuff}) holds for log-concave and even density \(f\) and
real valued function \(q\) on databases with global sensitivity
\(\Delta > 0\). Then
\(s(\epsilon, \delta, \Delta) = \Delta\, s(\epsilon, \delta, 1)\), and
\(s(\epsilon, \delta, \Delta)\) is non-increasing in both \(\epsilon\)
and \(\delta\).

\end{lemma}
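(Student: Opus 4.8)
The plan is to treat the two claims separately: I would obtain the scaling identity $s(\epsilon,\delta,\Delta)=\Delta\,s(\epsilon,\delta,1)$ by a change of variables in (\ref{eq:neccsuff}) and (\ref{eq:tstar}), and the monotonicity in $\epsilon$ and $\delta$ by combining Lemma \ref{lem:neccsuff} with the trivial monotonicity of the definition of differential privacy itself.

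For the scaling identity, I would write $s=\Delta\sigma$ with $\sigma>0$ and substitute $z=\Delta w$ in the set defining $t$ in (\ref{eq:tstar}). Since $\Delta$ and $s$ enter (\ref{eq:tstar}) only through $z/s$, $(z-\Delta)/s$, and the bound $z<as$, this substitution rewrites the defining set as $\{w<a\sigma \mid \psi(w/\sigma)-\psi((w-1)/\sigma)\le\epsilon\}$, whose supremum is the threshold $t_1$ for the case $\Delta=1$ at scale $\sigma$; hence $t=\Delta t_1$. Feeding this back into (\ref{eq:neccsuff}) and using $(\Delta-t)/s=(1-t_1)/\sigma$ together with $t/s=t_1/\sigma$ shows that (\ref{eq:neccsuff}) holds at parameters $(\Delta,s)$ if and only if it holds at $(1,s/\Delta)$. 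Taking the infimum over valid scales on each side then yields $s(\epsilon,\delta,\Delta)=\Delta\,s(\epsilon,\delta,1)$.

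For the monotonicity, the key is to read $s(\epsilon,\delta,\Delta)$ through Lemma \ref{lem:neccsuff}: it is the smallest scale for which the mechanism is $(\epsilon,\delta)$-differentially private. The definition of $(\epsilon,\delta)$-differential privacy is itself monotone in both parameters. Indeed, if $\Pr(M(d)\in S)\le e^{\epsilon}\Pr(M(d')\in S)+\delta$ for every measurable $S$ and every neighboring pair, then because probabilities are nonnegative the same inequality holds with $e^{\epsilon}$ replaced by $e^{\epsilon'}$ for any $\epsilon'\ge\epsilon$, and with $\delta$ replaced by any $\delta'\ge\delta$. Thus any scale achieving $(\epsilon,\delta)$-privacy also achieves $(\epsilon',\delta)$- and $(\epsilon,\delta')$-privacy, so the set of privacy-preserving scales grows as $\epsilon$ or $\delta$ increases. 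By Lemma \ref{lem:neccsuff} this set coincides with the set of $s$ satisfying (\ref{eq:neccsuff}), and its smallest element is therefore non-increasing in each of $\epsilon$ and $\delta$.

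The one point requiring care, and the place I expect the only real (if modest) friction, is the passage from an inclusion of valid-scale sets to an inequality between their minima, which needs each minimum to be attained rather than merely an infimum. This is supplied by the final sentence of Lemma \ref{lem:neccsuff}, which makes the valid-scale set upward closed, i.e.\ a ray $[s^\star,\infty)$, so that its infimum is a genuine minimum; and the scaling identity already reduces existence at general $\Delta$ to existence at $\Delta=1$. Beyond this, the main work is the bookkeeping in the change of variables for $t$, where I would verify that the support constraint $z<as$ transforms to $w<a\sigma$ and that the supremum commutes with the rescaling, rather than any genuine analytic difficulty.
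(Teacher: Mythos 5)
Your proof is correct. The scaling identity is handled exactly as in the paper: the paper substitutes $u=z\Delta$ into the definition of $t$ to get $\Delta\, t(\epsilon,1,s)=t(\epsilon,\Delta,s\Delta)$ and then reads off from (\ref{eq:neccsuff}) that the criterion at $(\Delta,s)$ is equivalent to the criterion at $(1,s/\Delta)$; your change of variables $z=\Delta w$, $s=\Delta\sigma$ is the same computation. Where you genuinely diverge is the monotonicity claim. The paper proves it analytically, by citing Lemma \ref{lem:monotonicity}, whose part c.\ shows that the left-hand side $g$ of (\ref{eq:neccsuff}) is decreasing in $\epsilon$ (with the $\epsilon$-dependence of $t$ accounted for via the chain rule), so that increasing $\epsilon$ or $\delta$ only enlarges the set of admissible scales. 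You instead route the argument through the equivalence in Lemma \ref{lem:neccsuff}: since $(\epsilon,\delta)$-differential privacy is trivially monotone in both parameters, and since the criterion is \emph{necessary and sufficient}, the valid-scale sets are nested and their minima non-increasing. Your route is more elementary --- it avoids the calculus entirely and in particular sidesteps the fact that $t$ in (\ref{eq:tstar}) itself depends on $\epsilon$ --- but it leans essentially on the ``if and only if'' character of Lemma \ref{lem:neccsuff} (a merely sufficient criterion would not transfer), whereas the paper's argument stays inside the analytic machinery it has already built and additionally yields strict monotonicity when $t<as$. Your closing caveat about attainment of the minimum is slightly more than is needed: set inclusion already gives monotonicity of infima without attainment, and existence of ``the smallest $s$'' is presupposed by the lemma's statement, at the same level of rigor as the paper itself adopts.
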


Unless otherwise evident, let in the following
\(s(\epsilon, \delta, \Delta)\) be defined as Lemma \ref{lem:standards}.

\hypertarget{sec:selected}{%
\subsection{Selected analytic conditions}\label{sec:selected}}

The following closed form condition is a tightening of Proposition 1 in
\citet{Dwork2006a}.

\begin{theorem}\label{thm:laplace}

Let \(X\) be a standard Laplace variable, and let \(q\) be a real valued
function on databases with global sensitivity \(\Delta\). Then the
mechanism that returns a variate of \(q(d) + s X\) is
\((\epsilon,\delta)\)-differentially private if and only if
\(s \geq s_{1}(\epsilon, \delta, \Delta) := \frac{\Delta}{\epsilon - 2 \log\left(1-\delta \right)}\).

\end{theorem}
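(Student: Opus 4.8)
The plan is to specialize Lemma~\ref{lem:neccsuff} to the standard Laplace density $f(x) = \tfrac{1}{2}e^{-|x|}$, for which $\psi(x) = |x| + \log 2$, and the support is all of $\mathbb{R}$ so that $a = \infty$. The key simplification comes from evaluating $t$ defined in \eqref{eq:tstar}. For $z \geq \Delta \geq 0$ and $s > 0$, both arguments $z/s$ and $(z-\Delta)/s$ are nonnegative, so $\psi(z/s) - \psi((z-\Delta)/s) = \tfrac{z}{s} - \tfrac{z-\Delta}{s} = \tfrac{\Delta}{s}$, a constant independent of $z$. First I would argue that if $\tfrac{\Delta}{s} \leq \epsilon$, then the supremum over $\{z < \infty\}$ is unbounded, giving $t = \infty = as$, and Remark~\ref{rem:epsdiff} then tells us the mechanism is $(\epsilon,\delta)$-differentially private for every $\delta \geq 0$; this regime corresponds exactly to $s \geq \Delta/\epsilon$, which is subsumed by the claimed bound whenever $\delta \geq 0$, so it need not be analyzed separately.

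The substantive case is $\tfrac{\Delta}{s} > \epsilon$, i.e.\ $s < \Delta/\epsilon$. Here I would locate the supremum $t$ by examining the behavior of $g(z) := \psi(z/s) - \psi((z-\Delta)/s)$ across the regions determined by the sign of $z$ and $z - \Delta$. On $z \geq \Delta$ the difference equals $\Delta/s > \epsilon$, so the constraint $g(z) \leq \epsilon$ fails there. For $z < \Delta$ one or both absolute values flip sign, making $g$ piecewise linear and strictly increasing in $z$; the relevant piece to solve is where the constraint transitions from holding to failing. Setting $g(z) = \epsilon$ on the appropriate interval and solving for $z$ yields the explicit value of $t$ in terms of $\Delta$, $s$, and $\epsilon$. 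I expect $t$ to land in the region $0 \leq z < \Delta$, giving something like $g(z) = \tfrac{z}{s} + \tfrac{\Delta - z}{s} \cdots$ after the sign flip, which must be checked carefully.

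Once $t$ is pinned down, I would substitute into the criterion \eqref{eq:neccsuff}, using the closed form $F(x) = \tfrac{1}{2}e^{x}$ for $x \leq 0$ and $F(x) = 1 - \tfrac{1}{2}e^{-x}$ for $x \geq 0$. The inequality $F((\Delta - t)/s) - e^{\epsilon} F(-t/s) \leq \delta$ then becomes an inequality among exponentials in $\Delta/s$ and $\epsilon$. Algebraic manipulation---solving for the threshold where equality holds---should produce $e^{\Delta/s} = $ an expression in $\epsilon$ and $\delta$, and taking logarithms should yield $\Delta/s \leq \epsilon - 2\log(1-\delta)$, equivalently $s \geq \tfrac{\Delta}{\epsilon - 2\log(1-\delta)}$. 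The final ingredient is monotonicity: the last clause of Lemma~\ref{lem:neccsuff} guarantees that if the condition holds at scale $s$ it holds at all larger scales, so the set of valid scales is exactly a half-line $[s_1(\epsilon,\delta,\Delta), \infty)$, which justifies phrasing the result as the threshold inequality $s \geq s_1$.

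The main obstacle will be the bookkeeping in determining $t$ in the substantive case: one must correctly identify which piece of the piecewise-linear $g$ contains the crossing point $g(z) = \epsilon$, verify that the crossing indeed falls in the expected subinterval of $(0,\Delta)$ rather than in $(-\infty, 0)$, and confirm that $t$ so obtained is negative (so that $-t/s > 0$ and the second $F$ uses the $1 - \tfrac12 e^{-x}$ branch) or nonnegative as the algebra dictates. Getting the branch of $F$ right at $(\Delta - t)/s$ and at $-t/s$ is where a sign error would silently corrupt the final closed form, so I would double-check these evaluations before collapsing the exponentials. After that, the reduction to $\epsilon - 2\log(1-\delta)$ is a routine logarithm.
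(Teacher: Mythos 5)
Your proposal follows essentially the same route as the paper: specialize Lemma~\ref{lem:neccsuff} to the Laplace density, split on whether $\Delta \le \epsilon s$ (where $t=\infty$ and the bound is subsumed) versus $\Delta > \epsilon s$ (where the crossing point $t = \tfrac{\epsilon s}{2} + \tfrac{\Delta}{2}$ is computed explicitly and substituted into \eqref{eq:neccsuff}), then invoke the monotonicity clause to get the half-line of valid scales. The computational care points you flag (locating $t$ in $[0,\Delta)$ and choosing the correct branch of $F$) do work out as you expect, and the resulting algebra collapses to $s \ge \Delta/(\epsilon - 2\log(1-\delta))$ exactly as in the paper's proof.
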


\begin{remark}\label{rem:trlap}

An example of a mechanism that adds symmetric truncated log-concave
noise is the truncated Laplacian mechanism \citep{pmlr-v108-geng20a}
which Geng et al.~use to establish upper and lower error bounds for
\((\epsilon, \delta)\)-differential mechanisms. Given a standard Laplace
variable \(Y\), this mechanism is constructed by truncating the support
of \((\frac{\Delta}{\epsilon}) Y\) to \((-A,A)\). The way this is done
makes this mechanism meet the criterion (\ref{eq:neccsuff}) in Lemma
\ref{lem:neccsuff} with equality. \qedhere

\end{remark}

The following closed form condition is to our knowledge new.

\begin{theorem}\label{thm:logistic}

Let \(X\) be a standard Logistic variable having density
\(f_{\log}(x):=\frac{e^{-x}}{\left(1+e^{-x}\right)^{2}}\), and let \(q\)
be a real valued function on databases with global sensitivity
\(\Delta\). Then the mechanism that returns a variate of \(q(d) + s X\)
is \((\epsilon,\delta)\)-differentially private for \(\delta < 1\) if
and only if
\(s \geq s_{\log}(\epsilon, \delta, \Delta) := \frac{\Delta}{2 \log\left( \frac{e^{\frac{\epsilon}{2}} + \sqrt{\delta \left(e^{\epsilon}+\delta -1\right)}}{1-\delta}\right)}\).

\end{theorem}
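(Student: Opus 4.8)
The plan is to specialize Lemma \ref{lem:neccsuff} to the standard Logistic density and then carry out the resulting algebra explicitly. First I would record that $f_{\log}$ meets the hypotheses of the lemma: its potential $\psi(x) = -\log f_{\log}(x) = x + 2\log(1+e^{-x})$ is even and convex (indeed $\psi''(x) = 2 f_{\log}(x) > 0$), and the support is all of $\mathbb{R}$, so $a = \infty$. The relevant CDF is $F(x) = 1/(1+e^{-x})$. Writing $c := \Delta/s$, the defining inequality for $t$ in \eqref{eq:tstar} becomes $h(u) \le \epsilon$ with $u = z/s$, where $h(u) := \psi(u) - \psi(u-c)$.

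Next I would determine $t$. Since $\psi'$ is non-decreasing and $c > 0$, the function $h$ is non-decreasing, increasing from $-c$ to $c$ as $u$ ranges over $\mathbb{R}$. Hence for $\epsilon < c$ there is a unique $u^\ast$ with $h(u^\ast) = \epsilon$ and $t = s\,u^\ast$; a short computation turns $h(u) = \epsilon$ into $\frac{1+e^{u}}{1+e^{u-c}} = e^{(\epsilon+c)/2}$. (The complementary regime $\epsilon \ge c$, i.e.\ $s \le \Delta/\epsilon$, gives $t = \infty$ and, by Remark \ref{rem:epsdiff}, unconditional $(\epsilon,\delta)$-privacy; this will match the $\delta \to 0$ boundary of the claimed formula.)

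I would then impose \eqref{eq:neccsuff} with equality at the minimal scale. Writing $A := e^{u^\ast} = e^{t/s}$, the left-hand side of \eqref{eq:neccsuff} evaluates to $\frac{1}{1+Ae^{-c}} - \frac{e^\epsilon}{1+A}$. The key simplification is to substitute the $t$-equation in the form $\frac{1}{1+Ae^{-c}} = \frac{e^{(\epsilon+c)/2}}{1+A}$, which collapses the expression to $\frac{e^{\epsilon/2}(e^{c/2}-e^{\epsilon/2})}{1+A} = \delta$, so that $1+A = e^{\epsilon/2}\bigl(e^{c/2}-e^{\epsilon/2}\bigr)/\delta$. Equating this with the value of $1+A$ obtained by solving the $t$-equation directly for $A$ and clearing denominators yields, after setting $m := e^{c/2}$, the quadratic $m^2(1-\delta) - 2e^{\epsilon/2} m + (e^\epsilon + \delta) = 0$. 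Its discriminant equals $4\delta(e^\epsilon + \delta - 1)$, which is nonnegative since $\epsilon \ge 0$; taking the larger root gives $m = \bigl(e^{\epsilon/2} + \sqrt{\delta(e^\epsilon+\delta-1)}\bigr)/(1-\delta)$, and inverting $c = 2\log m$, $s = \Delta/c$ delivers exactly $s_{\log}(\epsilon,\delta,\Delta)$.

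Finally, the ``if and only if $s \ge s_{\log}$'' form follows from the monotonicity clause of Lemma \ref{lem:neccsuff}: once \eqref{eq:neccsuff} holds at the threshold scale it holds for all larger scales. I expect the main obstacle to be purely algebraic, namely verifying that the two expressions for $1+A$ genuinely cancel to a clean quadratic in $m$ and selecting the correct root. Along the way I must check the edge conditions that make the formula well posed: the hypothesis $\delta < 1$ is needed for $1-\delta > 0$, one confirms that the argument of the logarithm exceeds $1$ (so $s$ is finite and positive) whenever $\epsilon,\delta$ are not both zero, and at $\delta = 0$ the formula reduces to $s = \Delta/\epsilon$, consistent with the pure-$\epsilon$ boundary noted above.
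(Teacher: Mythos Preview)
Your proposal is correct and follows essentially the same route as the paper: verify that $\psi_{\log}$ is even and convex, solve the threshold equation for $t$, substitute into \eqref{eq:neccsuff}, and solve the resulting equation for $s$, with your change of variable $m=e^{c/2}$ yielding a tidier quadratic than the paper's more direct manipulation. One slip to fix: in your parenthetical, $\epsilon \ge c = \Delta/s$ is equivalent to $s \ge \Delta/\epsilon$, not $s \le \Delta/\epsilon$.
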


\begin{remark}\label{rem:epsdp1l}

We have that
\(s_{1}(\epsilon, 0, \Delta) = s_{\log}(\epsilon, 0, \Delta) = \frac{\Delta}{\epsilon}\),
\(s_{1}(0, \delta, \Delta) = \frac{\Delta}{2 \log\left(\frac{1}{1-\delta} \right)}\),
and
\(s_{\log}(0, \delta, \Delta) = \frac{\Delta}{2 \log\left( \frac{1 + \delta}{1-\delta}\right)}\).
The Logistic mechanism serves as a second example besides the Laplace
mechanism that can achieve \(\epsilon\)-differential privacy. Both can
also achieve \((0, \delta)\)-differential privacy. Furthermore, letting
\(\epsilon_{1}(\epsilon, \delta) := \Delta / s_{1}(\epsilon, \delta, \Delta) = \epsilon - 2 \log\left(1-\delta \right)\),
and
\(\epsilon_{\log}(\epsilon, \delta) := \Delta / s_{\log}(\epsilon, \delta,\Delta) = 2 \log\left( \frac{e^{\frac{\epsilon}{2}} + \sqrt{\delta \left(e^{\epsilon}+\delta -1\right)}}{1-\delta}\right)\),
we get that for given \(\epsilon\) and \(\delta\), the Laplace and
Logistic mechanisms are \((\epsilon_{1}(\epsilon, \delta), 0)\)- and
\((\epsilon_{\log}(\epsilon, \delta), 0)\)-differentially private,
respectively. The difference between \(\epsilon_{1}(\epsilon, \delta)\)
and \(\epsilon_{\log}(\epsilon, \delta)\) can be interpreted as a
mechanisms sensitivity to \(\delta\). In practice, this translates to
the change in scale as \(\delta\) changes. \qedhere

\end{remark}

The following condition appears in Theorem 8 in
\citet{pmlr-v80-balle18a}.

\begin{theorem}\label{thm:balle}

Let \(Z\) be a standard Gaussian variable, and let \(q\) be a real
valued function on databases with global sensitivity \(\Delta\). Then
the mechanism that returns a variate of \(q(d) + \sigma Z\) is
\((\epsilon,\delta)\)-differentially private if and only if
\begin{align}
\label{eq:gaussian}
\Phi \left(\frac{\Delta}{2 \sigma}-\frac{\epsilon  \sigma}{\Delta}\right)-e^{\epsilon} \Phi \left(-\frac{\Delta}{2 \sigma}-\frac{\epsilon  \sigma}{\Delta}\right)\le \delta.
\end{align}

\end{theorem}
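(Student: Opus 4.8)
The plan is to obtain this as the Gaussian instance of Lemma \ref{lem:neccsuff}, recovering the criterion of \citet{pmlr-v80-balle18a} from our general framework. The standard Gaussian density is even and log-concave with full support, so $a = \infty$ and the lemma applies with scale $s = \sigma$; its negative logarithm is $\psi(x) = \tfrac{x^2}{2} + \tfrac12\log(2\pi)$ and its CDF is $F = \Phi$. It remains only to evaluate \eqref{eq:tstar} and substitute into \eqref{eq:neccsuff}, verifying that the result is exactly \eqref{eq:gaussian}.

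First I would compute $t$ from \eqref{eq:tstar}. Writing $g(z) := \psi(z/\sigma) - \psi((z-\Delta)/\sigma)$, the additive constant and the quadratic terms in $z$ cancel, leaving the affine function $g(z) = \frac{\Delta z}{\sigma^2} - \frac{\Delta^2}{2\sigma^2}$. This is the simplification specific to the Gaussian: because $\psi$ is quadratic, $g$ is affine and strictly increasing with range all of $\mathbb{R}$, so (unlike densities with bounded score, such as the Logistic) there is no case split, and the supremum in \eqref{eq:tstar} is the unique finite solution of $g(t) = \epsilon$, namely $t = \frac{\epsilon\sigma^2}{\Delta} + \frac{\Delta}{2}$.

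Next I substitute this $t$ into \eqref{eq:neccsuff}. A short computation gives $\frac{\Delta - t}{\sigma} = \frac{\Delta}{2\sigma} - \frac{\epsilon\sigma}{\Delta}$ and $-\frac{t}{\sigma} = -\frac{\Delta}{2\sigma} - \frac{\epsilon\sigma}{\Delta}$, so that \eqref{eq:neccsuff} reads $\Phi\!\left(\frac{\Delta}{2\sigma} - \frac{\epsilon\sigma}{\Delta}\right) - e^{\epsilon}\,\Phi\!\left(-\frac{\Delta}{2\sigma} - \frac{\epsilon\sigma}{\Delta}\right) \le \delta$, which is precisely \eqref{eq:gaussian}. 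Since Lemma \ref{lem:neccsuff} already states the equivalence between $(\epsilon,\delta)$-differential privacy and \eqref{eq:neccsuff}, this completes the argument.

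I do not expect a genuine obstacle here; the work is routine bookkeeping once Lemma \ref{lem:neccsuff} is in hand. The one point to handle with care is confirming that the supremum in \eqref{eq:tstar} is attained at a finite $t$ rather than degenerating into the $t = as$ case of Remark \ref{rem:epsdiff}; for the Gaussian this is automatic because the affine $g$ maps $\mathbb{R}$ onto $\mathbb{R}$. The remaining care is in tracking signs so that the two arguments of $\Phi$ come out as $\pm\frac{\Delta}{2\sigma} - \frac{\epsilon\sigma}{\Delta}$, symmetric about $-\frac{\epsilon\sigma}{\Delta}$, matching \eqref{eq:gaussian}.
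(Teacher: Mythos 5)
Your proposal is correct and follows essentially the same route as the paper: the paper's proof likewise notes that $\psi_{2}(z/\sigma)-\psi_{2}((z-\Delta)/\sigma)$ is affine in $z$, solves for the unique threshold $z^{*}=\frac{2\epsilon\sigma^{2}+\Delta^{2}}{2\Delta}$ (your $t$), and invokes Lemma \ref{lem:neccsuff}. Your write-up merely spells out the substitution into \eqref{eq:neccsuff} more explicitly, which the paper leaves implicit.
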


\hypertarget{sec:utility}{%
\subsection{\texorpdfstring{A benefit of achieving
\(\epsilon\)-differential
privacy}{A benefit of achieving \textbackslash epsilon-differential privacy}}\label{sec:utility}}

The well known Laplace and Gaussian mechanisms are both supported on
\(\ensuremath{\mathbb{R}}\), i.e., the noise densities are are supported
everywhere, which for us means that
\(\psi : \ensuremath{\mathbb{R}}\to \ensuremath{\mathbb{R}}\) as the
only time \(e^{-\psi} = 0\) is if \(\psi=\infty\). As we will see,
whether a symmetric log-concave mechanism supported everywhere can
achieve \(\epsilon\)-differential privacy or not has consequences for
its utility.

\begin{remark}\label{rem:epsdp}

In this work, we rely heavily on that for log concave densities \(f\),
the likelihood ratio \(f((z - \Delta)/s)/f(z/s)\) is monotone and
non-decreasing in \(z\). A mechanism that returns a variate
\(q(d) + sX\) where \(X\sim f\) can achieve
\((\epsilon, 0)\)-differential privacy when the likelihood ratio above
is bounded from above as \(z\) goes to infinity. The usual proof of
\(\epsilon\)-differential privacy for the Laplace mechanism essentially
shows that for \(s = \frac{\Delta}{\epsilon}\), this bound is
\(e^{\epsilon}\). \qedhere

\end{remark}

We now distinguish between those that are bounded and not bounded.

\begin{definition}[MLR-boundedness]\label{def:mlrunbounded}

Let \(\psi : \ensuremath{\mathbb{R}}\to \ensuremath{\mathbb{R}}\) be
convex and even, and let \(f(x)=e^{-\psi(x)}\). If for all
\(\Delta > 0\) and \(s > 0\) \[
\lim_{z\to\infty}\log
\left(
\frac{f(\frac{z - \Delta}{s})}{f(\frac{z}{s})}
\right) 
= 
\lim_{z\to\infty}\psi \left( \frac{z}{s} \right) - \psi \left(
\frac{z - \Delta}{s} \right) = \infty,
\] we say that \(\psi\) and \(f\) are monotone likelihood
ratio-unbounded (MLR-unbounded). If \(f\) is not MLR-unbounded it is
MLR-bounded.

\end{definition}

We can show the following.

\begin{theorem}\label{thm:unboundeds}

Let \(X\) be a random variable distributed according to density
\(f(x) = e^{-\psi(x)}\) where
\(\psi : \ensuremath{\mathbb{R}}\to \ensuremath{\mathbb{R}}\) is convex,
even, and MLR-unbounded. Then, \(s\to \infty\) as \(\delta \to 0\) for
the \((\epsilon, \delta)\)-differentially private mechanism that returns
a variate of \(q(d) + s X\).

\end{theorem}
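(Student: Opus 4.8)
The plan is a proof by contradiction that pairs the monotonicity of the minimal scale in $\delta$ with the fact that the left-hand side of the privacy criterion (\ref{eq:neccsuff}) does not depend on $\delta$. First I would invoke Lemma \ref{lem:standards} to record that the minimal admissible scale $s(\epsilon, \delta, \Delta)$ is non-increasing in $\delta$, so that as $\delta \downarrow 0$ it increases monotonically to a well-defined limit $s_\infty := \sup_{\delta > 0} s(\epsilon, \delta, \Delta) \in (0, \infty]$. The assertion of the theorem is precisely $s_\infty = \infty$, so I would suppose for contradiction that $s_\infty < \infty$.

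The second step is to see that this finite $s_\infty$ would already be $(\epsilon, 0)$-differentially private. Since $s(\epsilon, \delta, \Delta) \le s_\infty$ for every $\delta > 0$, the persistence-in-scale clause of Lemma \ref{lem:neccsuff} shows that the fixed scale $s_\infty$ satisfies (\ref{eq:neccsuff}) for all $\delta > 0$. But in (\ref{eq:neccsuff}) the threshold $t$ from (\ref{eq:tstar}) is a function of $\epsilon$, $s$ and $\Delta$ only, so the entire left-hand side $F((\Delta - t)/s_\infty) - e^\epsilon F(-t/s_\infty)$ is independent of $\delta$; holding $\le \delta$ for every $\delta > 0$ therefore forces it to be $\le 0$, which is exactly criterion (\ref{eq:neccsuff}) at $\delta = 0$. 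By the iff of Lemma \ref{lem:neccsuff}, the scale $s_\infty$ is $(\epsilon, 0)$-differentially private.

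The final step is to contradict this using MLR-unboundedness. Achieving $(\epsilon,0)$-privacy would require the worst-case likelihood ratio, at the sensitivity shift $\Delta$, namely $f((y - \Delta)/s_\infty)/f(y/s_\infty)$, to be bounded by $e^\epsilon$ (Remark \ref{rem:epsdp}); its logarithm is $\psi(y/s_\infty) - \psi((y - \Delta)/s_\infty)$. Substituting $y = -z$ and using that $\psi$ is even, along the left tail this equals $\psi((z + \Delta)/s_\infty) - \psi(z/s_\infty)$, which the MLR-unbounded hypothesis drives to $+\infty$ as $z \to \infty$. Hence the likelihood ratio is unbounded, so there is a set of positive probability on which $f(y/s_\infty) > e^\epsilon f((y - \Delta)/s_\infty)$; integrating over it violates the pure privacy inequality. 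This shows $s_\infty$ cannot be $(\epsilon, 0)$-private, contradicting the previous step, and therefore $s_\infty = \infty$.

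I expect the only delicate point to be this last step: MLR-unboundedness is hypothesized along $z \to +\infty$, whereas the privacy violation is produced by outputs in the opposite tail, so the evenness of $\psi$ is essential to transport the blow-up of the likelihood ratio to where it is needed. The rest---the monotone limit, the $\delta$-independence of the criterion's left-hand side, and the passage to $\delta = 0$---is routine once those observations are in place.
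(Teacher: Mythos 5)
Your proof is correct, but it takes a somewhat different route from the paper's. The paper argues directly: MLR-unboundedness forces the threshold $t$ in (\ref{eq:tstar}) to be finite for every finite $s$, so by Lemma \ref{lem:monotonicity} (non-negativity, strict decrease in $s$, and vanishing iff $t=\infty$) the left-hand side of (\ref{eq:neccsuff}) is strictly positive at every finite scale and, via Lemma \ref{lem:lmono}, tends to $0$ only as $s\to\infty$; divergence of the minimal scale as $\delta\to 0$ is then immediate. You instead pass to the monotone limit $s_\infty$, use the $\delta$-independence of the left-hand side of (\ref{eq:neccsuff}) to conclude that a finite $s_\infty$ would be $(\epsilon,0)$-differentially private, and refute that with a first-principles density-ratio violation rather than by citing Lemma \ref{lem:monotonicity}(e). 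Your route avoids Lemma \ref{lem:lmono} and the limit computation entirely (at the cost of implicitly assuming each $s(\epsilon,\delta,\Delta)$ is finite, which is harmless since otherwise the conclusion is trivial). One slip in your last step: the logarithm of $f((y-\Delta)/s)/f(y/s)$ at $y=-z$ is $\psi(z/s)-\psi((z+\Delta)/s)$, which tends to $-\infty$, not $+\infty$; what blows up in the left tail is the \emph{reciprocal} ratio $f(y/s)/f((y-\Delta)/s)$, which is the one constrained by the privacy inequality for the swapped pair $(d',d)$ --- and that is indeed the inequality your positive-measure set violates, so the conclusion stands. The tail transport is in fact unnecessary: MLR-unboundedness already makes $f((y-\Delta)/s)/f(y/s)$ exceed $e^{\epsilon}$ on a right half-line (the set $A^{c}$ of Lemma \ref{lem:equiv}), so the violation is available in the right tail without invoking evenness.
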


\begin{remark}

Let \(M_{\delta=0}\) and \(M_{\delta>0}\) be symmetric log-concave
mechanisms with noise that is supported everywhere such that
\(M_{\delta=0}\) can achieve \(\epsilon\)-differential privacy and
\(M_{\delta>0}\) can only achieve \((\epsilon, \delta)\)-differential
privacy for \(\delta>0\). If we fix \(\epsilon\), then what Theorem
\ref{thm:unboundeds} means is that for any utility that is unbounded and
strictly decreasing in the scale of the added noise, we can choose a
small enough \(\delta\) such that the utility of \(M_{\delta=0}\) is
better by an arbitrary amount. In this sense, Theorem
\ref{thm:unboundeds} is a utility separation theorem for log-concave
mechanisms. \qedhere

\end{remark}

Specifically, for the Logistic and log-concave Subbotin densities,
\begin{align*}
\lim_{z \to \infty}\psi \left( \frac{z}{s}  \right) - \psi \left(
\frac{z - \Delta}{s} \right)
&=
\begin{cases}
\frac{\Delta}{s}, & \psi \in \{\psi_{\log},\psi_{1}\}, \\
\infty, & \psi \in \{\psi_{r} \mid r > 1\}
\end{cases} 
\end{align*} for \(s > 0\). This means that for fixed \(\epsilon\) we
can always choose \(\delta\) small enough such that the Laplace and
Logistic mechanisms have arbitrary smaller variances than mechanisms
that add \ensuremath{\text{Subbotin}_{r}} noise for \(r>1\), which
includes the Gaussian at \(r=2\).

\hypertarget{sec:comparison}{%
\subsection{Mechanism variance ratios}\label{sec:comparison}}

\figoneaistats

We now empirically investigate the last paragraph in the previous
Section. We choose the Laplace and Logistic mechanisms as
representatives of MLR-bounded mechanisms and the popular Gaussian
mechanism as the representative of the MLR-unbounded mechanisms that add
\ensuremath{\text{Subbotin}_{r}} noise.

Let \(X_{i} \sim f_{i}\) be a random variable distributed according to
symmetric log-concave density \(f_{i}\), and let
\(Y_{i} = q(d) + s_{i}X_{i}\) for
\(s_{i} = s_{i}(\epsilon, \delta, \Delta)\). For two mechanisms
\(Y_{a}\) and \(Y_{b}\), let
\(v_{a,b} := \frac{\operatorname{Var}(Y_{a})}{\operatorname{Var}(Y_{b})} = \left(\frac{s_{a}}{s_{b}} \right)^{2} \frac{\operatorname{Var}(X_{a})}{\operatorname{Var}(X_{b})}\).
Now, \(v_{a,b} \leq 1\) if and only if \begin{align}
\label{eq:rhocrit}
\rho_{a,b} := \frac{s_{a}}{s_{b}} \leq \sqrt{\frac{\operatorname{Var}(X_{b})}{\operatorname{Var}(X_{a})}}.
\end{align} From Lemma \ref{lem:standards} we get that for any
\(\Delta > 0\) \[
\rho_{a,b}(\epsilon, \delta) = \frac{s_{a}(\epsilon,
\delta, \Delta)}{s_{b}(\epsilon,
\delta, \Delta)} = \frac{s_{a}(\epsilon,
\delta, 1)}{s_{b}(\epsilon,
\delta, 1)}
\] for \(a, b \in \{\log\} \cup \{r \mid r \geq 1\}\). From condition
(\ref{eq:rhocrit}), we note that the curve given by
\(\rho_{a,b}(\epsilon, \delta) =\sqrt{\frac{\operatorname{Var}(X_{b})}{\operatorname{Var}(X_{a})}}\),
acts as a separator on the \((\epsilon, \delta)\)-space where one or the
other mechanism exhibit smaller variance. We will now compare the
selected mechanisms pairwise.

\hypertarget{sec:logvslap}{%
\subsubsection{Logistic vs.~Laplace mechanisms}\label{sec:logvslap}}

\figoneother

For completeness, we first compare the MLR-bounded mechanisms.

From Theorems \ref{thm:laplace} and \ref{thm:logistic} we can state
\(\rho_{1,\log}(\epsilon, \delta)\) in closed form. We start by
exploiting this to prove some specifics about
\(\rho_{1,\log}(\epsilon, \delta)\).

\begin{theorem}\label{thm:s1sl}

Let \(\Delta > 0\), \(\epsilon \geq 0\), \(1 > \delta \geq 0\), and
\begin{align*}
\rho_{1,\log}(\epsilon, \delta) := \frac{s_{1}(\epsilon, \delta, \Delta)}{s_{\log}(\epsilon, \delta,\Delta)}.
\end{align*} Then, \begin{align*}
\rho_{1,\log}&(\epsilon, \delta) \\&= \frac{2 \log\left(\sqrt{\delta} \sqrt{e^{\epsilon}+\delta -1}+e^{\frac{\epsilon}{2}}\right)-2 \log\left(1-\delta \right)}{\epsilon -2 \log\left(1-\delta \right)},
\end{align*} and \(2 > \rho_{1,\log}(\epsilon, \delta) \geq 1\) is
sharp.

\end{theorem}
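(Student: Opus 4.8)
The plan is to first establish the closed form and then to bound it. For the formula I would substitute $s_{1}(\epsilon,\delta,\Delta)=\frac{\Delta}{\epsilon-2\log(1-\delta)}$ and the expression for $s_{\log}(\epsilon,\delta,\Delta)$ from Theorems~\ref{thm:laplace} and~\ref{thm:logistic} into the ratio $s_{1}/s_{\log}$. The factor $\Delta$ cancels (consistently with Lemma~\ref{lem:standards}), leaving $\rho_{1,\log}$ as the quotient of the two logarithmic denominators. Splitting $\log\frac{A}{1-\delta}=\log A-\log(1-\delta)$ in the numerator and writing $\sqrt{\delta(e^\epsilon+\delta-1)}=\sqrt{\delta}\,\sqrt{e^\epsilon+\delta-1}$ (both factors are nonnegative since $\epsilon\ge 0$ forces $e^\epsilon+\delta-1\ge\delta\ge 0$) yields the stated expression. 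Here I would note that $\rho_{1,\log}$ is well defined exactly when $\epsilon-2\log(1-\delta)>0$, i.e. for every admissible $(\epsilon,\delta)$ except $(0,0)$, where both scales are infinite.

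For the bounds, the key simplification is to write $\rho_{1,\log}=\frac{\log P}{\log Q}$ with $P:=\frac{e^{\epsilon/2}+\sqrt{\delta(e^\epsilon+\delta-1)}}{1-\delta}$ and $Q:=\frac{e^{\epsilon/2}}{1-\delta}$, using $\epsilon-2\log(1-\delta)=2\log Q$. Then $Q\ge 1$ with $\log Q>0$ away from $(0,0)$, and $P\ge Q$ because the extra summand in the numerator is nonnegative. Hence $\rho_{1,\log}\ge 1$, with equality iff $P=Q$, i.e. iff $\delta=0$. This proves the lower bound and shows it is attained (on the ray $\delta=0,\ \epsilon>0$, matching $s_{1}(\epsilon,0,\Delta)=s_{\log}(\epsilon,0,\Delta)$ from Remark~\ref{rem:epsdp1l}), so the bound $1$ is sharp.

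The upper bound is the substantive part. It is equivalent to $\log P<2\log Q$, i.e. $P<Q^2$, which after clearing $(1-\delta)$ reads $e^{\epsilon/2}+\sqrt{\delta(e^\epsilon+\delta-1)}<\frac{e^\epsilon}{1-\delta}$. Setting $A:=e^{\epsilon/2}\ge 1$ and $c:=1-\delta$, and noting that $\delta=0$ already gives $\rho_{1,\log}=1<2$, I may assume $\delta>0$, i.e. $c\in(0,1)$. Then the right-hand side minus $A$ equals $\frac{A(A-c)}{c}>0$, so I can isolate the square root and square, reducing the claim to the polynomial inequality $h(A):=A^4-2cA^3+c^3A^2+c^3(1-c)>0$ on $A\ge 1$. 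I expect this to be the main obstacle, and I would handle it by monotonicity: $h'(A)=2A\,(2A^2-3cA+c^3)$, and the quadratic factor has vertex at $A=\tfrac{3c}{4}<1$ and value $2-3c+c^3>0$ at $A=1$ (the latter decreasing in $c$ from $2$ to $0$ on $[0,1]$), so $h'>0$ on $[1,\infty)$. Since $h(1)=(1-c)^3(1+c)>0$, we get $h(A)\ge h(1)>0$, establishing $\rho_{1,\log}<2$.

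Finally, for sharpness of the upper bound I would evaluate the closed form along $\epsilon=0$, where it collapses to $\rho_{1,\log}(0,\delta)=1-\frac{\log(1+\delta)}{\log(1-\delta)}$; letting $\delta\to 0^{+}$ gives $\rho_{1,\log}\to 2$, so $2$ is the least upper bound even though it is never attained. The only delicate points are the two strict-versus-nonstrict distinctions (equality in the lower bound precisely on $\delta=0$, and the strictness of the upper bound coming from $h(1)>0$ together with $h'>0$), plus the bookkeeping that excludes the single degenerate point $(\epsilon,\delta)=(0,0)$.
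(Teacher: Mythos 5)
Your proposal is correct, and for the substantive part --- the bounds --- it takes a genuinely different and more elementary route than the paper. The closed form and the lower bound are obtained essentially as in the paper: the paper proves $\rho_{1,\log}\geq 1$ by showing $2\log\bigl(\sqrt{\delta}\sqrt{e^{\epsilon}+\delta-1}+e^{\epsilon/2}\bigr)\geq\epsilon$, which is exactly your observation $P\geq Q$ in the packaging $\rho_{1,\log}=\log P/\log Q$. For the upper bound, however, the paper invokes the monotone form of l'H\^opital's rule (Theorems 4 and 5 of Anderson et al.): it verifies sign conditions on $f'/g'$ and on $H(x)=g(x)^{2}(f/g)'(x)/|g'(x)|$ to conclude that $\rho_{1,\log}$ is decreasing in $\epsilon$, then shows $\rho_{1,\log}(0,\delta)$ is decreasing in $\delta$ with limit $2$ as $\delta\to 0$. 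You instead prove the pointwise inequality $\rho_{1,\log}<2$ directly by reducing $P<Q^{2}$ to the quartic inequality $h(A)=A^{4}-2cA^{3}+c^{3}A^{2}+c^{3}(1-c)>0$ on $A\geq 1$, which follows from $h(1)=(1-c)^{3}(1+c)>0$ and $h'(A)=2A(2A^{2}-3cA+c^{3})>0$ (the quadratic factor has vertex at $3c/4<1$ and value $(1-c)^{2}(c+2)>0$ at $A=1$); I checked these identities and they are correct. Your argument is self-contained and avoids the external monotonicity machinery, at the cost of not yielding the monotonicity of $\rho_{1,\log}$ in $\epsilon$ and $\delta$, which the paper's route provides as a by-product; both establish sharpness of the supremum $2$ via the same limit $\epsilon=0$, $\delta\to 0^{+}$, and your explicit handling of the degenerate point $(\epsilon,\delta)=(0,0)$ and of where equality $\rho_{1,\log}=1$ occurs is a small bonus the paper leaves implicit.
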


An immediate consequence is the following.

\begin{corollary}\label{cor:v1vl}

Let \(X_{1}\) and \(X_{\log}\) be random variables distributed according
to \(f_{1}\) and \(f_{\log}\), respectively. Then, \[
2.44 > \frac{24}{\pi^{2}} > \frac{\operatorname{Var}(s_{1}(\epsilon, \delta,\Delta)X_{1})}{\operatorname{Var}(s_{\log}(\epsilon,
\delta,\Delta)X_{\log})} 
\geq \frac{6}{\pi^{2}} > 0.6,
\] and \[
\frac{\operatorname{Var}(s_{1}(\epsilon, \delta,\Delta)X_{1})}{\operatorname{Var}(s_{\log}(\epsilon,
\delta,\Delta)X_{\log})} \leq 1
\] when
\(\rho_{1,\log}(\epsilon, \delta) \leq \sqrt{\frac{\pi^{2}}{6}}\).

\end{corollary}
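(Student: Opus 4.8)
The plan is to reduce the corollary to the sharp bounds $1 \le \rho_{1,\log}(\epsilon,\delta) < 2$ already established in Theorem \ref{thm:s1sl}; the only new ingredient is the numerical value of the two variances. Using the Subbotin variance formula $\var = r^{2/r}\Gamma(3/r)/\Gamma(1/r)$ from the Preliminaries at $r = 1$ gives $\var(X_1) = \Gamma(3)/\Gamma(1) = 2$. For the standard Logistic density $f_{\log}$ the variance is $\var(X_{\log}) = \pi^2/3$, a standard fact that can be obtained directly from $\int_{-\infty}^{\infty} x^2 f_{\log}(x)\,dx$. Hence $\var(X_1)/\var(X_{\log}) = 6/\pi^2$.

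I would then invoke the identity $v_{a,b} = (s_a/s_b)^2\,\var(X_a)/\var(X_b)$ recorded at the start of Section \ref{sec:comparison} to write
\[
\frac{\var(s_1 X_1)}{\var(s_{\log} X_{\log})} = \rho_{1,\log}(\epsilon,\delta)^2 \cdot \frac{6}{\pi^2}.
\]
Substituting the range $\rho_{1,\log} \in [1,2)$ from Theorem \ref{thm:s1sl} places this ratio in $[6/\pi^2,\, 24/\pi^2)$, with both endpoints sharp by the sharpness already asserted there. The stated chain of inequalities then follows from the numerical estimates $6/\pi^2 \approx 0.608 > 0.6$ and $24/\pi^2 \approx 2.432 < 2.44$.

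For the final claim I would solve $\rho_{1,\log}^2 \cdot 6/\pi^2 \le 1$ directly; since $\rho_{1,\log} \ge 1 > 0$, taking positive square roots shows this is equivalent to $\rho_{1,\log}(\epsilon,\delta) \le \sqrt{\pi^2/6}$, exactly the condition in the statement. The whole argument is essentially a one-line substitution into Theorem \ref{thm:s1sl}, so the only place demanding any care is the variance of the standard Logistic; since $\pi^2/3$ is not among the variances tabulated in the Preliminaries, I would either cite it or verify it from the density to keep the proof self-contained.
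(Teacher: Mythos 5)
Your proposal is correct and matches the paper's proof essentially verbatim: both use $\operatorname{Var}(X_{1})=2$, $\operatorname{Var}(X_{\log})=\pi^{2}/3$, the identity $v_{1,\log}=\rho_{1,\log}^{2}\operatorname{Var}(X_{1})/\operatorname{Var}(X_{\log})$, and the sharp bounds $1\leq\rho_{1,\log}<2$ from Theorem \ref{thm:s1sl}. No differences worth noting.
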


\begin{remark}\label{rem:compepspdl1.}

Combining Theorem \ref{thm:s1sl} with Remark \ref{rem:epsdp1l}, we see
that
\(\epsilon_{1}(\epsilon, \delta) \leq \epsilon_{\log}(\epsilon, \delta)\).
\qedhere

\end{remark}

A plot of \(v_{1, \log}(\epsilon, \delta)\) can be seen in Figure
\ref{fig:rat01}. The plot shows the contour line where
\(v_{1, \log}(\epsilon, \delta) = 1\). In the \(\epsilon\) and
\(\delta\) region where \(v_{1, \log}(\epsilon, \delta) < 1\), the
Laplace mechanism exhibits smaller variance than the Logistic mechanism.
From Corollary \ref{cor:v1vl}, we have that this region is determined by
\(\rho(\epsilon, \delta) < \sqrt{\frac{\pi^{2}}{6}}\). As an example,
for \(\delta \leq 10^{-4}\), the Laplace mechanism variance is smaller
as long as \(\epsilon \geq 0.0047\).

\hypertarget{sec:laploggauss}{%
\subsubsection{Laplace and Logistic vs.~Gaussian
mechanisms}\label{sec:laploggauss}}

When comparing Laplace and Logistic mechanisms above, we were able to
take advantage of \(s_{1}\) and \(s_{\log}\) having closed forms, and
consequently also \(\rho_{1,\log}\) having a closed form. Unfortunately,
we only have access to values for the Gaussian scale \(s_{2}\) through
numeric computations. The plots of \(v_{1,2}(\epsilon, \delta)\) and
\(v_{\log,2}(\epsilon, \delta)\) with the respective unit contours are
given in Figure \ref{fig:rat02} and Figure \ref{fig:rat12},
respectively. The regions where the Laplace and Logistic exhibit lower
variance are given by
\(\rho_{1,2}(\epsilon, \delta) < \sqrt{\frac{1}{2}}\) and
\(\rho_{\log,2}(\epsilon, \delta) < \frac{\sqrt{3}}{\pi}\),
respectively. In the plots, these are the regions where the respective
surfaces have values smaller than 1.

Numeric computations yield that for \(\epsilon \geq 0.05\), the Laplace
mechanism yields smaller mean squared error than both the Logistic and
the Gaussian mechanisms as long as \(\delta \leq 0.001\), and the
Logistic mechanism yields smaller squared error than the Gaussian
mechanism as long as \(\delta \leq 0.002\).

The variance is unbounded and strictly increasing in the scale for the
Logistic distribution as well as all
\ensuremath{\text{Subbotin}_{r\geq 1}} distributions. From Section
\ref{sec:utility} we have that since both the Laplace and Logistic noise
is MLR-bounded, while the Gaussian is MLR-unbounded, the ratios
\(v_{1,2}\) and \(v_{\log, 2}\) go to 0 as \(\delta \to 0\) for any
\(\epsilon\). Furthermore, for all symmetric densities, we have that the
mechanism that add noise according to these exhibit a mean squared error
(MSE) that is equivalent to the variance. Consequently, the variance
comparisons above hold for the MSE as well. This is particularly
relevant for histogram queries, since each individual histogram count
can be computed using a one-dimensional mechanism.

\hypertarget{sec:extensions}{%
\section{THE MULTIDIMENSIONAL CASE}\label{sec:extensions}}

So far, we discussed \((\epsilon, \delta)\)-differential privacy for
real-valued database queries \(q\), i.e.,
\(q(d)\in \ensuremath{\mathbb{R}}\) for database \(d\). In order to
generalize our results regarding one-dimensional mechanisms to queries
in \(\ensuremath{\mathbb{R}}^{n}\), we generalize a symmetric
log-concave density \(f\) to a \(\|\cdot\|\)-spherically symmetric
log-concave density \(C f(\|x\|)\) for a normalization constant
\(C \in \ensuremath{\mathbb{R}}\). For this generalization, the
isodensity surfaces are spherical with respect to the norm
\(\|\cdot\|\). If \(n=1\), this density coincides with \(f\) and we will
simply write \(f\) also for \(n>1\).

Furthermore, in our above discussion of adding real-valued noise, a
critical parameter was the global sensitivity \(\Delta\). This maximal
change was defined in terms of the absolute value \(|\cdot|\). In the
multidimensional case we generalize this as \begin{align*}
\Delta := \max_{(d,d') \in \ensuremath{\mathcal{N}}} \|q(d) - q(d')\|
\end{align*} where \(\|\cdot\|\) is a norm in
\(\ensuremath{\mathbb{R}}^{n}\).

Our second main result comes in two parts. The first of these, Lemma
\ref{lem:iso} below, essentially extends Lemma \ref{lem:neccsuff} to the
case of \(\|\cdot\|\)-spherically symmetric log-concave densities when
the global sensitivity is defined using the same norm \(\|\cdot\|\). The
second part, Theorem \ref{thm:gensub} below, in turn specializes Lemma
\ref{lem:iso} to a case where the noise can be represented by a vector
of iid random variables.

\begin{lemma}\label{lem:iso}

Let \(\ensuremath{\mathbf{X}} \in \ensuremath{\mathbb{R}}^{n}\) for
\(n\geq 1\) be distributed according to \(\|\cdot\|\)-spherically
symmetric log-concave density \(f\). Then, if \(q\) and \(w\) are
\(\ensuremath{\mathbb{R}}^{n}\) and \(\ensuremath{\mathbb{R}}\) valued
functions on databases, respectively, both with global sensitivity
\(\Delta = \max_{(d, d') \in \ensuremath{\mathcal{N}}}\| q(d) - q(d')\| = \max_{(d, d') \in \ensuremath{\mathcal{N}}} |w(d) - w(d')|\),
then for \(s > 0\), the mechanism \(q(d) + s\ensuremath{\mathbf{X}}\) is
\((\epsilon, \delta)\)-differentially private if and only if the
mechanism \(w(q) + sY\) for \(Y \sim f\) is.

\end{lemma}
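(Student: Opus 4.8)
The plan is to reduce the $n$-dimensional privacy requirement to the single scalar inequality~(\ref{eq:neccsuff}) already analyzed in Lemma~\ref{lem:neccsuff}, using that the noise is $\|\cdot\|$-spherically symmetric and that $\Delta$ is measured in the very same norm. I would start from the variational characterization behind Lemma~\ref{lem:neccsuff}: a mechanism is $(\epsilon,\delta)$-differentially private exactly when, for every neighboring pair $(d,d')$, the noise mass on the privacy-loss region $\{\mathbf{y} : f((\mathbf{y}-q(d))/s) > e^{\epsilon} f((\mathbf{y}-q(d'))/s)\}$ exceeds its $e^{\epsilon}$-weighted counterpart under $d'$ by at most $\delta$, since this region is the set maximizing $\Pr(M(d)\in S) - e^{\epsilon}\Pr(M(d')\in S)$. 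Writing $\mathbf{v}=q(d)-q(d')$ and $f = e^{-\psi(\|\cdot\|)}$, both the region and the resulting deficit depend on the pair only through $\mathbf{v}$, via $\psi(\|\mathbf{y}-q(d')\|/s)-\psi(\|\mathbf{y}-q(d)\|/s)$.

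Next I would localize the worst case. Because $\psi$ is convex and even, the likelihood ratio is monotone along the segment joining the two centers, as recorded in Remark~\ref{rem:epsdp}; this makes the privacy deficit non-decreasing in $\|\mathbf{v}\|$, so the binding pair attains $\|\mathbf{v}\|=\Delta$, matching the one-dimensional worst case $|w(d)-w(d')|=\Delta$ for $w$. I would then reduce to the situation where this worst-case shift lies along a single coordinate axis, so that $q(d)$ and $q(d')$ differ in one coordinate only --- a reduction that is immediate under rotational ($\ell_2$) symmetry and that I revisit below for general norms.

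The \emph{crux} is this final collapse to one dimension. With the shift supported on a single axis, I would integrate out the remaining $n-1$ coordinates: when they enter both hypotheses through the same marginal, they contribute, over the privacy-loss region, a factor that integrates to one, and what survives is precisely the scalar integral $F((\Delta - t)/s) - e^{\epsilon}F(-t/s)$ of~(\ref{eq:neccsuff}) for the radial profile $f$, with $t$ as in~(\ref{eq:tstar}). Applying Lemma~\ref{lem:neccsuff} to this scalar problem then establishes the equivalence, and the only-if direction runs the same aligned configuration in reverse, since a one-dimensional violation at $\Delta$ exhibits an $n$-dimensional neighboring pair that breaks privacy. I expect this transverse integration, together with the axial reduction it depends on, to be the main obstacle. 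It is immediate when the spherical density factors as a product aligned with the chosen axis --- as for iid $\text{Subbotin}_{r}$ noise with the $r$-norm in Theorem~\ref{thm:gensub}, where $C f(\|\mathbf{x}\|)=\prod_i f(x_i)$ and only the axial factor is perturbed --- but for a general norm no such factorization is available and no group of $\|\cdot\|$-isometries acts transitively on the $\|\cdot\|$-sphere, so the decoupling must be argued directly from the radial form $f(\|\cdot\|)$ and the co-area structure of $\|\cdot\|$-balls, with the log-concave monotonicity of Remark~\ref{rem:epsdp} used to rule out that a non-axial worst-case shift could be more demanding than the axial one.
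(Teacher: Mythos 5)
Your sketch correctly locates the crux, but it leaves that crux unproved, and the route you propose for closing it cannot work. Your reduction rests on two claims: (a) the worst-case shift can be taken along a coordinate axis, and (b) once the shift is axial, the $n-1$ transverse coordinates ``integrate to one'' and leave exactly the scalar quantity in (\ref{eq:neccsuff}). Both claims hold precisely when $Cf(\|\mathbf{x}\|)=\prod_i f(x_i)$, i.e.\ in the Subbotin/$p$-norm pairing of Theorem \ref{thm:gensub} (and even there the axial reduction is automatic only at $p=2$, via rotations); for a general norm they are not merely unproved but false. Take $n=2$, $f$ the standard Laplace profile, $\|\cdot\|=\|\cdot\|_\infty$, so the noise density is $\frac{1}{8}e^{-\|\mathbf{x}\|_\infty}$, and let $s=1$, $\epsilon=1$, $\Delta=2$. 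For the axial shift $\mathbf{v}=(2,0)$ the privacy-loss region is $\{\mathbf{x} : x_1>\frac{3}{2},\ |x_2|<x_1-1\}$ and the deficit $\sup_S\{\Pr(\mathbf{v}+\mathbf{X}\in S)-e\Pr(\mathbf{X}\in S)\}$ equals $1-\frac{5}{4}e^{-1/2}\approx 0.242$; for the diagonal shift $\mathbf{v}=(2,2)$, which has the same $\|\cdot\|_\infty$-norm, the region is the half-plane $\{\mathbf{x} : x_1+x_2>3\}$ and the deficit is $1-e^{-1/2}\approx 0.393$, strictly larger, and equal to the one-dimensional value of (\ref{eq:neccsuff}). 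So for a non-product spherical law the binding direction need not be axial (here it is the corner direction), and the axial deficit does not reproduce the scalar integral; no appeal to MLR monotonicity or to co-area structure can rescue (a) and (b), because they are simply not true.

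The paper's proof needs neither claim. It fixes a point $\mathbf{x}$ and lets the shift range over the entire sphere $D^{(n)}_{d}=\{\mathbf{d} : \|\mathbf{d}\|=d\}$: since the profile $f$ is non-increasing on $[0,\infty)$, the constraint $f(\|\mathbf{x}-\mathbf{d}\|)\leq e^{\epsilon}f(\|\mathbf{x}\|)$ is tightest for the shift aligned with $\mathbf{x}$ itself, $\mathbf{d}=(d/\|\mathbf{x}\|)\,\mathbf{x}$, because that $\mathbf{d}$ minimizes $\|\mathbf{x}-\mathbf{d}\|$; for this $\mathbf{d}$ one has $\|\mathbf{x}-\mathbf{d}\|=|\,\|\mathbf{x}\|-d\,|$, so the constraint is verbatim the one-dimensional condition at the scalar $\|\mathbf{x}\|$, and the reverse triangle inequality gives the converse implication. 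Hence the union over all shift directions of the privacy-loss regions is exactly the preimage $\{\mathbf{x} : \|\mathbf{x}\|>t\}$ of the one-dimensional region $\bigl(A^{(1)}_d\bigr)^c$, where $A^{(1)}_d=\{x : f(x-d)\leq e^{\epsilon}f(x)\}$ and $t$ is as in (\ref{eq:tstar}), after which the paper re-runs the argument of Lemma \ref{lem:neccsuff} (via Lemma \ref{lem:equiv}) with $A^c=\bigl(A^{(1)}_d\bigr)^c$. This pointwise, all-directions-at-once characterization of the region is the idea your proposal is missing. Your closing worry about the radial factor is legitimate --- it is exactly why a per-direction deficit can differ from the one-dimensional one --- but the paper's resolution is the closest-point argument just described, not a factorization or an axial worst case.
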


Let
\(\|(v_{1}, v_{2}, \ldots, v_{n})\|_{p} := (\sum_{i=1}^{n} |v_{i}|^{p})^{1/p}\)
for \(p \in \ensuremath{\mathbb{R}}\), \(p\geq 1\) denote the \(p\)-norm
on \(\ensuremath{\mathbb{R}}^{n}\). Using particulars of the
multivariate Gaussian density, \citet{pmlr-v80-balle18a} show that for
\(\Delta\) defined in terms of the Euclidean norm \(\|\cdot\|_{2}\), the
condition (\ref{eq:gaussian}) is sufficient and necessary for the
mechanism that adds \(s (Z_{1}, Z_{2}, \ldots, Z_{n})\) where the
\(Z_{i}\) are independent standard Gaussian variables.
\citet{10.2307/2337215} show that when the independent
\(X_{i} \sim f_{p}\), where \(f_{p}\) is the
\ensuremath{\text{Subbotin}_{p}} density, the variable
\((X_{1}, X_{2}, \ldots, X_{n})\) has a \(\|\cdot\|_{p}\)-spherical
distribution. Using this result, we specialize Lemma \ref{lem:iso} to
pairings of \ensuremath{\text{Subbotin}_{p}} variables and \(p\)-norms
for \(p\geq 1\) in the following Theorem.

\begin{theorem}\label{thm:gensub}

Let \(n \in \ensuremath{\mathbb{N}}\), \(n \geq 1\), \(p\geq 1\),
\(X_{1}, \ldots, X_{n}\) be independent \ensuremath{\text{Subbotin}_{p}}
random variables, and
\(\ensuremath{\mathbf{X}} = (X_{1}, X_{2}, \ldots, X_{n})\). Then for a
\(\ensuremath{\mathbb{R}}^{n}\)-valued function \(q\) on databases with
global sensitivity
\(\Delta = \max_{(d, d') \in \ensuremath{\mathcal{N}}}\| q(d) - q(d')\|_{p}\)
and a database \(d\), the mechanism returning a variate of
\(q(d) + s \ensuremath{\mathbf{X}}\) is
\((\epsilon, \delta)\)-differentially private if and only if
\begin{align*}
F_{p} \left( \frac{\Delta - t}{s} \right) - 
e^{\epsilon}F_{p} \left( -\frac{t}{s} \right)
\leq \delta 
\end{align*} where \begin{align*}
t := \sup \left\{z \in \ensuremath{\mathbb{R}}\mid \psi_{p}\left(\frac{z}{s}\right) -
\psi_{p}\left(\frac{z - \Delta}{s} \right) \leq \epsilon\right\}.
\end{align*} Furthermore, if the above holds for scale \(s > 0\), then
it also holds for scales \(s' > s\).

\end{theorem}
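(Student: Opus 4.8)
The plan is to reduce the \(n\)-dimensional claim to the one-dimensional Lemma \ref{lem:neccsuff} in two moves: first recognize that \(\mathbf{X}\) carries a \(\|\cdot\|_p\)-spherically symmetric log-concave density whose radial profile is exactly the one-dimensional Subbotin density \(f_p\), and then invoke Lemma \ref{lem:iso} to collapse the problem onto a one-dimensional mechanism driven by \(Y \sim f_p\). Once this reduction is in place, the stated criterion falls out of Lemma \ref{lem:neccsuff} applied to \(f_p\).

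First I would check that \(\mathbf{X}\) satisfies the hypotheses of Lemma \ref{lem:iso}. The joint density of the independent coordinates factors as \(\prod_{i=1}^{n} f_p(x_i) = C(p)^{-n} e^{-\sum_{i} |x_i|^p / p} = C(p)^{-n} e^{-\|x\|_p^p / p}\), which depends on \(x\) only through \(\|x\|_p\) and equals \(C(p)^{-(n-1)} f_p(\|x\|_p)\); this is precisely the \(\|\cdot\|_p\)-spherically symmetric generalization \(C f(\|x\|)\) of \(f = f_p\) introduced in Section \ref{sec:extensions}, and it is the result of \citet{10.2307/2337215} that such iid products are \(\|\cdot\|_p\)-spherical. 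Log-concavity is then immediate, since \(-\log \prod_i f_p(x_i) = \sum_i \psi_p(x_i)\) is a sum of the functions \(\psi_p\), each convex for \(p \geq 1\) as recorded in Section \ref{subbotin-densities}. Hence \(\mathbf{X}\) is distributed according to a \(\|\cdot\|_p\)-spherically symmetric log-concave density with radial profile \(f_p\).

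Next I would apply Lemma \ref{lem:iso} with \(\|\cdot\| = \|\cdot\|_p\). Taking any real-valued query \(w\) whose global sensitivity in \(|\cdot|\) equals the \(\|\cdot\|_p\)-sensitivity \(\Delta\) of \(q\), the lemma asserts that \(q(d) + s\mathbf{X}\) is \((\epsilon, \delta)\)-differentially private if and only if the one-dimensional mechanism \(w(q) + sY\) with \(Y \sim f_p\) is. To finish, I apply Lemma \ref{lem:neccsuff} to \(Y \sim f_p = e^{-\psi_p}\): because \(f_p\) is supported on all of \(\mathbb{R}\) we have \(a = \infty\), so the constraint \(z < as\) in (\ref{eq:tstar}) relaxes to \(z \in \mathbb{R}\), and (\ref{eq:neccsuff}) becomes exactly the stated inequality in \(F_p\) with the stated expression for \(t\). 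The final monotonicity assertion—validity at scale \(s\) forcing validity at every \(s' > s\)—transfers verbatim from the corresponding clause of Lemma \ref{lem:neccsuff} across the equivalence.

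Given the cited results this is largely an assembly argument, so the point that most needs care is the identification in the second step: verifying that the iid \(\text{Subbotin}_p\) product density is the spherical generalization of the \emph{same} \(f_p\) that appears in the conclusion. This is what guarantees that the radial profile handed to Lemma \ref{lem:iso}, and therefore the \(F_p\) and \(\psi_p\) in the final criterion, are the genuine one-dimensional Subbotin objects and not a rescaled variant. The only remaining subtlety is the passage from \(z < as\) to \(z \in \mathbb{R}\), which rests entirely on the full support of \(f_p\).
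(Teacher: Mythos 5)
Your proof is correct and follows essentially the same route as the paper: identify the iid $\text{Subbotin}_{p}$ product as a $\|\cdot\|_{p}$-spherically symmetric log-concave density with radial profile $f_{p}$, then chain Lemma \ref{lem:iso} with Lemma \ref{lem:neccsuff} (using full support to relax $z < as$ to $z \in \ensuremath{\mathbb{R}}$). The only difference is cosmetic: you verify the radial form by directly computing the product density, whereas the paper cites \citet{10.2307/2337215} for $p$-sphericality and then pins down the radial profile via the marginals.
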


We will call the mechanism from Theorem \ref{thm:gensub} the
\emph{\ensuremath{\text{Subbotin}_{r}} mechanism} and let
\(s_{r}(\epsilon, \delta, \Delta)\) denote the minimum scale \(s\) for
which the condition in Theorem \ref{thm:gensub} holds for the
\ensuremath{\text{Subbotin}_{r}} mechanism.

\figtwoaistats

\begin{remark}\label{rem:laplace}

Combining Theorem \ref{thm:gensub} with Theorem \ref{thm:laplace}, we
get that the vector Laplace mechanism
\(q(d) + (s X_{1}, s X_{2}, \ldots, s X_{n})\) where the independent
\(X_{i} \sim f_{1}\) and the global sensitivity \(\Delta\) is defined in
terms of \(\| \cdot \|_{1}\) is \((\epsilon, \delta)\)-differentially
private if and only if \[
s \geq \frac{\Delta}{\epsilon - 2 \log\left(1-\delta \right)}. \qedhere
\]

\end{remark}

\hypertarget{sec:subopt}{%
\subsection{Optimizing query responses using subbotin
noise}\label{sec:subopt}}

Let \(R\) be a set of record values let
\(q : R^{n} \to \ensuremath{\mathbb{R}}^{m}\) be a query function on
length \(n\) databases \(d\). An important problem is to minimize the
MSE \(m^{-1} \operatorname{E}[\|M(d) - q(d)\|_{2}^{2}]\) for
multidimensional mechanisms \(M\). Given a choice between different
mechanisms, a question is then which one minimizes the MSE. We now look
at this question for the family of \ensuremath{\text{Subbotin}_{r}}
mechanisms \(M_{r}\). Let \(\Delta_{r}\) denote the global sensitivity
of \(q\) defined using the \(r\)-norm. Then, using definitions from
Theorem \ref{thm:gensub},
\(m^{-1}\operatorname{E}[\|M_{r}(d) - q(d)\|_{2}^{2}] = m^{-1}\operatorname{E}[\|s\ensuremath{\mathbf{X}}\|_{2}^{2}] = s^{2}\operatorname{Var}(X)\)
for \(X \sim f_{r}\) and \(s = s_{r}(\epsilon, \delta,\Delta_{r})\). We
can minimize this over the choices of \(r\) by letting \begin{align}
\label{eq:pmin}
r = \arg\min_{p\geq 1}  s_{p}(\epsilon,
\delta, \Delta_{p})^{2} \left( {p}^{(2/p)}\frac{\Gamma(3/p)}{\Gamma(1/p)}  \right).
\end{align} The optimization (\ref{eq:pmin}) requires determining
\(\Delta_{r}\). This can be difficult. For particular classes of
queries, we can make it a little easier. We call query function \(q\)
\emph{linear} if there exist linear function
\(f : R \to \ensuremath{\mathbb{R}}^{m}\) and function
\(\nu : \ensuremath{\mathbb{N}}\to \ensuremath{\mathbb{R}}\) such that
\(q(r_{1}, r_{2}, \ldots, r_{n}) = \nu(n) \sum_{i=1}^{n} f(r_{i})\). To
clarify which \(f\) and \(\nu\) we associate with linear \(q\), we write
\(q_{\nu, f}\). Let
\(\|S\|_{\infty} := \sup\{\| x - y \|_{\infty} \mid x,y \in S\}\) denote
the \(l_{\infty}\) diameter of a set
\(S \subseteq \ensuremath{\mathbb{R}}^{m}\). Then, if
\(\|f(R)\|_{\infty}\) is easier to determine or bound than
\(\Delta_{r}\) the following can help.

\begin{theorem}\label{thm:inorm}

Let \(q_{\nu,f} : R^{n} \to \ensuremath{\mathbb{R}}^{m}\) be a linear
database query function for some integers \(m>0\) and \(n>0\). Then
\(\Delta_{p} \leq m^{1/p} \,|\nu(n)|\,\|f(R)\|_{\infty}\) with equality
if \(f(R)\) is an \(l_{\infty}\)-ball.

\end{theorem}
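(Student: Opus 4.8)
The plan is to reduce the global sensitivity of $q_{\nu,f}$ to the diameter of the image set $f(R)$, and then invoke the elementary relation between the $p$-norm and the $\infty$-norm on $\R^{m}$. First I would observe that any two neighboring databases $d = (r_{1},\ldots,r_{n})$ and $d' = (r_{1}',\ldots,r_{n}')$ differ in exactly one coordinate, say position $j$, so that by linearity of $q_{\nu,f}$ the telescoping sum collapses to
$$
q_{\nu,f}(d) - q_{\nu,f}(d') = \nu(n)\bigl(f(r_{j}) - f(r_{j}')\bigr).
$$
Taking the $p$-norm and maximizing over all neighboring pairs --- equivalently, over all pairs $(r_{j}, r_{j}') \in R \times R$, since for $n \geq 1$ a single record may be changed to any other --- yields
$$
\Delta_{p} = |\nu(n)| \sup_{x,y \in f(R)} \|x - y\|_{p} = |\nu(n)|\, \mathrm{diam}_{p}(f(R)),
$$
where $\mathrm{diam}_{p}$ denotes the $p$-norm diameter.

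For the upper bound I would apply the coordinatewise estimate $\|v\|_{p} \le m^{1/p}\|v\|_{\infty}$, valid for every $v \in \R^{m}$ because $\sum_{i} |v_{i}|^{p} \le m\,\|v\|_{\infty}^{p}$. Applying this to $v = x - y$ and using the definition $\|f(R)\|_{\infty} = \sup\{\|x-y\|_{\infty} \mid x,y \in f(R)\}$ gives $\mathrm{diam}_{p}(f(R)) \le m^{1/p}\|f(R)\|_{\infty}$, and hence the claimed inequality $\Delta_{p} \le m^{1/p}|\nu(n)|\,\|f(R)\|_{\infty}$.

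For the equality claim, suppose $f(R)$ is an $l_{\infty}$-ball with center $c$ and radius $\rho$, so that $\|f(R)\|_{\infty} = 2\rho$. The key observation is that the norm inequality $\|v\|_{p} \le m^{1/p}\|v\|_{\infty}$ is attained exactly when all coordinates of $v$ share the same absolute value, and such a $v$ is realized by the two antipodal corners $x = c + \rho\1$ and $y = c - \rho\1$ of the ball, which both lie in $f(R)$. For these, $x - y = 2\rho\1$, so $\|x-y\|_{p} = 2\rho\,m^{1/p} = m^{1/p}\|f(R)\|_{\infty}$, forcing $\mathrm{diam}_{p}(f(R)) \ge m^{1/p}\|f(R)\|_{\infty}$ and thus equality throughout.

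The norm comparison is entirely routine; the one place requiring care is the reduction step, where I must confirm that every pair $(x,y) \in f(R)\times f(R)$ is genuinely realized by some neighboring pair of databases (which holds since $n \ge 1$ and a single record may be swapped freely), and, in the equality case, that the extremal corners $c \pm \rho\1$ actually belong to $f(R)$ rather than merely to its convex hull --- which is precisely what the $l_{\infty}$-ball hypothesis supplies.
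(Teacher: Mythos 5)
Your proof is correct and follows essentially the same route as the paper's: reduce to a single changed record by linearity of $q_{\nu,f}$, apply $\|v\|_{p}\le m^{1/p}\|v\|_{\infty}$, and obtain equality from a pair of points whose coordinate differences all equal the $l_{\infty}$-diameter (the paper chooses $v,w$ with $|v_{i}-w_{i}|=\|v-w\|_{\infty}$ for all $i$; you take the antipodal corners $c\pm\rho\mathbf{1}$). If anything, your intermediate identity $\Delta_{p}=|\nu(n)|\,\operatorname{diam}_{p}(f(R))$ organizes the upper bound a bit more cleanly than the paper's passage through the specific $\infty$-norm-maximizing pair, but the substance is the same.
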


A natural evaluation of optimizing (\ref{eq:pmin}) is to compare with
the go-to Gaussian mechanism. We therefore essentially replicate the
estimation of the mean \(m\)-dimensional vector experiment performed by
\citet{pmlr-v80-balle18a}. We also chose this experiment because Balle
and Wang use this to evaluate two de-noising post-processing methods for
the Gaussian mechanism that we also want to compare with.

\figtwoother

Let the mean vector query be the linear query \(q = q_{\nu,f}\) for
\(f(x) := x\) and \(\nu(n) := 1/n\). Suppose \(y\) is the output of the
Gaussian mechanism with scale \(s\) for query \(q\). Then, the first of
the de-noising methods is the adaptive James-Stein estimator
\(y_{JS} := \left(1 - \frac{(m - 2)s^{2}}{\|y\|_{2}^{2}} \right)\), with
the underlying assumption that the query response is a random Gaussian
vector with iid dimensions. The second is the soft-thresholding proposed
by Donoho et al.
\citetext{\citeyear{10.1093/biomet/81.3.425}; \citeyear{donohoDenoisingSoftthresholding1995}}
given by \(y_{th} := \sign(y)\max(0, |y| - t)\) where
\(t := s \sqrt{2\log(m)}\). For this method, ideally \(t\) is chosen
such that it defines an \(\infty\)-ball containing the noise. We
recognize \(t\) as a standard upper bound for the expected maximum of
\(m\) iid scale \(s\) Gaussian random variables, so the defined
\(\infty\)-ball contains the noise in expectation. We adapt the soft
thresholding to scale \(s\) \ensuremath{\text{Subbotin}_{r}} mechanism
output by letting \(t = s/300 \sum_{i=1}^{300} \max(x_{i})\) where the
\(m\) elements in \(x_{i} \in \ensuremath{\mathbb{R}}^{m}\) are
independent standard \ensuremath{\text{Subbotin}_{r}} variates.

Like Balle and Wang, we require that records lie in the unit
\(l_{\infty}\)-ball \(R = v + [-1/2,1/2]\), where the \(v_{i}\) in
\(v = (v_{1}, v_{2}, \ldots, v_{m})\) are standard normal variates, let
databases be length \(n=500\) sequences of elements from \(R\) chosen
uniformly at random, and performed our experiments with fixed
\(\delta = 10^{-4}\) and errors measured by the \(l_{2}\)-error. For
each pair \((\epsilon, m)\) in
\(\{0.01, 0.1, 1\} \times \{10, 100, 500, 1000,2000\}\) we numerically
optimized (\ref{eq:pmin}) on a regular grid \((1,1.5, 2,\ldots, 14)\)
yielding \(r_{\epsilon, m}\) and generated 100 databases. For each of
the databases from we then computed the \(l_{2}\)-error of the Gaussian
mechanism (Gauss), the \(\ensuremath{\text{Subbotin}_{r_{\epsilon,m}}}\)
mechanism (Sub(r)), their soft-thresholded versions (Gauss-t, Sub(r)-t),
the James-Stein denoised Gaussian mechanism (Gauss-JS), after which we
computed the average for each of the five mechanisms for the 100
databases. These averages can be seen in Figures \ref{fig:e1err} --
\ref{fig:e001err}. In Figure \ref{fig:e001err} the Gauss errors were
much larger than all the others and were truncated to allow details to
be seen for the other error series. Figure \ref{fig:vars} shows the mean
squared errors \(s^{2}\operatorname{Var}(X)\) computed during the
optimization of (\ref{eq:pmin}) for \(\epsilon=1\) and all values \(m\).

Since the optimal scale for the \ensuremath{\text{Subbotin}_{r}}
mechanisms is generally not available analytically, neither is the
optimal \(r\) as a function of \(m\). Therefore, we optimized \(r\)
numerically on a grid of \(r\) values as described above. The maximum
grid value \(r=14\) was chosen to contain a local optimum for all \(m\),
which are listed below. The grid points are shown as dots in Figure
\ref{fig:vars}.

For \(\epsilon=1\), \(\epsilon=0.1\), and \(\epsilon=0.01\), the
selected \(r_{\epsilon, m}\) values were (2,4,6,7,7.5), (2.5, 5, 7.5,
8.5, 9), and (3.5, 7, 10.5, 11.5, 13), respectively. The corresponding
\ensuremath{\text{Subbotin}_{r}} scales rounded to two decimal places
were (0.02, 0.06, 0.08, 0.09, 0.10), (0.16, 0.37, 0.52, 0.58, 0.63), and
(1.14, 2.07, 2.63, 2.84, 3.04), respectively. The corresponding Gaussian
mechanism scales were (0.02, 0.06, 0.14, 0.20, 0.28), (0.16, 0.49, 1.10,
1.55, 2.19), and (1.09, 3.45, 7.72, 10.91, 15.44), respectively.

We make the following observations. Optimizing (\ref{eq:pmin}) is
relevant for minimizing the \(l_{2}\)-error because of the monotone
relationship between the MSE and the \(l_{2}\)-error. Doing so
consistently produced smaller \(l_{2}\)-errors over the Gaussian
mechanism and its denoised versions. It is well known that
\(\frac{m}{\sqrt{m+1}} \leq \operatorname{E}[\|\ensuremath{\mathbf{Z}}\|_{2}] \leq \sqrt{m}\)
for a length \(m\) vector \(\ensuremath{\mathbf{Z}}\) of iid standard
Gaussian random variables. Also knowing that
\(\Delta \propto \sqrt{m}\), we can apply Lemma \ref{lem:standards} to
explain why the \(l_{2}\)-error of the Gaussian mechanism in our
experiments appears linear in the query dimension \(m\). This
application of Lemma \ref{lem:standards} may be of independent interest.
The James-Stein estimator always improved on the unprocessed Gaussian
mechanism (we refer the reader to \citet{pmlr-v80-balle18a} for an
analysis). The soft-thresholded mechanisms were better than their
unprocessed counterparts when the scale was large, while the opposite
was true when the scale was smaller. Empirically, for Gaussian mechanism
in our experimental setting, the changeover happens close to scale
\(s=1\), for which the \(v\) and noise distributions are equal. When the
scale becomes large as \(\epsilon\) becomes small and \(m\) large,
almost all of the entries in the thresholded estimates are 0, which
suggests why the errors of the thresholded Gaussian and
\(\ensuremath{\text{Subbotin}_{r_{\epsilon,m}}}\) mechanisms for
\(\epsilon=0.01\) are very similar and their comparison might be less
informative. Also, when considering the results, it might be helpful to
remember that we chose the experimental setup to favor the James-Stein
estimator for Gaussian noise.

\hypertarget{sec:background}{%
\section{BACKGROUND AND OTHER RELATED WORKS}\label{sec:background}}

Adding randomness to the computation of query results in order to
protect privacy has a long history (see e.g.,
\citet{dalenius_information_1978}, Section 13 or
\citet{denningSecureStatisticalDatabases1980}). The definitions of
privacy in these works differed and were often concerned with
disallowing the variance in the query answers resulting from the
randomness to be reduced too much by repeated queries. Following work by
\citet{dinurRevealingInformationPreserving2003}, noting that a
``succinct catch-all definition of privacy is very elusive'', Dwork et
al.~published \citeyearpar{Dwork2006a} a definition of privacy that
would become known as \(\epsilon\)-differential privacy. This definition
has three important properties: quantification of a strong form of
privacy, can be composed across queries types and databases, and is
immune to post-processing. The latter meaning that no amount of
processing after the fact will change the guarantees. The definition of
\(\epsilon\)-differential privacy was soon extended to
\((\epsilon, \delta)\)-differential privacy
\citep{our-data-ourselves-privacy-via-distributed-noise-generation},
which allowed the use of, among others, output perturbation by Gaussian
noise, and turned out to improve the compositional properties
\citep{dworkBoostingDifferentialPrivacy2010, 7883827} and thereby the
utility--privacy trade-off over multiple queries. However, the extension
to \((\epsilon, \delta)\)-differential privacy represents a weakening of
the privacy guaranteed over pure \(\epsilon\)-differential privacy that
is not uncontroversial \citep{mcsherryHowManySecrets2017}. In a machine
learning context, compositional properties received much attention for
establishing end-to-end privacy, particularly when using tools for
private optimization
\citep{10.1109/FOCS.2014.56, abadiDeepLearningDifferential2016}.
Partially as a result of this, several similar relaxations of
\(\epsilon\)-differential privacy have been proposed that improve
compositional properties over their predecessors. These include
Concentrated differential privacy
\citep{DBLP:journals/corr/DworkR16, 10.1007/978-3-662-53641-4_24},
\text{R\'enyi} differential privacy
\citep{mironovRenyiDifferentialPrivacy2017}, and Gaussian differential
privacy \citep{DBLP:journals/corr/abs-1905-02383}. Similarly to the work
of Balle et al.~\citeyearpar{pmlr-v80-balle18a}, our work can be seen as
complementary to these efforts in that we concentrate on optimizing
privacy parameters for \emph{single} applications of a particular class
of mechanisms under \((\epsilon, \delta)\)-differential privacy.
\citet{8375673} defines the Generalized Gaussian mechanism that is
effectively equivalent to the \ensuremath{\text{Subbotin}_{r}}
mechanism. However, only a sufficient criterion for
\((\epsilon, \delta)\)-differential privacy for this family is
developed. For the Gaussian case, this criterion reduces to a criterion
inferior to the one described by \citet{6606817}. They also do not
consider a systematic search among members in this family for optimizing
utility. Optimality with respect to different utilities has been shown
for \(\epsilon\)-differentially private mechanisms such as the staircase
mechanism \citep{7093132} and randomized response in a multi-party
setting \citep{kairouzSecureMultipartyDifferential2015}.

\hypertarget{sec:discussion}{%
\section{SUMMARY AND DISCUSSION}\label{sec:discussion}}

We provided a necessary and sufficient condition for
\((\epsilon, \delta)\)-differential privacy for mechanisms that add
noise from a symmetric and log-concave distribution (Lemma
\ref{lem:neccsuff}). This condition is given directly in terms of the
standard distribution function, the needed scale of the distribution,
\(\epsilon\), \(\delta\), and the global sensitivity \(\Delta\) of the
query in question. Previously, such a condition was only known for the
Gaussian distribution \citep{pmlr-v80-balle18a} and was shown to be
essentially identical for the discrete Gaussian distribution
\citep{NEURIPS2020_b53b3a3d}.

Importantly, we show that our necessary and sufficient condition extends
naturally to multidimensional query responses. Key to this extension is
that the noise distribution is spherically symmetric log-concave and is
spherical with respect to the norm used to define the query global
sensitivity. We are able to show that this holds for mechanisms where
the noise vector consists of independent
\ensuremath{\text{Subbotin}_{r}} random variables and the query
sensitivity is assessed using the corresponding \(l_{r}\)-norm for
\(r\geq 1\) (Theorem \ref{thm:gensub}). This infinite family of
\ensuremath{\text{Subbotin}_{r}} mechanisms contains the prototypical
Laplace and Gaussian mechanisms with \(r=1\) and \(r=2\), respectively.

In the one-dimensional query case, we show that for symmetric
log-concave mechanisms supported everywhere that cannot achieve pure
\(\epsilon\)-differential privacy, the scale grows to infinity as
\(\delta\) goes to 0 (Theorem \ref{thm:unboundeds}). Since this is not
the case when the mechanism can achieve \(\epsilon\)-differential
privacy, we achieve a separation of these two types of symmetric
log-concave mechanism for any utility that is decreasing and unbounded
in the noise scale. We demonstrate this utility separation by showing
that the Laplace and Logistic mechanisms exhibits smaller variance than
the Gaussian mechanism for a large and relevant range of values for
\(\epsilon\) and \(\delta\).

We further showed that the optimal scale for log-concave noise
mechanisms is proportional to the query sensitivity \(\Delta\) (Lemma
\ref{lem:standards}). This means that the optimal noise scale is
proportional to the behaviour of the norm used for the query
sensitivity. As a consequence of this and having access to a
\ensuremath{\text{Subbotin}_{r}} mechanism for every \(l_{r}\)-norm for
\(r\geq 1\), we were able to demonstrate that optimizing the choice of
\(r\) for the multidimensional case can significantly improve the mean
squared error and any similar utility such as the \(l_{2}\)-error over
any fixed choice of mechanism such as the Gaussian.

Consider that a high-dimensional random vector
\(\ensuremath{\mathbf{X}}\) with iid components appears to be
concentrated on a sphere (see e.g., \citet{biau:hal-00879436}), a
phenomenon that \citet{Dong2021CLTDP} state as that the mechanism adding
\(\ensuremath{\mathbf{X}}\) essentially behaves like a Gaussian
mechanism. So, for a fixed high-dimensional query and given privacy
parameters \(\epsilon\) and \(\delta\), two different Subbotin
mechanisms can essentially behave like two Gaussian mechanisms that
differ in the noise scale. This suggests a ``Gaussian lens'' through
which to interpret a choice of norm. However, such an interpretation
cannot be strict: even though the Laplace mechanism essentially behaves
like a Gaussian mechanism in high dimensions, it differs in the type of
privacy guarantee it affords since the Gaussian mechanism cannot achieve
\((\epsilon, 0)\)-differential privacy for any \(\epsilon\).

\citet{dworkAlgorithmicFoundationsDifferential2014} warn that claims of
differential privacy should be seen in relation to the granularity of
data, essentially reflected in the neighborhood relation used to define
privacy. Similarly, the choice of norm affects sensitivity and the
warning could be considered to apply here as well, particularly in light
of the preceding paragraph. We recommend taking this into account when,
for example, considering the range of \(r\) to optimize Subbotin
mechanisms over.

As mentioned in Section \ref{sec:background}, other members in the
family of differential privacy definitions exhibit better composition
properties and can be attractive alternatives to
\((\epsilon, \delta)\)-differential privacy. They have in common that
the query sensitivity is defined by some norm and that an increase in
the sensitivity yields an increase in the scale of the noise. For
example, in the Concentrated differential privacy case, the sensitivity
enters the scale of the Gaussian mechanism as a multiplicative factor
for a given level of privacy. Furthermore, the extension of a scale
bound for univariate unimodal and symmetric distributions to the
multidimensional spherical case appears quite general. Therefore, we
consider our results regarding \ensuremath{\text{Subbotin}_{r}}
mechanisms and optimizing the choice of distribution in terms of query
response dimensionality as an example of how to implement a general
method under \((\epsilon, \delta)\)-differential privacy.

\putacknowledgement

\bibliography{bibliography}

\begin{thebibliography}{}

\bibitem[Abadi et~al., 2016]{abadiDeepLearningDifferential2016}
Abadi, M., Chu, A., Goodfellow, I., McMahan, H.~B., Mironov, I., Talwar, K.,
  and Zhang, L. (2016).
\newblock Deep {{Learning}} with {{Differential Privacy}}.
\newblock {\em Proceedings of the 2016 ACM SIGSAC Conference on Computer and
  Communications Security - CCS'16}, pages 308--318.

\bibitem[Anderson et~al., 2006]{10.2307/27642062}
Anderson, G., Vamanamurthy, M., and Vuorinen, M. (2006).
\newblock Monotonicity rules in calculus.
\newblock {\em The American Mathematical Monthly}, 113(9):805--816.

\bibitem[Balle and Wang, 2018]{pmlr-v80-balle18a}
Balle, B. and Wang, Y.-X. (2018).
\newblock Improving the {{Gaussian}} mechanism for differential privacy:
  {{Analytical}} calibration and optimal denoising.
\newblock volume~80 of {\em Proceedings of Machine Learning Research}, pages
  394--403, {Stockholmsm\"assan, Stockholm Sweden}. {PMLR}.

\bibitem[Bassily et~al., 2014]{10.1109/FOCS.2014.56}
Bassily, R., Smith, A., and Thakurta, A. (2014).
\newblock Private empirical risk minimization: {{Efficient}} algorithms and
  tight error bounds.
\newblock In {\em Proceedings of the 2014 {{IEEE}} 55th Annual Symposium on
  Foundations of Computer Science}, {{FOCS}} '14, pages 464--473, {USA}. {IEEE
  Computer Society}.

\bibitem[Biau and Mason, 2013]{biau:hal-00879436}
Biau, G. and Mason, D.~D. (2013).
\newblock High-dimensional p-norms.

\bibitem[Bun and Steinke, 2016]{10.1007/978-3-662-53641-4_24}
Bun, M. and Steinke, T. (2016).
\newblock Concentrated differential privacy: {{Simplifications}}, extensions,
  and lower bounds.
\newblock In {\em Proceedings, Part i, of the 14th International Conference on
  Theory of Cryptography - Volume 9985}, pages 635--658, {Berlin, Heidelberg}.
  {Springer-Verlag}.

\bibitem[Canonne et~al., 2020]{NEURIPS2020_b53b3a3d}
Canonne, C.~L., Kamath, G., and Steinke, T. (2020).
\newblock The discrete gaussian for differential privacy.
\newblock In Larochelle, H., Ranzato, M., Hadsell, R., Balcan, M.~F., and Lin,
  H., editors, {\em Advances in Neural Information Processing Systems},
  volume~33, pages 15676--15688. {Curran Associates, Inc.}

\bibitem[Dalenius, 1978]{dalenius_information_1978}
Dalenius, T. (1978).
\newblock {\em Information Privacy and Statistics : A Topical Bibliography / by
  {{Tore Dalenius}}}.
\newblock {Washington, D.C. : U.S. Dept. of Commerce, Bureau of the Census}.

\bibitem[Denning, 1980]{denningSecureStatisticalDatabases1980}
Denning, D.~E. (1980).
\newblock Secure {{Statistical Databases}} with {{Random Sample Queries}}.
\newblock {\em ACM Transactions on Database Systems}, 5(3):291--315.

\bibitem[Dinur and Nissim, 2003]{dinurRevealingInformationPreserving2003}
Dinur, I. and Nissim, K. (2003).
\newblock Revealing information while preserving privacy.
\newblock In {\em {{PODS}} '03: {{Proceedings}} of the Twenty-Second {{ACM
  SIGMOD-SIGACT-SIGART}} Symposium on {{Principles}} of Database Systems},
  pages 202--210, {New York, NY, USA}. {ACM}.

\bibitem[Dong et~al., 2019]{DBLP:journals/corr/abs-1905-02383}
Dong, J., Roth, A., and Su, W.~J. (2019).
\newblock Gaussian differential privacy.
\newblock {\em CoRR}, abs/1905.02383.

\bibitem[Dong et~al., 2021]{Dong2021CLTDP}
Dong, J., Su, W., and Zhang, L. (2021).
\newblock A central limit theorem for differentially private query answering.
\newblock In {\em Advances in Neural Information Processing Systems},
  volume~34.

\bibitem[Donoho, 1995]{donohoDenoisingSoftthresholding1995}
Donoho, D. (1995).
\newblock De-noising by soft-thresholding.
\newblock {\em IEEE Transactions on Information Theory}, 41(3):613--627.

\bibitem[Donoho and Johnstone, 1994]{10.1093/biomet/81.3.425}
Donoho, D.~L. and Johnstone, I.~M. (1994).
\newblock Ideal spatial adaptation by wavelet shrinkage.
\newblock {\em Biometrika}, 81(3):425--455.

\bibitem[Dwork et~al.,
  2006a]{our-data-ourselves-privacy-via-distributed-noise-generation}
Dwork, C., Kenthapadi, K., McSherry, F., Mironov, I., and Naor, M. (2006a).
\newblock Our {{Data}}, {{Ourselves}}: {{Privacy Via Distributed Noise
  Generation}}.
\newblock In {\em Advances in {{Cryptology}} ({{EUROCRYPT}} 2006)}, volume
  4004, pages 486--503, {Saint Petersburg, Russia}. {Springer Verlag}.

\bibitem[Dwork and Lei, 2009]{10.1145/1536414.1536466}
Dwork, C. and Lei, J. (2009).
\newblock Differential privacy and robust statistics.
\newblock In {\em Proceedings of the Forty-First Annual {{ACM}} Symposium on
  Theory of Computing}, {{STOC}} '09, pages 371--380, {New York, NY, USA}.
  {Association for Computing Machinery}.

\bibitem[Dwork et~al., 2006b]{Dwork2006a}
Dwork, C., McSherry, F., Nissim, K., and Smith, A. (2006b).
\newblock Calibrating {{Noise}} to {{Sensitivity}} in {{Private Data
  Analysis}}.
\newblock In {\em Proceedings of the {{Conference}} on {{Theory}} of
  {{Cryptography}}}.

\bibitem[Dwork and Roth, 2014]{dworkAlgorithmicFoundationsDifferential2014}
Dwork, C. and Roth, A. (2014).
\newblock The {{Algorithmic Foundations}} of {{Differential Privacy}}.
\newblock {\em Foundations and Trends\textregistered{} in Theoretical Computer
  Science}, 9(3\textendash 4):211--407.

\bibitem[Dwork and Rothblum, 2016]{DBLP:journals/corr/DworkR16}
Dwork, C. and Rothblum, G.~N. (2016).
\newblock Concentrated differential privacy.
\newblock {\em CoRR}, abs/1603.01887.

\bibitem[Dwork et~al., 2010]{dworkBoostingDifferentialPrivacy2010}
Dwork, C., Rothblum, G.~N., and Vadhan, S. (2010).
\newblock Boosting and {{Differential Privacy}}.
\newblock In {\em 2010 {{IEEE}} 51st {{Annual Symposium}} on {{Foundations}} of
  {{Computer Science}}}, pages 51--60.

\bibitem[Elkins et~al., 2013]{elkinsConstituteWorldsConstitutions2013}
Elkins, Z., Ginsburg, T., and Melton, J. (2013).
\newblock Constitute: The worlds constitutions to read, search, and compare.

\bibitem[Geng et~al., 2020]{pmlr-v108-geng20a}
Geng, Q., Ding, W., Guo, R., and Kumar, S. (2020).
\newblock Tight analysis of privacy and utility tradeoff in approximate
  differential privacy.
\newblock In Chiappa, S. and Calandra, R., editors, {\em Proceedings of the
  Twenty Third International Conference on Artificial Intelligence and
  Statistics}, volume 108 of {\em Proceedings of Machine Learning Research},
  pages 89--99. {PMLR}.

\bibitem[Geng et~al., 2015]{7093132}
Geng, Q., Kairouz, P., Oh, S., and Viswanath, P. (2015).
\newblock The staircase mechanism in differential privacy.
\newblock {\em IEEE Journal of Selected Topics in Signal Processing},
  9(7):1176--1184.

\bibitem[Kairouz et~al., 2015]{kairouzSecureMultipartyDifferential2015}
Kairouz, P., Oh, S., and Viswanath, P. (2015).
\newblock Secure {{Multi-party Differential Privacy}}.
\newblock In Cortes, C., Lawrence, N.~D., Lee, D.~D., Sugiyama, M., and
  Garnett, R., editors, {\em Advances in {{Neural Information Processing
  Systems}} 28}, pages 2008--2016. {Curran Associates, Inc.}

\bibitem[Kairouz et~al., 2017]{7883827}
Kairouz, P., Oh, S., and Viswanath, P. (2017).
\newblock The composition theorem for differential privacy.
\newblock {\em IEEE Transactions on Information Theory}, 63(6):4037--4049.

\bibitem[Le~Ny and Pappas, 2014]{6606817}
Le~Ny, J. and Pappas, G.~J. (2014).
\newblock Differentially private filtering.
\newblock {\em IEEE Transactions on Automatic Control}, 59(2):341--354.

\bibitem[Liu, 2019]{8375673}
Liu, F. (2019).
\newblock Generalized gaussian mechanism for differential privacy.
\newblock {\em IEEE Transactions on Knowledge and Data Engineering},
  31(4):747--756.

\bibitem[McSherry, 2017]{mcsherryHowManySecrets2017}
McSherry, F. (2017).
\newblock How many secrets do you have?

\bibitem[Mironov, 2017]{mironovRenyiDifferentialPrivacy2017}
Mironov, I. (2017).
\newblock Renyi {{Differential Privacy}}.
\newblock {\em 2017 IEEE 30th Computer Security Foundations Symposium (CSF)},
  pages 263--275.

\bibitem[Nadarajah, 2005]{doi:10.1080/02664760500079464}
Nadarajah, S. (2005).
\newblock A generalized normal distribution.
\newblock {\em Journal of Applied Statistics}, 32(7):685--694.

\bibitem[Osiewalski and Steel, 1993]{10.2307/2337215}
Osiewalski, J. and Steel, M. F.~J. (1993).
\newblock Robust bayesian inference in lq-{{Spherical}} models.
\newblock {\em Biometrika}, 80(2):456--460.

\bibitem[Saumard and Wellner, 2014]{saumard2014logconcavity}
Saumard, A. and Wellner, J.~A. (2014).
\newblock Log-concavity and strong log-concavity: A review.

\bibitem[Subbotin, 1923]{subbotinLawFrequencyError1923}
Subbotin, M. (1923).
\newblock On the {{Law}} of {{Frequency}} of {{Error}}.
\newblock {\em Mat. Sb.}, 31(2):296--301.

\bibitem[{The United Nations}, 1948]{UNUDHR}
{The United Nations} (1948).
\newblock {\em Universal Declaration of Human Rights}.

\end{thebibliography}
\renewcommand{\bibliography}[1]{}

\startsupp

\hypertarget{app:proofs}{%
\section{POSTPONED PROOFS}\label{app:proofs}}

After recapitulating some known material, we present postponed proofs
section-wise.

\hypertarget{app:prelims}{%
\subsection{Important known results we use}\label{app:prelims}}

For completeness, we include the proofs of the following known results
we rely on. Their presentation is tailored to our needs.

We will be using the following result repeatedly. The formulation and
proof is taken from \citet{saumard2014logconcavity}.

\begin{lemma}[Monotone likelihood ratio]\label{lem:mlr}

A density \(f\) on \(\ensuremath{\mathbb{R}}\) is log-concave if and
only if the translation family
\(\{f(\cdot - \theta) \mid \theta \in \ensuremath{\mathbb{R}}\}\) has
monotone likelihood ratios: i.e., for every \(\theta \leq \theta'\) the
ratio \(f(x-\theta')/f(x - \theta)\) is a monotone nondecreasing
function of \(x\).

\end{lemma}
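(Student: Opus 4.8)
The plan is to take logarithms and reduce both implications to a single standard characterization of concavity in terms of sliding increments. Write $g := \log f$, so that $f$ being log-concave means exactly that $g : \mathbb{R} \to [-\infty, \infty)$ is concave, and the likelihood ratio becomes $f(x-\theta')/f(x-\theta) = \exp(g(x-\theta') - g(x-\theta))$. Since $\exp$ is increasing, the ratio is nondecreasing in $x$ if and only if $x \mapsto g(x-\theta') - g(x-\theta)$ is nondecreasing. Setting $c := \theta' - \theta \geq 0$ and $u := x - \theta$, this difference equals $-\bigl(g(u) - g(u - c)\bigr)$, so everything hinges on how the increment $g(u) - g(u-c)$ over a window of fixed width $c$ behaves as the window slides to the right.

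The key fact I would invoke is the equivalence: a function $g$ is concave if and only if for every fixed $c > 0$ the increment $u \mapsto g(u) - g(u-c)$ is nonincreasing in $u$ (equivalently, the secant slopes of $g$ are nonincreasing as the interval shifts right). For the forward direction (log-concave implies monotone likelihood ratio), concavity of $g$ gives that $g(u) - g(u-c)$ is nonincreasing, hence $-\bigl(g(u)-g(u-c)\bigr)$ is nondecreasing in $u$, and therefore in $x$; exponentiating recovers that the ratio is nondecreasing, for every $\theta \leq \theta'$. For the converse, it suffices to specialize the hypothesis to $\theta = 0$ and $\theta' = c$ for an arbitrary $c > 0$: the hypothesis then states that $g(x-c) - g(x)$ is nondecreasing in $x$, i.e.\ $g(x) - g(x-c)$ is nonincreasing for every $c > 0$, which is precisely the increment characterization and thus forces $g$ to be concave.

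The only delicate point, and the step I expect to require the most care, is the treatment of the support: where $f = 0$ we have $g = -\infty$, so the increment identities must be read in the extended reals and concavity of $g$ understood as concavity of an extended-real-valued function whose effective domain, the support of $f$, is an interval. I would handle this by first establishing the increment characterization on the interior of the support, where $g$ is finite and the chord-slope monotonicity of concave functions applies verbatim, and then observing that log-concavity of a density is by definition concavity of $\log f$ on that support interval, with the convention $e^{-\infty} = 0$ rendering all ratio statements consistent at the boundary. Apart from this bookkeeping, the argument is a direct translation between monotonicity of sliding increments and concavity.
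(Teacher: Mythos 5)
Your approach is essentially the paper's: both take logarithms and reduce the monotone likelihood ratio condition to a four-point inequality for $g=\log f$ (your ``sliding increment'' condition $g(u+c)+g(u')\geq g(u)+g(u'+c)$ for $u\leq u'$, $c>0$ is exactly the paper's inequality (\ref{eq:logc1}) after relabeling), and the forward direction via chord-slope monotonicity of concave functions is the same argument the paper carries out by writing the inner points as convex combinations of the outer ones.

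There is, however, one genuine gap in your converse. The ``key fact'' you invoke --- that $g$ is concave \emph{if and only if} $u\mapsto g(u)-g(u-c)$ is nonincreasing for every fixed $c>0$ --- is false as a general statement. The increment condition for all $c>0$ is precisely Wright concavity ($g(ta+(1-t)b)+g((1-t)a+tb)\geq g(a)+g(b)$ for all $t\in[0,1]$), and Wright-concave functions are exactly the sums of concave functions and additive functions; a non-measurable additive solution of Cauchy's functional equation has constant increments $g(u)-g(u-c)=g(c)$, hence satisfies your condition, yet is wildly non-concave. To pass from the increment (or midpoint) condition to genuine concavity you must invoke some regularity --- here, measurability of $g=\log f$, which holds because $f$ is a density. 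The paper's proof makes this explicit (``This, together with measurability of $f$, implies that $f$ is log-concave''); your write-up omits it, and the omission is exactly at the one point where the argument would otherwise fail. The fix is a single sentence, but it is needed. Your attention to the support and the $g=-\infty$ bookkeeping, by contrast, is sound and is in fact handled more carefully than in the paper's own proof.
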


\begin{proof}[\bf\em Proof of Lemma \ref{lem:mlr}]

First note that for all \(x < x'\) and \(\theta < \theta'\)
\begin{align}
\label{eq:mlrdef}
\frac{f(x-\theta')}{f(x-\theta)} &\leq \frac{f(x'-\theta')}{f(x'-\theta)}
\end{align} iff \begin{align}
\label{eq:logc1}
\log f(x-\theta') + \log f(x'-\theta) \leq 
\log f(x'-\theta') + \log f(x-\theta).
\end{align} Let \(t = (x' - x)/(x' - x + \theta' - \theta)\) and note
that \begin{align*}
x - \theta &= t (x - \theta') + (1 - t) (x' - \theta) \\
x' - \theta' &= (1 - t) (x - \theta') + t (x' - \theta). 
\end{align*} Log-concavity of \(f\) implies that \begin{align*}
\log f(x - \theta) &\geq t \log f(x - \theta') + (1-t) \log f(x' -
\theta) \\
\log f(x' - \theta') &\geq (1-t) \log f(x - \theta') + t \log f(x' -
\theta).
\end{align*} Adding these yields (\ref{eq:logc1}), and we can conclude
that \(f\) being log-concave implies (\ref{eq:mlrdef}).

Now, suppose that (\ref{eq:mlrdef}) holds. Then, (\ref{eq:logc1}) holds,
and does so in particular if \(x,x', \theta, \theta'\) satisfy
\(x - \theta' = a < b = x' - \theta\) and
\(t = (x' - x)/(x' - x + \theta' - \theta) = 1/2\), so that
\(x-\theta = (a + b)/2 = x' - \theta'\). Then (\ref{eq:logc1}) becomes
\[
\log f(a) + \log f(b) \leq 2 \log f((a + b)/2).
\] This, together with measurability of \(f\), implies that \(f\) is
log-concave. \qedhere

\end{proof}

\begin{lemma}[Translation invariance]\label{lem:translation}

Let \(X\) be a random variable taking values in
\(\ensuremath{\mathbb{R}}^{n}\) for natural number \(n \geq 1\). For
arbitrary but fixed \(x,y,z \in \ensuremath{\mathbb{R}}^{n}\) we have
that
\(\Pr(X + (x-z) \in S) \leq e^{\epsilon} \Pr(X + (y - z)\in S) + \delta\)
for all measurable \(S \subseteq \ensuremath{\mathbb{R}}^{n}\) implies
\(\Pr(X + x) \in S) \leq e^{\epsilon} \Pr(X + y \in S) + \delta\) for
all measurable \(S \subseteq \ensuremath{\mathbb{R}}^{n}\).

\end{lemma}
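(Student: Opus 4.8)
The plan is to reduce the desired inequality to the hypothesis by a single change of the target set. The key observation is that translating the event by $-z$ converts the shifts $x$ and $y$ into the shifts $x - z$ and $y - z$ that appear in the hypothesis, so the hypothesis, applied to a suitably translated set, yields exactly the conclusion.

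First I would fix an arbitrary measurable set $S \subseteq \mathbb{R}^n$ and define its translate $S' := S - z = \{ u - z \mid u \in S \}$. Since the map $u \mapsto u - z$ is a homeomorphism of $\mathbb{R}^n$, it is Borel measurable with measurable inverse, so $S'$ is measurable whenever $S$ is. Next I would record the elementary equivalence $X + x \in S \iff X + (x - z) \in S'$, and likewise with $y$ in place of $x$; both follow by subtracting $z$ from the defining membership condition. Taking probabilities gives the two identities $\Pr(X + x \in S) = \Pr(X + (x - z) \in S')$ and $\Pr(X + y \in S) = \Pr(X + (y - z) \in S')$.

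Finally I would apply the hypothesis to the measurable set $S'$, namely $\Pr(X + (x - z) \in S') \leq e^{\epsilon} \Pr(X + (y - z) \in S') + \delta$, and substitute the two identities above to obtain $\Pr(X + x \in S) \leq e^{\epsilon} \Pr(X + y \in S) + \delta$. Since $S$ was arbitrary among measurable sets, this establishes the claim for all such $S$.

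The statement is essentially a bookkeeping fact, so I do not anticipate a genuine obstacle; the only point requiring any care is the measurability of $S'$, which is immediate because translations of $\mathbb{R}^n$ are measurable with measurable inverses and hence carry measurable sets to measurable sets.
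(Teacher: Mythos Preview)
Your proof is correct and follows essentially the same approach as the paper: both arguments reduce the conclusion to the hypothesis by replacing the test set $S$ with its translate $S - z$, noting that translates of measurable sets are measurable. The paper's proof is a one-line remark to this effect, while you spell out the equivalences explicitly, but the idea is identical.
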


\begin{proof}[\bf\em Proof of Lemma \ref{lem:translation}]\label{proof:lem:translation}

Follows directly from that if \(S\subseteq \ensuremath{\mathbb{R}}^{n}\)
is measurable, so is \(r + S\) for any
\(r \in \ensuremath{\mathbb{R}}^{n}\), including \(r = -z\). \qedhere

\end{proof}

\begin{lemma}[Symmetry of distance]\label{lem:symmetry}

Let \(f\) be an even density and let \(X \sim f\), let
\(\ensuremath{\mathcal{S}}\) be the measureable sets
\(S \subseteq \ensuremath{\mathbb{R}}^{n}\) for some positive integer
\(n\), \(d \geq 0\), \(\epsilon \geq 0\), \(\delta \geq 0\). Then

\begin{align*}
(\forall S \in \ensuremath{\mathcal{S}}) \Pr(d + X \in S) &- e^{\epsilon}\Pr(X \in S)
\leq \delta \\
&\iff \\
(\forall S \in \ensuremath{\mathcal{S}}) \Pr(-d + X \in S) &- e^{\epsilon}\Pr(X \in S)
\leq \delta.
\end{align*}

\end{lemma}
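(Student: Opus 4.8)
The plan is to exploit the only hypothesis that distinguishes the two sides of the equivalence, namely that $f$ is even, which makes $X$ and $-X$ equal in distribution. The single fact I will extract from this is the reflection identity $\Pr(X \in A) = \Pr(X \in -A)$ for every measurable $A$, where $-A := \{-a \mid a \in A\}$ is again measurable because $x \mapsto -x$ is a linear bijection of $\mathbb{R}^{n}$ (cf.\ the measurability remark in Lemma \ref{lem:translation}). Since the two directions of the equivalence are interchanged by replacing the shift $d$ with $-d$, and the computation below is symmetric under this replacement, it suffices to prove one implication.

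First I would fix an arbitrary measurable $S$ and set $S' := -S$, which is measurable. The goal is to rewrite both $\Pr(-d + X \in S)$ and $\Pr(X \in S)$ in terms of $S'$ with the shift $+d$, so that the assumed inequality, applied to $S'$, can be invoked verbatim. For the first term, a translation gives $\Pr(-d + X \in S) = \Pr(X \in S + d)$; applying the reflection identity with $A = S + d$ and using $-(S+d) = (-S) - d = S' - d$ yields $\Pr(X \in S' - d) = \Pr(d + X \in S')$. For the second term the reflection identity gives $\Pr(X \in S) = \Pr(X \in S')$ directly. Combining these, $\Pr(-d + X \in S) - e^{\epsilon}\Pr(X \in S) = \Pr(d + X \in S') - e^{\epsilon}\Pr(X \in S') \leq \delta$, where the final bound is the assumption applied to the measurable set $S'$. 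As $S$ was arbitrary this establishes the forward implication, and the reverse follows by the identical calculation with the roles of $+d$ and $-d$ exchanged.

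I do not anticipate a genuine obstacle: the content is entirely the bookkeeping of reflections and translations of events, with the only nontrivial input being the distributional symmetry $X \deq -X$ coming from evenness of $f$. The one point requiring a little care is that the set operations $S + d$, $-S$, and $-(S+d) = (-S) - d$ are performed on subsets of $\mathbb{R}^{n}$ and that each stays measurable; this is immediate since translation and reflection are continuous bijections. Notably, the hypothesis $d \geq 0$ is not used anywhere — the argument holds for any fixed $d$ — so I would simply treat $d$ as an arbitrary fixed vector throughout.
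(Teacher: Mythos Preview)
Your proposal is correct and takes essentially the same approach as the paper: both arguments hinge on the distributional symmetry $X \deq -X$ together with the bijection $S \mapsto -S$ on measurable sets, and the bookkeeping of translations. The paper packages this as a short chain of bi-implications (replace $S$ by $-S$, rewrite membership via $x \in -S \iff -x \in S$, then apply $X \deq -X$) rather than proving one implication and invoking the $d \leftrightarrow -d$ symmetry, but the content is identical.
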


\begin{proof}[\bf\em Proof of Lemma \ref{lem:symmetry}]\label{proof:lem:symmetry}

Note that for \(S \in \ensuremath{\mathcal{S}}\) \begin{align}
\label{eq:symmproofa}
S \in \ensuremath{\mathcal{S}} &\iff -S \in \ensuremath{\mathcal{S}} \\
\label{eq:symmproofb}
x \in S &\iff -x \in -S \\
\label{eq:symmproofc}
X &\stackrel{d}{=} -X
\end{align} where \(\stackrel{d}{=}\) means equal in distribution. Now,
\begin{align*}
(\forall S \in \ensuremath{\mathcal{S}}) \Pr(d + X \in S) &- e^{\epsilon}\Pr(X \in S)
\leq \delta \\
&\stackrel{(\text{by (\ref{eq:symmproofa}))}}{\iff} \\
(\forall S \in \ensuremath{\mathcal{S}}) \Pr(d + X \in -S) &- e^{\epsilon}\Pr(X \in -S)
\leq \delta \\
&\stackrel{(\text{by (\ref{eq:symmproofb}))}}{\iff} \\
(\forall S \in \ensuremath{\mathcal{S}}) \Pr(-(d + X) \in S) &- e^{\epsilon}\Pr(-X \in S)
\leq \delta \\
&\stackrel{(\text{by (\ref{eq:symmproofc}))}}{\iff} \\
(\forall S \in \ensuremath{\mathcal{S}}) \Pr(-d + X \in S) &- e^{\epsilon}\Pr(X \in S)
\leq \delta. \qedhere
\end{align*}

\end{proof}

The result below appears in the proof of Theorem 5 in
\citet{pmlr-v80-balle18a}.

\begin{lemma}[Removal of quantifier]\label{lem:equiv}

Let random variables \(Y_{+}\) and \(Y_{-}\) taking values from
\(\ensuremath{\mathbb{R}}^{n}\) be distributed according to densities
\(f_{+}\) and \(f_{-}\), respectively. Let \(\ensuremath{\mathcal{S}}\)
denote the collection of all measurable sets of elements in \(V\), and
let \(A^{c} = \{x \mid f_{+}(x) > e^{\epsilon} f_{-}(x) \}\). Then,
\begin{align*}
\forall S \in \ensuremath{\mathcal{S}} \; \Pr(Y_{+} \in S) &- e^{\epsilon} \Pr(Y_{-} \in S) \leq \delta \\
&\iff\\
\Pr(Y_{+} \in A^{c}) &- e^{\epsilon} \Pr(Y_{-} \in A^{c}) \leq \delta.
\end{align*}

\end{lemma}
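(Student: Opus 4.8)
The plan is to prove the two implications separately. The direction $(\Rightarrow)$ is immediate: since $A^{c}$ is measurable, it is a particular choice of $S \in \mathcal{S}$, so the universally quantified inequality specializes to $A^{c}$. The content is entirely in the reverse direction $(\Leftarrow)$, where I must show that $A^{c}$ is the \emph{worst-case} event, i.e.\ that it maximizes the quantity $\Pr(Y_{+} \in S) - e^{\epsilon}\Pr(Y_{-} \in S)$ over all measurable $S$.

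For this, I would rewrite the discrepancy as an integral against the signed measure with density $g(x) := f_{+}(x) - e^{\epsilon} f_{-}(x)$, namely
\begin{align*}
\Pr(Y_{+} \in S) - e^{\epsilon}\Pr(Y_{-} \in S) = \int_{S} g(x)\, dx.
\end{align*}
The key observation is that $A^{c} = \{x \mid f_{+}(x) > e^{\epsilon} f_{-}(x)\} = \{x \mid g(x) > 0\}$ is exactly the set on which the integrand is positive. Therefore, for any measurable $S$, the integral $\int_{S} g\, dx$ can only be increased by adding to $S$ points where $g > 0$ and removing points where $g \le 0$; formally, $\int_{S} g\, dx \le \int_{A^{c}} g\, dx$ because $\int_{S} g\, dx = \int_{S \cap A^{c}} g\, dx + \int_{S \setminus A^{c}} g\, dx \le \int_{S \cap A^{c}} g\, dx \le \int_{A^{c}} g\, dx$, using that $g \le 0$ on the complement of $A^{c}$ and $g \ge 0$ on $A^{c}$. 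Thus $A^{c}$ attains the supremum of $\int_{S} g\, dx$ over all measurable $S$.

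With this maximality established, the reverse implication follows at once: if the inequality $\Pr(Y_{+} \in A^{c}) - e^{\epsilon}\Pr(Y_{-} \in A^{c}) \le \delta$ holds, then for every measurable $S$ we have
\begin{align*}
\Pr(Y_{+} \in S) - e^{\epsilon}\Pr(Y_{-} \in S) = \int_{S} g\, dx \le \int_{A^{c}} g\, dx = \Pr(Y_{+} \in A^{c}) - e^{\epsilon}\Pr(Y_{-} \in A^{c}) \le \delta,
\end{align*}
which is the universally quantified statement. I do not expect any serious obstacle here; the only point requiring mild care is the measurability of $A^{c}$, which is guaranteed since $f_{+}$ and $f_{-}$ are (measurable) densities and $A^{c}$ is the preimage of $(0,\infty)$ under the measurable function $g$. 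The argument is the standard ``likelihood-ratio test is optimal'' reasoning and mirrors its appearance in the proof of Theorem 5 of \citet{pmlr-v80-balle18a}.
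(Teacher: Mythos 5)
Your proposal is correct and follows essentially the same argument as the paper: both decompose the discrepancy over $A^{c}$ and its complement, use that the integrand $f_{+}-e^{\epsilon}f_{-}$ is nonpositive off $A^{c}$ and nonnegative on $A^{c}$, and conclude that $A^{c}$ maximizes $\Pr(Y_{+}\in S)-e^{\epsilon}\Pr(Y_{-}\in S)$. Your write-up is marginally tidier in phrasing the bound via the signed density $g$ and in making the trivial forward direction explicit, but the substance is identical.
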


\begin{proof}[\bf\em Proof of Lemma \ref{lem:equiv}]\label{proof:lem:equiv}

Let \(A = \{x \mid f_{+}(x) \leq e^{\epsilon} f_{-}(x) \}\) and note
that \(A^{c}\) is the complement of \(A\). Then, \begin{align*}
\Pr(Y_{+} \in S \cap A) &= \int_{S \cap A}f_{+}(x) dx \\
&\leq \int_{S \cap A} e^{\epsilon}f_{-}(x) dx \\
&= e^{\epsilon}\Pr(Y_{-} \in S \cap A).
\end{align*} Using this, we get \begin{align*}
\Pr(Y_{+} &\in S) - e^{\epsilon} \Pr(Y_{-} \in S) \\
&= \Pr(Y_{+} \in S \cap A) + \Pr(Y_{+} \in S \cap A^{c}) \\
&\phantom{=}\, - e^{\epsilon} \left( \Pr(Y_{-} \in S \cap A) + \Pr(Y_{-} \in S \cap
A^{c}) \right) \\
&\leq \Pr(Y_{+} \in S \cap A^{c}) - e^{\epsilon}\Pr(Y_{-} \in S \cap
A^{c})\\
&\leq \int_{S \cap A^{c}} \left( f_{+}(x) - e^{\epsilon}f_{-}(x) \right) dx\\
&\leq \int_{A^{c}} \left( f_{+}(x) - e^{\epsilon}f_{-}(x) \right) dx\\
&= \Pr(Y_{+} \in A^{c}) - e^{\epsilon}\Pr(Y_{-} \in A^{c}).
\end{align*} The last inequality comes from
\(f_{+}(x) > e^{\epsilon}f_{-}(x)\) for \(x \in A^{c}\), so
\(f_{+}(x) - e^{\epsilon}f_{-}(x) \geq 0\) for \(x \in A^{c}\).

Since
\(\Pr(Y_{+} \in S) - e^{\epsilon} \Pr(Y_{-} \in S) \leq \Pr(Y_{+} \in A^{c}) - e^{\epsilon}\Pr(Y_{-} \in A^{c})\)
the Lemma follows from
\(\Pr(Y_{+} \in A^{c}) - e^{\epsilon}\Pr(Y_{-} \in A^{c}) \leq \delta\)
implies
\(\Pr(Y_{+} \in S) - e^{\epsilon} \Pr(Y_{-} \in S) \leq \delta\).
\qedhere

\end{proof}

\hypertarget{app:b}{%
\subsection{\texorpdfstring{Proofs from Section
\ref{sec:contrib}}{Proofs from Section }}\label{app:b}}

This section contains the proof of the first of our main results given
in Lemma \ref{lem:neccsuff} regarding the privacy of the mechnism
\(M(d) = q(d) + sX\).

The proof can be outlined as follows. Letting
\(\ensuremath{\mathcal{S}}\) denote the measureable sets, the first step
in the proof is the removal of the quantifier over
\(\ensuremath{\mathcal{S}}\) in the privacy criterion \begin{align}
\label{eq:crite}
(\forall S \in \ensuremath{\mathcal{S}}) \Pr(M(d) \in S) - e^{\epsilon}\Pr(M(d') \in S) \leq \delta
\end{align} given by Lemma \ref{lem:equiv} above. This removal shows
that that there exists a partition of the sample space into regions
\(A\) and \(A^{c}\) such that \[
\Pr(M(d) \in S) > e^{\epsilon} \Pr(M(d') \in S)
\] if and only if \(S \cap A^{c} \neq \emptyset\). The next step uses
log-concavity of the density \(f\) of \(X\), particularly the
monotonicity of likelihood ratios in the translation family of \(f\)
given in Lemma \ref{lem:mlr}, to show in Lemma \ref{lem:probscrit} that
we can treat \(A^{c}\) as an interval \((t, \infty)\) where threshold
\(t\) depends on \(\epsilon\), \(s\) and \(d=|q(d) - q(d')|\). This
allows stating the criterion (\ref{eq:crite}) above in terms of \[
\Pr(d + sX > t) - e^{\epsilon}\Pr(sX > t) \leq \delta,
\] which is equivalent to the statement in Lemma \ref{lem:neccsuff}. The
proof of Lemma \ref{lem:neccsuff} is completed by showing monotonicity
of the right hand side in the above display in Lemma
\ref{lem:monotonicity}.

The above steps are shown using elementary means and their
demonstrations are relatively self-contained, which we consider a
benefit. Alternatively, machinery based on conjugate duality and
hypothesis testing as employed by
\citet{DBLP:journals/corr/abs-1905-02383} (possibly using a combination
of Propositions A.3 and 2.12 as a starting point) could be employed. Our
elementary analysis, particularly the monotonicity Lemma
\ref{lem:monotonicity}, suggests a strategy for numerically computing
scales for given privacy parameters.\footnote{Implemented in
  https://github.com/laats/SubbotinMechanism} Furthermore, it allows for
the simple reduction to the one dimensional case employed in the proof
of the extension to the multidimensional case in Lemma \ref{lem:iso}.

\begin{remark}[Setup]\label{rem:setup}

Throughout this section, we let \(f(x) = e^{-\psi(x)}\) be a probability
density where \(\psi : \ensuremath{\mathbb{R}}\to (-\infty, \infty]\) is
convex and even, which we can take as a definition of log-concavity of
\(f\). We will adopt the convention that \(f\) lower semi-continuous,
i.e., the set \(\{x \in \ensuremath{\mathbb{R}}\mid f(x) > t\}\) for
some threshold \(t\) is open, and the set
\(\{x \in \ensuremath{\mathbb{R}}\mid f(x) \leq t\}\) is closed. Now,
\(\operatorname{Supp}(f) = (-a, a)\) for some \(a \in (0, \infty]\)
where
\(\operatorname{Supp}(f) := \{x \in \ensuremath{\mathbb{R}}\mid f(x) > 0\}\)
denotes the support of \(f\). \qedhere

\end{remark}

We first prove a small technical result we rely on later.

\begin{lemma}\label{lem:supports}

Let \(f\) with support \((-a, a)\) be as in the setup Remark
\ref{rem:setup}, \(\infty > d \geq 0\), and let
\(f_{+}(x) := f(x - d)\). Then \begin{align*}
d \geq 2a &\implies \operatorname{Supp}(f) \cap \operatorname{Supp}(f_{+}) = \emptyset \\
d < 2a &\implies \left[ \frac{d}{2}, a\right) \in \operatorname{Supp}(f) \cap \operatorname{Supp}(f_{+})
\end{align*} both hold.

\end{lemma}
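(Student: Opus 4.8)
The plan is to reduce the whole statement to a one-line computation of the two supports together with a case split on $d\geq 2a$ versus $d<2a$. The only ingredient I need from the setup Remark~\ref{rem:setup} is that $\supp(f)$ is the open symmetric interval $(-a,a)$, which follows from evenness and convexity of $\psi$ together with the lower semi-continuity convention. I read the displayed $\in$ in the second implication as the inclusion $\subseteq$.

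First I would note that translating $f$ by $d$ translates its support, so that $\supp(f_{+}) = d + \supp(f) = (d-a,\,d+a)$. Next I would intersect this with $\supp(f) = (-a,a)$: since $d\geq 0$ gives $d-a\geq -a$ and $d+a\geq a$, the left endpoint of the intersection is $d-a$ and the right endpoint is $a$, so
\[
\supp(f)\cap\supp(f_{+}) = (d-a,\,a).
\]
Both implications then follow by reading off this single interval in each case.

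For $d\geq 2a$ (which can only occur when $a<\infty$, since $d$ is finite) I would observe $d-a\geq a$, so that $(d-a,a)$ has its left endpoint at least its right endpoint and is therefore empty, giving $\supp(f)\cap\supp(f_{+})=\emptyset$. For $d<2a$ I would observe $d-a<a$, so the interval is a nonempty open interval, and then verify the two inequalities $d/2<a$ and $d/2>d-a$ (both equivalent to $d<2a$) to conclude $[d/2,a)\subseteq(d-a,a)=\supp(f)\cap\supp(f_{+})$, which is the claimed inclusion.

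I do not expect a genuine obstacle here; the statement is essentially careful bookkeeping of interval endpoints. The only points that need attention are keeping the endpoints open so that the empty intersection is correctly detected when $d-a\geq a$, and noting that the first case is vacuous in the unbounded-support case $a=\infty$, so that both implications indeed hold simultaneously as the lemma asserts.
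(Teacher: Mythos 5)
Your proposal is correct and takes essentially the same approach as the paper: both compute $\operatorname{Supp}(f_{+})$ as the translated interval $(-a+d,\,a+d)$ and settle the two cases by comparing endpoints of its intersection with $(-a,a)$. The only cosmetic difference is that the paper certifies $d/2$ lies in the intersection via evenness ($f(-d/2)=f_{+}(d/2)$) before extending to $[d/2,a)$, whereas you read the whole interval $(d-a,a)$ off directly; both are valid.
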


\begin{proof}[\bf\em Proof of Lemma \ref{lem:supports}]

Note that \(\operatorname{Supp}(f) = (-a, a)\) and
\(\operatorname{Supp}(f_{+}) = (-a + d, a + d)\). When \(d \geq 2a\),
then \(-a + d \geq a\), so
\(\operatorname{Supp}(f) \cap \operatorname{Supp}(f_{+}) = \emptyset\).
On the other hand, when \(d < 2a\), then \(d/2 < a\), so
\(\{-d/2, d/2\} \subseteq \operatorname{Supp}(f)\). From
\(f(-d/2) = f(d/2 - d) = f_{+}(d/2)\), we get that
\(d/2 \in \operatorname{Supp}(f) \cap \operatorname{Supp}(f_{+})\). From
the overlap of supports we can conlude
\(\left[ d/2, a\right) \in \operatorname{Supp}(f) \cap \operatorname{Supp}(f_{+})\).
\qedhere

\end{proof}

We now follow the proof outline and show that the set \(A^{c}\) can be
expressed in terms of a threshold.

\begin{lemma}[Thresholding]\label{lem:threshold}

Let \(f\) with support \((-a, a)\) be as in the setup Remark
\ref{rem:setup}, \(\infty > d \geq 0\), \(\epsilon \geq 0\), and let
\(f_{+}(x) := f(x - d)\). Then for \begin{align*}
A^{c} &:= \{ z \in \ensuremath{\mathbb{R}}\mid f_{+}(z) > e^{\epsilon} f(z)\} \\
A^{c}_{-} &:= A^{c} \cap \operatorname{Supp}(f) \\
A^{c}_{+} &:= \operatorname{Supp}(f_{+}) - \operatorname{Supp}(f) \\
\end{align*} each of the following hold.

\begin{enumerate}
\def\labelenumi{\arabic{enumi}.}
\tightlist
\item
  \(A^{c}_{-} \cap A^{c}_{+} = \emptyset\)
\item
  \(A^{c} = A^{c}_{-} \cup A^{c}_{+}\)
\item
  if \(d \geq 2a\) then \(A^{c}_{-} = \emptyset\) and
  \(A^{c}_{+} = \operatorname{Supp}(f_{+})\)
\item
  if \(d < 2a\) then, \begin{align*}
  A^{c}_{-} = \emptyset &\implies A^{c} = \{ z \in \operatorname{Supp}(f_{+}) \mid z
  \geq a \}, \\
  A^{c}_{-} \neq \emptyset &\implies A^{c} = \left\{ z \in \left( \frac{d}{2},a \right) \mid z
  > z^{*} \right\}
  \end{align*} where \[
  z^{*} = \max \left\{  z \in \left( \frac{d}{2},a \right) \mid f(z - d) =
  e^{\epsilon} f(z) \right\}.
  \]
\end{enumerate}

\end{lemma}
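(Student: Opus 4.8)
The plan is to dispatch the bookkeeping claims (1)--(3) with elementary support computations and to reserve the real work for claim (4), where log-concavity enters. For claims (1) and (2), I would first note that any $z\in A^{c}$ satisfies $f_{+}(z)>e^{\epsilon}f(z)\ge 0$, hence $f_{+}(z)>0$ and $z\in\operatorname{Supp}(f_{+})$; thus $A^{c}\subseteq\operatorname{Supp}(f_{+})$. Splitting $A^{c}$ according to membership in $\operatorname{Supp}(f)$ produces $A^{c}_{-}$ on one side and, on the other, points of $\operatorname{Supp}(f_{+})\setminus\operatorname{Supp}(f)$; any such point has $f(z)=0<f_{+}(z)$ and so automatically lies in $A^{c}$, i.e. $A^{c}_{+}\subseteq A^{c}$. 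This gives $A^{c}=A^{c}_{-}\cup A^{c}_{+}$, and (1) is immediate since $A^{c}_{-}\subseteq\operatorname{Supp}(f)$ while $A^{c}_{+}\cap\operatorname{Supp}(f)=\emptyset$. Claim (3) then follows from Lemma~\ref{lem:supports}: when $d\ge 2a$ the supports are disjoint, so every $z\in\operatorname{Supp}(f_{+})$ has $f(z)=0<f_{+}(z)$ and hence $z\in A^{c}$; combined with $A^{c}\subseteq\operatorname{Supp}(f_{+})$ this forces $A^{c}=\operatorname{Supp}(f_{+})$, whence $A^{c}_{-}=\emptyset$ and $A^{c}_{+}=\operatorname{Supp}(f_{+})$.

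For claim (4) I would again invoke Lemma~\ref{lem:supports} to record that $d<2a$ makes the overlap $\operatorname{Supp}(f)\cap\operatorname{Supp}(f_{+})=(-a+d,a)$ nonempty and contain $[d/2,a)$, while $\operatorname{Supp}(f_{+})\setminus\operatorname{Supp}(f)=[a,a+d)$, so that $A^{c}_{+}=[a,a+d)$. On the overlap I define the likelihood ratio $g(z):=f_{+}(z)/f(z)=f(z-d)/f(z)$. Since $f$ is log-concave, Lemma~\ref{lem:mlr} (with $\theta=0$, $\theta'=d$) shows $g$ is nondecreasing, and finiteness and convexity of $\psi$ on $(-a,a)$ make $f$, hence $g$, continuous there. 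Evenness gives $g(d/2)=f(-d/2)/f(d/2)=1$, and the boundary behavior is $g(z)\to 0$ as $z\to(-a+d)^{+}$ and $g(z)\to\infty$ as $z\to a^{-}$, since the numerator stays positive while the denominator vanishes.

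The sub-case $A^{c}_{-}=\emptyset$ is then immediate: $A^{c}=A^{c}_{+}=[a,a+d)=\{z\in\operatorname{Supp}(f_{+})\mid z\ge a\}$. In the sub-case $A^{c}_{-}\neq\emptyset$, because $e^{\epsilon}\ge 1=g(d/2)$ and $g$ is nondecreasing, no $z\le d/2$ in the overlap can satisfy $g(z)>e^{\epsilon}$, so $A^{c}_{-}\subseteq(d/2,a)$; choosing $z_{1}$ with $g(z_{1})>e^{\epsilon}$ and combining $g(d/2)\le e^{\epsilon}$ with the intermediate value theorem yields a solution of $g(z)=e^{\epsilon}$ in $(d/2,a)$. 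The blow-up $g\to\infty$ near $a$ keeps all such solutions bounded away from $a$, and continuity makes the solution set closed, so $z^{*}=\max\{z\in(d/2,a)\mid f(z-d)=e^{\epsilon}f(z)\}$ is attained. Monotonicity of $g$ then gives $A^{c}_{-}=(z^{*},a)$, and adjoining $A^{c}_{+}=[a,a+d)$ yields $A^{c}=(z^{*},a+d)=\{z\in\operatorname{Supp}(f_{+})\mid z>z^{*}\}$, which for densities supported on all of $\mathbb{R}$ (the case $a=\infty$) is exactly $\{z\in(d/2,a)\mid z>z^{*}\}$.

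The main obstacle is this last sub-case: everything hinges on converting the abstract set $A^{c}$ into a single threshold, which requires the monotone likelihood ratio property to control $g$ globally on the overlap, and the boundary blow-up plus continuity to guarantee that $z^{*}$ exists and is attained. The delicate accounting is at the right endpoint, which is open for $\operatorname{Supp}(f)$ but belongs to $\operatorname{Supp}(f_{+})$, so that $z=a$ is placed in $A^{c}$ through $A^{c}_{+}$ rather than $A^{c}_{-}$. The degenerate case $\epsilon=0$, where $g(d/2)=e^{\epsilon}$ already holds, needs the extra observation that $\log g(z)=\psi(z)-\psi(z-d)$ is odd about $d/2$ (by evenness of $\psi$), which keeps the crossing well-defined as a supremum.
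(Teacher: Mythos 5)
Your argument follows the paper's proof essentially step for step: claims 1--3 by the same support bookkeeping via Lemma \ref{lem:supports}, and claim 4 by combining the monotone likelihood ratio (Lemma \ref{lem:mlr}) with $g(d/2)=1\leq e^{\epsilon}$ to confine $A^{c}_{-}$ to $(d/2,a)$ and then extracting the threshold $z^{*}$. Two remarks. First, your boundary claim that $g(z)\to\infty$ as $z\to a^{-}$ is false in general for $a<\infty$: the setup only assumes $f$ is log-concave with open support $(-a,a)$, so $f$ need not vanish at the endpoint of its support (for a truncated Laplace, $g$ is constant near $a$). This does not break the proof, because the fact you actually need --- that the solution set of $g(z)=e^{\epsilon}$ stays away from $a$ --- already follows from the existence of $z_{1}\in(d/2,a)$ with $g(z_{1})>e^{\epsilon}$ together with monotonicity of $g$; cite that and drop the blow-up. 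Second, your observation that in the sub-case $A^{c}_{-}\neq\emptyset$ the full set is $A^{c}=\{z\in\operatorname{Supp}(f_{+})\mid z>z^{*}\}$, which strictly contains the stated $\{z\in(d/2,a)\mid z>z^{*}\}$ by the piece $A^{c}_{+}=[a,a+d)$ whenever $a<\infty$ and $d>0$, is correct; this is the reading the paper itself relies on when it passes to $P(Y_{+}>z^{*})$ in Lemma \ref{lem:probscrit}, so you have caught a genuine imprecision in the statement rather than introduced an error.
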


\begin{proof}[\bf\em Proof of Lemma \ref{lem:threshold}]

Note

\begin{enumerate}
\def\labelenumi{\alph{enumi}.}
\tightlist
\item
  Necessarily, \(A^{c}_{+} \cap \operatorname{Supp}(f) = \emptyset\).
\item
  \(z \in A^{c} \implies z \in \operatorname{Supp}(f_{+})\) since we
  need \(f_{+}(z) > 0\) for \(f_{+}(z) > e^{\epsilon} f(z)\).
\item
  \(z \in A^{c}_{+} \implies z \in A^{c}\) since here \(f(z) = 0\) and
  \(f_{+}(z) > 0\).
\end{enumerate}

We conclude 1. from a. and
\(A^{c}_{-} \subseteq \operatorname{Supp}(f)\).

We conclude 2. from b. and c.~together with 1.

When \(d \geq 2a\), we have from Lemma \ref{lem:supports} that
\(\operatorname{Supp}(f) \cap \operatorname{Supp}(f_{+}) = \emptyset\).
From b. and c.~we then conclude 3.

When \(d < 2a\), the first implication of 4. follows from 2. Now assume
\(A^{c}_{-} \neq \emptyset\). We can then write
\(A^{c}_{-} = \left\{ z \in \operatorname{Supp}(f) \mid \frac{f_{+}(z)}{f(z)} > e^{\epsilon} \right\}\).
From Lemma \ref{lem:supports} we have that
\(d/2 \in \operatorname{Supp}(f) \cap \operatorname{Supp}(f_{+})\).
Since \(\frac{f_{+}(d/2)}{f(d/2)} = 1 \leq e^{\epsilon}\), we have that
\(d/2 \not\in A^{c}_{-}\). By Lemma \ref{lem:mlr} we then get
\(x \in A^{c}_{-} \implies x > d/2\). Since
\(A^{c}_{-} \neq \emptyset\), there exists a \(z \in (d/2, a)\) such
that \(f_{+}(z) > e^{\epsilon} f(z)\). Therefore, by lower
semi-continuity of \(f\) we have \(\sup A = \max A\) for
\(A = \{ z \in (d/2, a) \mid f_{+}(z) = e^{\epsilon} f(z) \}\). From
this, we conclude the second implication in 4. \qedhere

\end{proof}

\begin{lemma}[Privacy criterion]\label{lem:probscrit}

Let \(f\) with support \((-a, a)\) be as in the setup Remark
\ref{rem:setup}, \(\infty > d \geq 0\), \(\epsilon \geq 0\), and let
\(f_{+}(x) := f(x - d)\). Furthermore, let
\(A^{c} = \{ z \in \ensuremath{\mathbb{R}}\mid f_{+}(z) > e^{\epsilon} f(z) \}\),
and let \(Y_{+} \sim f_{+}\) and \(Y \sim f\) be two random variables.
Then both the following hold.

\begin{enumerate}
\def\labelenumi{\arabic{enumi}.}
\tightlist
\item
  If \(d \geq 2a\), then \[
  P(Y_{+} \in A^{c}) - e^{\epsilon} P(Y \in A^{c}) = P(Y_{+} \in A^{c})
  = 1.
  \]
\item
  If \(d < 2a\), then for \(\delta \geq 0\) \begin{align*}
  P(Y_{+} \in A^{c}) &- e^{\epsilon} P(Y \in A^{c}) \leq \delta \\
  &\iff \\
  P(Y_{+} > t) &- e^{\epsilon} P(Y > t) \leq \delta \\
  \end{align*} where \[
  t := \sup \{ z < a \mid f_{+}(z) \leq e^{\epsilon} f(z)\}.
  \]
\end{enumerate}

\end{lemma}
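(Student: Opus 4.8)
The plan is to use the monotone likelihood ratio property of Lemma \ref{lem:mlr} together with the structural description of $A^c$ in Lemma \ref{lem:threshold} in order to replace the event $A^c$ by a single half-line $(t,\infty)$, as seen by both $Y$ and $Y_+$. I first record the relevant supports: $Y \sim f$ has support $(-a,a)$ while $Y_+ \sim f_+$ has support $(-a+d, a+d)$, and by Lemma \ref{lem:mlr} applied to the translation family with $\theta = 0 \le \theta' = d$, the ratio $f_+(z)/f(z) = f(z-d)/f(z)$ is non-decreasing wherever it is defined. This monotonicity is the structural fact that forces $A^c$ to be an ``upward'' set expressible through one threshold, and it underlies both parts.

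For Part 1, I would argue directly. When $d \ge 2a$, Lemma \ref{lem:threshold} (parts 2 and 3) gives $A^c = A^c_- \cup A^c_+ = \emptyset \cup \operatorname{Supp}(f_+) = \operatorname{Supp}(f_+)$, so $P(Y_+ \in A^c) = 1$. By Lemma \ref{lem:supports} the supports of $f$ and $f_+$ are disjoint in this regime, hence $P(Y \in A^c) = P(Y \in \operatorname{Supp}(f_+)) = 0$. Therefore $P(Y_+ \in A^c) - e^\epsilon P(Y \in A^c) = 1$, as claimed.

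For Part 2 the target is the pair of clean identities $P(Y \in A^c) = P(Y > t)$ and $P(Y_+ \in A^c) = P(Y_+ > t)$; once these hold, both sides of the asserted equivalence literally become $P(Y_+ > t) - e^\epsilon P(Y > t) \le \delta$, and the $\iff$ is immediate. I would establish the identities via the case split of Lemma \ref{lem:threshold}, part 4. If $A^c_- = \emptyset$, then $f_+ \le e^\epsilon f$ throughout $(-a,a)$, so the defining set of $t$ contains $(-a,a)$ and thus $t = a$; here $A^c = [a, a+d)$, whence $P(Y_+ \in A^c) = P(Y_+ > a) = P(Y_+ > t)$ (a single point is null) and $P(Y \in A^c) = 0 = P(Y > t)$, since $Y$ places no mass at or above $a$. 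If instead $A^c_- \ne \emptyset$, the monotone ratio crosses $e^\epsilon$ at the largest equality point $z^*$ of Lemma \ref{lem:threshold}, and lower semicontinuity (Remark \ref{rem:setup}) guarantees this supremum is attained, so $t = z^*$; decomposing $A^c = A^c_- \cup A^c_+$ and using that $A^c_+ = [a, a+d)$ is disjoint from $\operatorname{Supp}(f)$, I obtain $P(Y \in A^c) = P(Y \in (z^*, a)) = P(Y > t)$ and $P(Y_+ \in A^c) = P(Y_+ \in (z^*, a)) + P(Y_+ \in [a,a+d)) = P(Y_+ > z^*) = P(Y_+ > t)$.

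I expect the main obstacle to be the bookkeeping around the ``exterior'' region $A^c_+ = [a, a+d)$, where $f$ vanishes but $f_+$ does not: this region contributes to $P(Y_+ \in A^c)$ yet not to $P(Y \in A^c)$, and it is precisely what lets the single tail $\{>t\}$ account consistently for the mass of $Y_+$ beyond $a$ while leaving $Y$ untouched. Matching the threshold $t$ (defined through the $\le$ inequality on $z < a$) with the crossover $z^*$ of the likelihood ratio, and verifying $t = a$ in the degenerate subcase, are the points requiring care; both rest on the monotonicity from Lemma \ref{lem:mlr} and on the attainment of the supremum via lower semicontinuity.
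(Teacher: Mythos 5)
Your proof is correct and follows essentially the same route as the paper's: both invoke Lemma \ref{lem:threshold} to reduce $A^{c}$ to a threshold set, split on whether $A^{c}_{-}$ is empty, identify the threshold with $a$ or $z^{*}$ respectively, and use that $Y$ carries no mass on $A^{c}_{+}$. If anything you prove the slightly stronger exact identities $P(Y_{+}\in A^{c})=P(Y_{+}>t)$ and $P(Y\in A^{c})=P(Y>t)$ rather than just the equivalence of the $\delta$-inequalities, and your explicit bookkeeping of the exterior piece $A^{c}_{+}=[a,a+d)$ in the $A^{c}_{-}\neq\emptyset$ case is a touch more careful than the paper's.
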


\begin{proof}[\bf\em Proof of Lemma \ref{lem:probscrit}]

Let \(A^{c}_{-}\) and \(A^{c}_{+}\) be defined as in Lemma
\ref{lem:threshold}.

Point 1. follows directly from points 2. and 3. in Lemma
\ref{lem:threshold}.

From points 1. and 3. in Lemma \ref{lem:threshold} we have that \[
P(Y \in A^{c}) = P(Y \in A^{c}_{-}) + P(Y \in A^{c}_{+}) = P(Y \in A^{c}_{-}).
\] Also, note that
\(A^{c}_{-} = \emptyset \iff \left( \forall z \in (\frac{d}{2}, a) \right) f_{+}(z) \leq e^{\epsilon}f(z)\).

If \(A^{c}_{-} = \emptyset\) then, by the above, \begin{align*}
P(Y_{+} \in A^{c}) &- e^{\epsilon} P(Y \in A^{c}) \leq \delta \\
&\iff \\
P(Y_{+} \in A^{c}) &\leq \delta \\
&\iff \\
P(Y_{+} \geq a) &\leq \delta
\end{align*} where the last equivalence is due to Lemma
\ref{lem:threshold} point 4. Since \(P(Y_{-} \geq a) = 0\) always, we
conclude that if \(A^{c}_{-} = \emptyset\), then \begin{align*}
P(Y_{+} \in A^{c}) &- e^{\epsilon} P(Y \in A^{c}) \leq \delta \\
&\iff \\
P(Y_{+} \geq a) &- e^{\epsilon} P(Y \geq a)\leq \delta.
\end{align*} If \(A^{c}_{-} \neq \emptyset\), then by Lemma
\ref{lem:threshold} point 4. \begin{align*}
P(Y_{+} \in A^{c}) &= P(Y_{+} > z^{*}) \\
P(Y_{-} \in A^{c}) &= P(Y_{-} > z^{*})
\end{align*} for
\(z^{*} = \max \left\{ z \in \left(\frac{d}{2}, a\right) \mid f_{+}(z) \leq e^{\epsilon}f(z) \right\}\).
We note that in both cases above, the thresholds \(a\) and \(z^{*}\),
respectively, can be expressed as
\(\sup \{ z < a \mid f_{+}(z) \leq e^{\epsilon} f(z)\}\).

We conclude the proof by noting that \(P(Y_{+} \geq a) = P(Y_{+} > a)\).
\qedhere

\end{proof}

\begin{lemma}[Monotonicity of privacy criterion]\label{lem:monotonicity}

Let \(f\) with support \((-a, a)\) be as in the setup Remark
\ref{rem:setup}, \(\infty > d > 0\), \(\epsilon \geq 0\), \(s > 0\).
Also, let \(F(x) = \int_{-\infty}^{x}f(y) dy\). Then, the function \[
g(\epsilon, d, s):=F \left( \frac{d - t}{s} \right) - 
e^{\epsilon} F \left( -\frac{t}{s} \right)
\] for \[
t := \sup \left\{  z < as \mid f\left( \frac{z - d}{s} \right)
\leq  e^{\epsilon} f\left( \frac{z}{s} \right)\right\} 
\] is

\begin{enumerate}
\def\labelenumi{\alph{enumi}.}
\tightlist
\item
  increasing in \(d\) and strictly so if \(t < as\)
\item
  decreasing in \(s\) and strictly so if \(t < as\)
\item
  decreasing \(\epsilon\) and strictly so if \(t < as\)
\item
  non-negative, and
\item
  zero if and only if \(t = \infty\).
\end{enumerate}

\end{lemma}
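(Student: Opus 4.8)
The plan is to recast $g$ as the maximum over a free threshold of a single auxiliary function, after which every claim reads off the shape of that function. Exploiting evenness (so $F(-u)=1-F(u)$ and $f(-u)=f(u)$), I rewrite $g$ in terms of tails: set
\[
\phi(z) := F\!\left(\tfrac{d-z}{s}\right)-e^{\epsilon}F\!\left(-\tfrac{z}{s}\right)=\Pr(sX+d>z)-e^{\epsilon}\Pr(sX>z),
\]
so that $\phi'(z)=\tfrac1s\big(e^{\epsilon}f(\tfrac{z}{s})-f(\tfrac{z-d}{s})\big)$. By the monotone likelihood ratio property of Lemma \ref{lem:mlr}, the ratio $f(\tfrac{z-d}{s})/f(\tfrac{z}{s})$ is nondecreasing in $z$, so $\phi'\ge 0$ exactly on the set defining $t$ and $\phi'\le 0$ above it; thus $\phi$ rises up to $z=t$ and falls thereafter, and continuity of $F$ gives $g(\epsilon,d,s)=\max_z\phi(z)$, with the convention that this is $\lim_{z\to\infty}\phi(z)$ when $t=\infty$. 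Establishing this representation carefully --- matching the sign of $\phi'$ to the definition of $t$ and folding in the case distinctions of Lemmas \ref{lem:threshold} and \ref{lem:probscrit} --- is the \textbf{main piece of work}; once it is in hand the five properties are short.

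Properties (d) and (e) follow from the tails of $\phi$, both of which vanish: $\lim_{z\to\infty}\phi(z)=0$. Since $\phi$ is nonincreasing past $t$, $g=\phi(t)\ge\lim_{z\to\infty}\phi(z)=0$, giving (d). For (e), if $t=\infty$ (possible only when $a=\infty$ and the likelihood ratio never exceeds $e^{\epsilon}$) then $\phi$ is nondecreasing with limit $0$, forcing $g=0$. Conversely, suppose $t$ is finite. If $t<as$, the sign computation shows $\phi$ is strictly decreasing on $(t,\,d+as)$ and then constant at its limit $0$, so $\phi(t)=0$ is impossible and $g>0$. If instead $t=as$, then $g=F(\tfrac{d}{s}-a)-e^{\epsilon}F(-a)=F(\tfrac{d}{s}-a)$, which is positive because $\tfrac{d}{s}-a>-a$ places the upper limit strictly inside the support. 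Hence $g=0$ iff $t=\infty$.

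For the monotonicity claims I avoid differentiating the implicitly defined $t$ by arguing pointwise in $z$ and then passing to the maximum. For (a), each $\phi(z)$ is nondecreasing in $d$ since $F$ is nondecreasing and $\tfrac{d-z}{s}$ increases with $d$; taking $\max_z$ preserves this, so $g$ is nondecreasing in $d$. For (c), the coefficient of $e^{\epsilon}$ in $\phi(z)$ is $-F(-\tfrac{z}{s})\le 0$, so each $\phi(z)$ is nonincreasing in $\epsilon$, and again $\max_z$ preserves this. For (b) I use the scaling identity $g(\epsilon,d,s)=g(\epsilon,d/s,1)$, obtained by the substitution $z=sw$ in the definitions of $t$ and $g$; since $d/s$ decreases as $s$ grows, (b) is immediate from (a).

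Strictness under $t<as$ comes from evaluating at the maximizer of the smaller instance and checking that the relevant density mass is positive. First note that $t<as$ forces $d<2as$ (Lemma \ref{lem:supports} and Remark \ref{rem:epsdiff}: $d\ge 2as$ makes the supports disjoint and pushes $t$ to $as$) and $A^{c}_{-}\neq\emptyset$, so by Lemma \ref{lem:threshold} the maximizer satisfies $t\in(\tfrac{d}{2},as)$, which places $\tfrac{d-t}{s}$ strictly inside $(-a,a)$. For (c), comparing $\epsilon_1<\epsilon_2$ (where $t$ is nonincreasing in $\epsilon$, so $t<as$ persists) gives $g(\epsilon_2)\le\phi(t;\epsilon_2)<\phi(t;\epsilon_1)\le g(\epsilon_1)$ at the $\epsilon_2$-maximizer $t$, the strict step being $-e^{\epsilon_2}F(-\tfrac{t}{s})<-e^{\epsilon_1}F(-\tfrac{t}{s})$, valid since $F(-\tfrac{t}{s})>0\iff t<as$. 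For (a), comparing $d_1<d_2$ at the $d_1$-maximizer $t_1$ gives $g(d_2)\ge\phi(t_1;d_2)>\phi(t_1;d_1)=g(d_1)$, the strict step being $\int_{(d_1-t_1)/s}^{(d_2-t_1)/s}f>0$, which holds because $\tfrac{d_1-t_1}{s}\in(-a,a)$ and the interval extends rightward. Strictness for (b) then transfers through the scaling identity. I expect the bookkeeping that locates $t$ inside the support, together with the clean proof of the max-representation, to be the only delicate points.
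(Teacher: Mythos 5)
Your proof is correct, but it takes a genuinely different route from the paper's. The paper splits on $t=as$ versus $t<as$ and, in the latter case, differentiates $g$ in each of $d$, $s$, $\epsilon$ by the chain rule with $t$ treated as an implicit function; the terms containing derivatives of $t$ vanish because $f(\alpha)-e^{\epsilon}f(\beta)=0$ at the threshold, and the surviving terms have definite signs, while parts d.\ and e.\ come from an integral representation of $F(-\alpha)/F(-\beta)$ combined with the monotone likelihood ratio. Your envelope representation $g=\max_{z}\phi(z)$ packages the same first-order information (the vanishing of $f(\alpha)-e^{\epsilon}f(\beta)$ at the maximizer is precisely why the paper's $\partial t$ terms drop out), but it lets you deduce monotonicity in $d$ and $\epsilon$ from pointwise monotonicity of $\phi(z;\cdot)$ without ever differentiating $t$ --- a genuine advantage, since the paper's chain-rule step tacitly assumes $t(\epsilon,d,s)$ is differentiable, which is never justified there. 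Your reduction of b.\ to a.\ via the scaling identity $g(\epsilon,d,s)=g(\epsilon,d/s,1)$ is likewise cleaner than the paper's direct computation and anticipates the argument the paper uses for Lemma~\ref{lem:standards}. Your strictness arguments and the treatment of d.\ and e.\ via the tails of $\phi$ check out. One slip to correct: $t$ is non\emph{de}creasing in $\epsilon$, not nonincreasing as your parenthetical claims, because the set $\{z<as\mid f(\frac{z-d}{s})\le e^{\epsilon}f(\frac{z}{s})\}$ grows with $\epsilon$; this does not damage your argument for c., which evaluates at the maximizer for the larger $\epsilon$ and only needs $t<as$ there (equivalently $F(-t/s)>0$), but the stated justification that ``$t<as$ persists'' should be rephrased accordingly.
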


\begin{proof}[\bf\em Proof of Lemma \ref{lem:monotonicity}]

We can partition the parameter region
\(W \subseteq [0, \infty)^{2} \times (0, \infty)\) into two disjoint
regions: \(W_{=}\) when \(t = as\) and \(W_{<}\) when \(t < as\).

First assume \((\epsilon, d, s) \in W_{=}\). If \(a = \infty\), then
\(g = 0\), and a. -- e. hold. Now, assume \(a < \infty\). Then
\(g(\epsilon, d, s) = F \left( \frac{d - as}{s} \right)\) and does not
depend on \(\epsilon\), i.e., \(g\) is increasing (but not strictly) in
\(\epsilon\). Note \(F(z)\) is increasing in \(z\). Therefore, since
\(\frac{d - as}{s} = \frac{d}{s} - a\) is increasing in \(d\) and
decreasing in \(s\), we have that \(g\) is increasing in \(d\) and
decreasing in \(s\). Since \(F\) is non-negative, \(g\) is also
non-negative, and also zero only if \(d = 0\). Consequently, a. -- e.
all hold in this case.

Now assume \((\epsilon, d, s) \in W_{<}\). Let
\(\alpha=\frac{t - d}{s}\) and \(\beta= \frac{t}{s}\). Then viewing
\(t = t(\epsilon, d, s) < as\) and applying the chain rule for
differentiation and rearranging, we get \begin{align*}
\frac{\partial}{\partial d} g =
&-\frac{1}{s}\left( f \left(\alpha \right) - {e}^{\epsilon}
f \left(\beta \right) \right) \left(\frac{\partial}{\partial d}t
\left(\epsilon , d , s\right)\right)\\
&+\frac{f \left(\alpha \right)}{2 s}+\frac{e^{\epsilon} f
\left(\beta \right)}{2 s}\\
\frac{\partial}{\partial s} g =
&\frac{1}{s^{2}} \left( f \left(\alpha \right)-{e}^{\epsilon} f \left(\beta \right)\right) t \left(\epsilon ,
d , s\right)\\
&- \frac{1}{s} \left(f \left(\alpha \right)-{e}^{\epsilon} f \left(\beta \right)\right)
\left(\frac{\partial}{\partial s}t \left(\epsilon , d ,
s\right)\right)\\
&-\frac{f \left(\alpha \right) d}{2 s^{2}}-\frac{{
e}^{\epsilon} f \left(\beta \right) d}{2 s^{2}},\\
\frac{\partial}{\partial \epsilon} g =
&\;e^{\epsilon}(F(\beta) - 1) - \frac{1}{s} \left(\frac{\partial}{\partial
\epsilon} t(\epsilon, d, s)\right) \left( f(\alpha) - e^{\epsilon} f(\beta)\right)
\end{align*} Since \(f\) is lower semicontinous, we have that
\(f(\alpha) - e^{\epsilon}f(\beta) = 0\) when \(t < as\). Furthermore,
since \(t < as \implies t \in (-as, as)\), we get that
\(\beta = t/s \in (-a, a)\) and consequently \(f(\beta) > 0\) and
\(F(\beta) < 1\). Then, since \(d > 0\), \(s > 0\), \(f \geq 0\), and
\(F \leq 1\), \begin{align*}
\frac{\partial}{\partial d} g &=
\frac{f \left(\alpha \right)}{2 s}+\frac{e^{\epsilon} f
\left(\beta \right)}{2 s} > 0\\
\frac{\partial}{\partial s} g &=
- \left( \frac{f \left(\alpha \right) d}{2
s^{2}}+\frac{{e}^{\epsilon} f \left(\beta \right) d}{2 s^{2}} \right) < 0\\
\frac{\partial}{\partial \epsilon} g &= e^{\epsilon}(F(\beta) - 1) < 0.
\end{align*} We have now proven parts a., b., c.~and the ``if'' part of
e. We now turn to part d.~

We first note that we can write \(g\) as
\(F(-\alpha) - e^{\epsilon}F(-\beta)\). In order for \(g\geq 0\), we
must have that \[
\frac{F(-\alpha)}{F(-\beta)} \geq e^{\epsilon}.
\] Now, let \(a(x) = (x - d)/s\), \(b(x) = x/s\), and
\(\tau = t(\epsilon, d, s)\). Then we have that \(\alpha = a(\tau)\) and
\(\beta = b(\tau)\). Using that \(F(-x) = \int_{x}^{\infty} f(w)dw\) and
integration by substitution we get \begin{align*}
F(-\alpha) &= F(-a(\tau)) = s^{-1}\int_{\tau}^{\infty} f(a(x)) dx \\
F(-\beta) &= F(-b(\tau)) = s^{-1}\int_{\tau}^{\infty} f(b(x)) dx.
\end{align*} Recalling that the likelihood ratio
\(r(x) = f(a(x))/f(b(x))\) is non-decreasing and that
\(r(\tau) = e^{\epsilon}\), we have that
\(f(a(x)) \geq e^{\epsilon} f(b(x))\) for \(x \geq \tau\). Consequently,
we can conclude d.~as \[
\frac{F(-\alpha)}{F(-\beta)} \geq e^{\epsilon}.
\] Now assume that \(f(a(x)) = e^{\epsilon} f(b(x))\) for all
\(x \geq \tau\). Then
\(t(\epsilon, d, s) = \sup\{z \mid r(z) \leq e^{\epsilon}\} = \infty\).
Consequently, when \(t(\epsilon, d, s) < \infty\) we must have that
\(\int_{\tau}^{\infty}f(a(x)) > \int_{\tau}^{\infty}e^{\epsilon} f(b(x))\).
This in turn means that \(\frac{F(-\alpha)}{F(-\beta)} > e^{\epsilon}\)
which implies \(g > 0\) when \(t(\epsilon, d, s) < \infty\). This
concludes the proof of the ``only if'' part of e., and the proof of the
lemma. \qedhere

\end{proof}

\begin{proof}[\bf\em Proof of Lemma \ref{lem:neccsuff}]\label{proof:lem:neccsuff}

Let \(x = q(d)\) and \(y = q(d')\), \(d = x - y\), \(Y_{+} = sX + d\),
\(Y_{-} = s X\), \(f_{+}(x) = s^{-1}f(s^{-1}(x - d))\), and
\(f_{-}(x) = s^{-1}f(s^{-1}x))\). Then, variables \(Y_{+}\) and
\(Y_{-}\) are distributed according to densities \(f_{+}\) and
\(f_{-}\), respectively.

By Lemma \ref{lem:translation}, in order to show
\(\Pr(s X + x \in S) \leq e^{\epsilon} \Pr(s X + y \in S) + \delta\) for
all measurable sets \(S\), it suffices to show that
\(\Pr(Y_{+} \in S) \leq e^{\epsilon} \Pr(Y_{-} \in S) + \delta\) for all
measurable sets \(S\). The latter is equivalent to showing that for all
measurable \(S\),
\(\Pr(Y_{+} \in S) - e^{\epsilon} \Pr(Y_{-} \in S) \leq \delta\). Since
\(f\) is even, we can apply Lemma \ref{lem:symmetry} to let \(d \geq 0\)
without loss of generality for the remainder of this proof.

Let \(A^{c} = \{x \mid f_{+}(x) > e^{\epsilon} f_{-}(x) \}\) Then, from
Lemma \ref{lem:equiv} we have that \begin{align}
\label{eq:neccsuffgen}
\Pr(Y_{+} \in A^{c}) - e^{\epsilon} \Pr(Y_{-} \in A^{c}) \leq \delta
\end{align} is a sufficient and necessary condition for
\(\Pr(s X + x \in S) \leq e^{\epsilon} \Pr(s X + y \in S) + \delta\) for
all measurable sets \(S\).

Since \(f\) is even, we have that \(F(-x) = 1 - F(x)\). This allows us
to write \(\Pr(Y_{+} > t) = F((d-t)/s)\) and
\(\Pr(Y_{-} > t) = F(-t/s)\). Furthermore,
\(f((z-d)/s) \leq e^{\epsilon} f(z) \iff \psi(z/s) - \psi((z - d)/s) \leq \epsilon\).
This, together with the chaining of Lemma \ref{lem:probscrit} and Lemma
\ref{lem:monotonicity} conclude the proof. \qedhere

\end{proof}

\begin{proof}[\bf\em Proof of Lemma \ref{lem:standards}]\label{proof:lem:standards}

From Inspecting (\ref{eq:neccsuff}), it suffices to show that for
\(\Delta t(\epsilon, 1, s) = t(\epsilon, \Delta, s \Delta)\) for
\(\Delta>0\). Let
\(t_{z} := \sup \left\{ z \mid \frac{f\left(\frac{z-1}{s}\right)}{f\left( \frac{z}{s} \right)} \leq e^{\epsilon}\right\}\).
Substituting \(u=z\Delta\) into \(t_{z}\), we obtain
\(t_{u} = \sup \left\{u \mid \frac{f\left(\frac{u-\Delta}{s\Delta}\right)}{f\left( \frac{u}{s\Delta} \right)}\leq e^{\epsilon}\right\}\).
From this we see that \(\Delta t_{z} = t_{u}\). That
\(s(\epsilon, \delta, \Delta)\) is non-increasing in both \(\epsilon\)
and \(\delta\) is due to Lemma \ref{lem:monotonicity}. \qedhere

\end{proof}

\hypertarget{proofs-from-section}{%
\subsection{\texorpdfstring{Proofs from Section
\ref{sec:selected}}{Proofs from Section }}\label{proofs-from-section}}

\begin{proof}[\bf\em Proof of Theorem \ref{thm:laplace}.]\label{proof:thm:laplace}

For the standard Laplace distribution,
\(\psi(x) = \psi_{1} = \log(2) + |x|\), which is convex and even. Then
\begin{align*}
\psi\left(\frac{x}{s}\right) - \psi\left(\frac{x-\Delta}{s}\right)  = 
{\left| \frac{x}{s}\right|}-{\left| \frac{\Delta-x}{s}\right|} = \epsilon
\end{align*} has solutions \begin{align*}
S &= \begin{cases}
\{ x \mid x \geq \Delta \}, & \Delta = \epsilon s \\
\left\{\frac{\epsilon  s}{2}+\frac{\Delta}{2}\right\}, & \Delta > \epsilon s\\
\emptyset, & \text{otherwise}.
\end{cases} 
\end{align*} When \(\Delta = \epsilon s\), we have that \(t = \infty\),
and applying Lemma \ref{lem:neccsuff}, we have
\((\epsilon, 0)\)-differential privacy when \(s \geq \Delta/\epsilon\).
Now, let \(\Delta > \epsilon s\). Letting \(F(x) = F_1(x)\) (the
standard Laplace cdf) and substituting
\(\frac{\epsilon s}{2}+\frac{\Delta}{2}\) for \(t\) in
(\ref{eq:neccsuff}) and solving for \(s\) yields \begin{align*}
s \geq \frac{\Delta}{\epsilon - 2 \log\left(1-\delta \right)}.
\end{align*} For \(\delta=0\), the above coincides with
\(s \geq \Delta/\epsilon\). The theorem follows from Lemma
\ref{lem:neccsuff}. \qedhere

\end{proof}

\begin{proof}[\bf\em Proof of Theorem \ref{thm:logistic}.]\label{proof:thm:logistic}

We have that \(f_{\log}(x) = e^{-\psi(x)}\) for
\(\psi(x)= \psi_{\log} = x +2 \log\left(1+e^{-x}\right)\). Since
\(\psi''(x) = \frac{2 e^{-x}}{(1 + e^{-x})^{2}} \geq 0\), we have that
\(\psi\) is convex. Evenness can be seen from
\(e^{\psi(x)} = 2 + e^{-x} + e^{x}\) and the fact that \(e^{x}\) is a
strictly monotone function. Then, the equation \begin{align*}
\psi\left(\frac{x}{s}\right)& -
\psi\left(\frac{x-\Delta}{s}\right) \\&=
\frac{2 \log\left(1+e^{-\frac{x}{s}}\right) s -2
\log\left(1+e^{\frac{-x +\Delta}{s}}\right) s
+\Delta}{s} \\&= \epsilon
\end{align*} solved for \(x\) has solution \begin{align*}
x^{*} &= -\log\left(
A-1
\right) s, \qquad \text{where}\\
A &= e^{-\frac{-2 s \log\left(
e^{\frac{\Delta}{s}}-1\right)+2 s
\log\left(-1+e^{\frac{\epsilon  s +\Delta}{2 s}}\right)-\epsilon  s
+\Delta}{2 s}}.
\end{align*} Let \(t = x^{*}\), and let
\(F(x) = \int_{-\infty}^{x} f_{\log}(w) dw\). Then, manipulation yields
\begin{align*}
F &\left( \frac{\Delta -t}{s} \right) - 
e^{\epsilon}F \left( -\frac{t}{s} \right) \\
&=
\frac{e^{\frac{\epsilon  s +\Delta}{s}}+e^{\frac{2 \Delta}{s}}+2
e^{\frac{\epsilon  s +\Delta}{2 s}}-2 e^{\frac{\epsilon  s +3
\Delta}{2 s}}-e^{\frac{\Delta}{s}}-e^{\epsilon}}{\left({
e}^{\frac{\Delta}{s}}-1\right)^{2}}. 
\end{align*} Solving \[
\frac{e^{\frac{\epsilon  s +\Delta}{s}}+e^{\frac{2 \Delta}{s}}+2
e^{\frac{\epsilon  s +\Delta}{2 s}}-2 e^{\frac{\epsilon  s +3
\Delta}{2 s}}-e^{\frac{\Delta}{s}}-e^{\epsilon}}{\left({
e}^{\frac{\Delta}{s}}-1\right)^{2}} = \delta
\] for \(s\) yields two solutions \[
\frac{\Delta}{2 \log\left(
\frac{e^{\frac{\epsilon}{2}}
\pm
\sqrt{\delta  \left(e^{\epsilon}+\delta -1\right)}}{1-\delta}\right)}
\] of which the larger drops out due to monotonicity of the left hand
side of (\ref{eq:neccsuff}). The smaller produces the stated bound. The
theorem then follows from Lemma \ref{lem:neccsuff}. \qedhere

\end{proof}

\begin{proof}[\bf\em Proof of Theorem \ref{thm:balle}.]\label{proof:thm:balle}

The standard Gaussian density is \(f_{2}(x) = e^{-\psi(x)}\) for convex
and even
\(\psi(x) = \psi_{2}(x) = \frac{\log\left(2 \pi \right)}{2}+\frac{x^{2}}{2}\).
Since
\(\psi \left( \frac{z}{\sigma} \right) - \psi \left( \frac{z-d}{\sigma} \right) = \left(\frac{\Delta }{\sigma^{2}}\right) z -\frac{\Delta^{2}}{2 \sigma^{2}}\),
we see that \(\psi\) is MLR-unbounded. The equation
\(\psi \left( \frac{z}{\sigma} \right) - \psi \left( \frac{z-d}{\sigma} \right) = \epsilon\)
has unique solution
\(z^{*} = \frac{2 \epsilon \sigma^{2}+\Delta^{2}}{2 \Delta}\) for
\(\sigma> 0\) and \(\Delta > 0\). The theorem then follows from Lemma
\ref{lem:neccsuff}. \qedhere

\end{proof}

\hypertarget{proofs-from-section-1}{%
\subsection{\texorpdfstring{Proofs from Section
\ref{sec:utility}}{Proofs from Section }}\label{proofs-from-section-1}}

\begin{lemma}\label{lem:lmono}

Let \(\psi : \ensuremath{\mathbb{R}}\to \ensuremath{\mathbb{R}}\) be
even, convex, and non-constant. For \(d > 0\),
\(l(z, s) = \psi(z/s) - \psi((z - d)/s)\) is increasing in \(z\) and
strictly decreasing in \(s > 0\) for \(z > d/2\).

\end{lemma}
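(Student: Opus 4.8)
The plan is to treat the two monotonicity claims separately, since they have different flavours, after fixing notation. Write $u := z/s$ and $v := (z-d)/s$, so that $u - v = d/s > 0$ and, because $z > d/2$, also $u + v = (2z-d)/s > 0$; together these give $u > 0$ and $u > |v|$, and this is the only place the hypothesis $z > d/2$ enters. I also record two consequences of the hypotheses on $\psi$: since $\psi$ is even, $\psi'$ is odd, and since $\psi$ is convex and even its minimum sits at $0$, so $\psi'(w) \ge 0$ and $\psi'$ is nondecreasing on $[0,\infty)$.

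For monotonicity in $z$ I would appeal directly to log-concavity. Observe that $e^{l(z,s)} = f((z-d)/s)/f(z/s)$ is exactly the likelihood ratio of the translation family of the log-concave density $x \mapsto f(x/s)$, evaluated at the point $z$ with shift $d$. By Lemma~\ref{lem:mlr} this ratio is nondecreasing in $z$, hence so is $l(z,s)$; no strictness is claimed here, so this settles the first half. (Equivalently, for fixed width $d/s>0$ the increment $\psi(z/s)-\psi((z-d)/s)$ of a convex function over a window is nondecreasing as the window slides right.)

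For the dependence on $s$ I would differentiate. A short computation and the substitutions $z = su$, $z-d = sv$ give
\[
\frac{\partial}{\partial s} l(z,s) = -\frac{1}{s}\bigl(\phi(u) - \phi(v)\bigr), \qquad \phi(w) := w\,\psi'(w),
\]
so the claim reduces to proving $\phi(u) > \phi(v)$. The function $\phi$ is even (because $\psi'$ is odd), nonnegative, and nondecreasing on $[0,\infty)$: on that half-line it is the product of the two nonnegative nondecreasing maps $w \mapsto w$ and $w \mapsto \psi'(w)$. Since $u > |v| \ge 0$, evenness gives $\phi(v) = \phi(|v|)$ and monotonicity gives $\phi(u) \ge \phi(|v|)$, whence $\partial_s l \le 0$ and $l$ is nonincreasing in $s$.

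The delicate point, and the step I expect to be the main obstacle, is upgrading this to strict decrease. Writing $\phi(u) - \phi(v) = u\,\psi'(u) - |v|\,\psi'(|v|)$ with $u > |v| \ge 0$, one obtains a strict inequality as soon as $\psi'(u) > 0$, since then $u\,\psi'(u) > |v|\,\psi'(u) \ge |v|\,\psi'(|v|)$. Thus the sole obstruction is $u = z/s$ falling inside the (possibly nondegenerate) flat bottom $\{w : \psi'(w) = 0\}$ of $\psi$; non-constancy of $\psi$ forces $\psi' > 0$ eventually, and for the strictly convex densities this lemma is applied to — the Logistic $\psi_{\log}$ and the Subbotin $\psi_r$ with $r>1$ — the flat bottom is the single point $\{0\}$, so $\psi'(u)>0$ for every $u>0$ and the decrease is strict for all $s>0$. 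For merely-convex or non-differentiable $\psi$ the whole argument carries over verbatim with one-sided derivatives and subgradients replacing $\psi'$.
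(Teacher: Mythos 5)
Your proof is correct and, for the part that matters, takes a genuinely different route from the paper. The $z$-monotonicity via Lemma~\ref{lem:mlr} is identical. For the $s$-monotonicity, the paper does not differentiate: it compares the two increments $a=\psi(z/s)-\psi((z-d)/s)$ and $b=\psi(z/s')-\psi((z-d)/s')$ directly, arguing that the secant slope of $\psi$ over the wider interval $[(z-d)/s,\,z/s]$ is positive and at least that over $[(z-d)/s',\,z/s']$, so that a larger slope times a larger width ($d/s>d/s'$) forces $a>b$. Your derivative computation $\partial_s l=-\tfrac{1}{s}(\phi(u)-\phi(v))$ with $\phi(w)=w\psi'(w)$, combined with the observation that $z>d/2$ is exactly the condition $u>|v|$ and that $\phi$ is even and nondecreasing on $[0,\infty)$, is cleaner and isolates the role of the hypothesis more transparently than the secant argument. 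Your caveat about strictness is also well taken, and in fact cuts against the paper's own proof: the paper asserts that $z>d/2$ together with non-constancy, convexity and evenness forces $\psi(z/s)>\psi((z-d)/s)$ for every $s>0$, which fails when both $z/s$ and $(z-d)/s$ lie in a nondegenerate flat bottom of $\psi$ (e.g.\ $\psi(x)=\max(0,|x|-1)+c$), exactly the obstruction you identify; strict decrease genuinely requires $\psi'(z/s)>0$ (equivalently, that $z/s$ lies outside the flat bottom), and as you note this holds for every $s$ in all the densities the lemma is applied to. The one thing to tidy up is that you should state this as a mild strengthening of the hypotheses (or a restriction on $s$) rather than leaving it as a remark, since the lemma as literally stated claims strictness for all convex, even, non-constant $\psi$.
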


\begin{proof}[\bf\em Proof of Lemma \ref{lem:lmono}.]\label{proof:lem:lmono}

For \(d > 0\), \(l\) is increasing in \(z\) by Lemma \ref{lem:mlr}. For
\(l\) to be decreasing in \(s\), we need that if \(s < s'\) then
\(l(z,s) - l(z,s') \geq 0\). We have \begin{align*}
l(z,s) - l(z,s') &= a - b
\end{align*} for
\(a = \psi \left( \frac{z}{s} \right) - \psi \left( \frac{z-d}{s} \right)\)
and
\(b=\psi \left( \frac{z}{s'}\right) - \psi \left(\frac{z-d}{s'} \right)\).
Now assume that \(z > d/2\), which by non-constancy, convexity and
eveness of \(\psi\) ensures that \(\psi(z/s) > \psi((z-d)/s)\) for any
\(s>0\), so \(a > 0\) and \(b > 0\). We now show that \(a > b\) to round
out the proof. Let \(x_{s1} = (z-d)/s\), \(x_{s2} = z/s\),
\(x_{s'1} = (z-d)/s'\), and \(x_{s'2} = z/s'\). Then,
\(x_{s2} - x_{s1} = d/s > d/s' =  x_{s'2} - x_{s'1}\),
\(x_{s'1} < x_{s1}\), and \(x_{s'2} <  x_{s2}\). Consequently (by
\(z > d/2\), evenness ,convexity, and non-constancy) we have that the
slope of the line passing through \((x_{s'1},  \psi(x_{s'1}))\) and
\((x_{s'2}, \psi(x_{s'2}))\) is positive and at most that of the line
passing through \((x_{s1}, \psi(x_{s1}))\) and
\((x_{s2}, \psi(x_{s2}))\). Then since
\(x_{s2} - x_{s1} = d/s > d/s' =  x_{s'2} - x_{s'1}\), we must have that
\(a = \psi(x_{s2}) - \psi(x_{s1}) > b = \psi(x_{s'2}) - \psi(x_{s'1})\).
\qedhere

\end{proof}

\begin{proof}[\bf\em Proof of Theorem \ref{thm:unboundeds}.]\label{proof:thm:unboundeds}

Let \(l(z, s) = \psi(z/s) - \psi((z - d)/s)\). Since \(\psi\) is even
and convex, \(l\) is increasing in \(z\) from Lemma \ref{lem:mlr}. Since
\(f\) is MLR-unbounded, there will always exist \(z\) such that
\(\psi(z/s) - \psi((z - d)/s) > \epsilon\) for any \(\epsilon > 0\).
This, in turn means that \(t \in \operatorname{Supp}(f)\). From Lemma
\ref{lem:monotonicity} we then have that
\(F((\Delta - t)/s) - e^{\epsilon}F(-t/s)\) is strictly decreasing in
\(s\), non-negative, and 0 if and only if
\(t = \sup \{ z \mid l(z,s) \leq \epsilon \} = \infty\). Noting that we
only need to consider \(z > d/2\) (From the proof of Lemma
\ref{lem:probscrit}), applying Lemma \ref{lem:lmono}, we can make \(t\)
arbitrarily large by increasing \(s\). Consequently,
\(\lim_{s\to\infty}F((\Delta - t)/s) - e^{\epsilon}F(-t/s) = 0\), and we
conclude the Lemma. \qedhere

\end{proof}

\hypertarget{proofs-from-section-2}{%
\subsection{\texorpdfstring{Proofs from Section
\ref{sec:logvslap}}{Proofs from Section }}\label{proofs-from-section-2}}

\begin{proof}[\bf\em Proof of Theorem \ref{thm:s1sl}.]\label{proof:thm:s1sl}

Applying Theorem \ref{thm:laplace} and Theorem \ref{thm:logistic}, we
see that \begin{align}
\label{eq:rat}
\rho(\epsilon, \delta) &= 
\frac{2 \log\left(\sqrt{\delta} \sqrt{{e}^{\epsilon}+\delta -1}+{e}^{\frac{\epsilon}{2}}\right)-2
\log\left(1-\delta \right)}{\epsilon -2
\log\left(1-\delta \right)}.
\end{align} For \(1>\delta\geq 0\) we get that \[
2\log(2) + \epsilon > 2\log\left(\sqrt{\delta} \sqrt{{e}^{\epsilon}+\delta
-1}+{e}^{\frac{\epsilon}{2}}\right) \geq \epsilon.
\] Substituting, the lower bound into (\ref{eq:rat}) we get \[
\rho(\epsilon, \delta) \geq 1.
\] Renaming \(\epsilon\) as \(x\) for readability, we now write
\begin{align*}
\rho(x, \delta) &= \frac{f(x)}{g(x)} \;
\text{where} \\
f(x) &= -2 \log\left(\sqrt{\delta} \sqrt{e^{x}+\delta
-1}+e^{\frac{x}{2}}\right)\\&\phantom{+2}+2 \log\left(1-\delta
\right) \leq 0\\
g(x) &= -x +2 \log\left(1-\delta \right) \leq 0, \text{with} \\
f'(x) &= -\frac{2 \left(\frac{\sqrt{\delta} e^{x}}{2
\sqrt{e^{x}+\delta -1}}+\frac{{
e}^{\frac{x}{2}}}{2}\right)}{\sqrt{\delta} \sqrt{{
e}^{x}+\delta -1}+e^{\frac{x}{2}}} \leq 0\\
g'(x) &= -1 \leq 0,
\end{align*} with all the inequalities strict for \(x > 0\). For
\[H(x) = \frac{g(x)^{2}(f/g)'(x)}{|g'(x)|}\] we now get \begin{align*}
\lim_{x \to 0}H(x) &=h(\delta) \quad \text{for}\\ h(\delta) &= \frac{\left(2 \delta -2\right)
\log\left(1-\delta \right)-2 \log\left(\delta +1\right)
\left(\delta +1\right)}{\delta +1},
\end{align*} and \begin{align*}
h'(\delta) &= \frac{4 \log\left(1-\delta \right)}{\left(\delta
+1\right)^{2}} < 0,
\end{align*} which means that for \(1 > \delta \geq 0\),
\(h(\delta) \leq 0\).

We have that \begin{align*}
(f'/g')'&(x) = \\q(x)& \frac{\sqrt{\delta} e^{\frac{x}{2}}}{2 \left(e^{x}+\delta
-1\right)^{\frac{3}{2}} \left(\sqrt{\delta} \sqrt{{
e}^{x}+\delta -1}+e^{\frac{x}{2}}\right)^{2}}, 
\end{align*} for \begin{align*}
q(x) &= \left(-2
e^{\frac{x}{2}}
\left(\sqrt{\delta}-\delta^{\frac{3}{2}}\right) \sqrt{{
e}^{x}+\delta -1}\right)\\&\phantom{X}+\left(\delta-1 \right) \left(\delta +2 {
e}^{x}-1\right),
\end{align*} and recognize that \((f'/g')'(x) \leq 0\) if
\(q(x) \leq 0\). Since \(\sqrt{\delta}-\delta^{\frac{3}{2}} \geq 0\) for
\(1 > \delta \geq 0\), we also recognize that \(q(x) \geq 0\).

A combination of Theorems 4 and 5 in Anderson et
al.\citep{10.2307/27642062}, yields: If \(g'\) never vanishes on an open
interval \((0, b) \in \ensuremath{\mathbb{R}}\) and \(gg'>0\) on
\((0, b)\), \(\lim_{x\to 0} H(x) \leq 0\), and \(f'/g'\) is decreasing
on \((0, b)\), then \((f/g)' < 0\) on \((0, b)\). By the above, all the
conditions are met and we can conclude that \(\rho(\epsilon, \delta)\)
is decreasing in \(\epsilon\) for \(\epsilon \in (0, b)\) for arbitrary
\(b > 0\).

Now, \[
\lim_{\epsilon \to 0} \rho(\epsilon,\delta) = \rho(0,\delta) =
\frac{-\log\left(\delta +1\right)+\log\left(1-\delta \right)}{\log\left(1-\delta \right)}
\] and \begin{align*}
\frac{d}{d \delta} \rho(0,
\delta) &=
\frac{\log\left(\frac{1-\delta}{\delta +1}\right) \delta +\log\left(\frac{1}{\left(\delta +1\right) \left(1-\delta \right)}\right)}{\log\left(1-\delta \right)^{2} \left(1-\delta \right) \left(\delta +1\right)},
\end{align*} which we recognize as being non-positive, meaning that the
ratio decreases in \(\delta\). Consequently, \begin{align*}
\rho(\epsilon, \delta) < \lim_{\delta\to 0} \rho(0,
\delta) = 2,
\end{align*} which concludes the proof. \qedhere

\end{proof}

\begin{proof}[\bf\em Proof of Corollary \ref{cor:v1vl}]\label{proof:cor:v1vl}

Follows from \(\operatorname{Var}(X_{1}) = 2\),
\(\operatorname{Var}(X_{\log}) = \frac{\pi^{2}}{3}\), \begin{align*}
\frac{\operatorname{Var}(s_{1}(\epsilon, \delta,\Delta)X_{1})}{\operatorname{Var}(s_{\log}(\epsilon,
\delta,\Delta)X_{\log})} &= \left( \rho_{1,\log}(\epsilon, \delta) \right)^{2} \frac{\operatorname{Var}(X_{1})}{\operatorname{Var}(X_{\log})},
\end{align*} and Theorem \ref{thm:s1sl}. \qedhere

\end{proof}

\hypertarget{proofs-from-section-3}{%
\subsection{\texorpdfstring{Proofs from Section
\ref{sec:extensions}}{Proofs from Section }}\label{proofs-from-section-3}}

\begin{proof}[\bf\em Proof of Lemma \ref{lem:iso}]\label{proof:lem:iso}

Let \(d > 0\), and let \begin{align*}
D^{(n)}_{d} &= \{ \ensuremath{\mathbf{x}} \in \ensuremath{\mathbb{R}}^{n} \mid \| \ensuremath{\mathbf{x}} \| = d\}, \\
A^{(n)}_{\ensuremath{\mathbf{d}}} &= \{ \ensuremath{\mathbf{x}} \in \ensuremath{\mathbb{R}}^{n} \mid f(\|\ensuremath{\mathbf{x}} - \ensuremath{\mathbf{d}}
\|) \leq e^{\epsilon} f(\| \ensuremath{\mathbf{x}} \|) \}, \\
A^{(n)}_{d} &= \bigcup_{\ensuremath{\mathbf{d}} \in D^{(n)}_{d}} A^{(n)}_{\ensuremath{\mathbf{d}}}.
\end{align*} We note that for \(\ensuremath{\mathbf{d}} \in D^{1}_{d}\)
we have \begin{align*}
A^{(1)}_{\ensuremath{\mathbf{d}}} &= \{x \in \ensuremath{\mathbb{R}}\mid f(|x-d|) \leq e^{\epsilon}
f(|x|)\}\\ 
&= \{x \in \ensuremath{\mathbb{R}}\mid f(x - d) \leq e^{\epsilon} f(x) \}.
\end{align*} The first equality in the above display is due to isomorphy
and the second is due to \(f\) being even. Therefore, \begin{align*}
A^{(1)}_{d} = \{x \in \ensuremath{\mathbb{R}}\mid f(x - d) \leq e^{\epsilon} f(x) \}.
\end{align*} If we are able to show that
\(\ensuremath{\mathbf{x}} \in A^{(n)}_{d} \iff \|\ensuremath{\mathbf{x}}\| \in A^{(1)}_{d}\),
then the proof follows by the argument in the proof of Lemma
\ref{lem:neccsuff} for \(A^{c} = \left(A^{(1)}_{d}\right)^{c}\).

We start with \((\implies)\). First note
\(\ensuremath{\mathbf{x}} \in A^{(n)}_{d} \implies \forall \ensuremath{\mathbf{d}} \in D^{(n)}_{d} (f(\| \ensuremath{\mathbf{x}}-\ensuremath{\mathbf{d}} \|) \leq e^{\epsilon}f(\| \ensuremath{\mathbf{x}} \|))\),
and in particular for
\(\ensuremath{\mathbf{y}} = \arg\min_{\ensuremath{\mathbf{d}} \in D^{(n)}_{d}} \| \ensuremath{\mathbf{x}} - \ensuremath{\mathbf{d}}\| = c \ensuremath{\mathbf{x}}\)
for \(c = \frac{d}{\| \ensuremath{\mathbf{x}} \|}\). We also have that
\begin{align}
f(\| \ensuremath{\mathbf{x}} -\ensuremath{\mathbf{y}} \|) &\geq \max_{\ensuremath{\mathbf{d}} \in D^{(n)}_{d}} f(\|
\ensuremath{\mathbf{x}} - \ensuremath{\mathbf{d}} \|), \label{eq:yminfmax}
\end{align} since \(f\) is non-increasing on
\(\ensuremath{\mathbb{R}}_{\geq 0}\). Now, \begin{align*}
\| \ensuremath{\mathbf{x}} - \ensuremath{\mathbf{y}} \| &= \|  \ensuremath{\mathbf{x}} - c \ensuremath{\mathbf{x}} \| = \| (1 - c)
\ensuremath{\mathbf{x}} \| \\
& = |1-c| \|\ensuremath{\mathbf{x}}\| \\
&=
\begin{cases}
\| \ensuremath{\mathbf{x}} \| - d & \text{if $d \leq \|\ensuremath{\mathbf{x}}\|$},\\
d - \| \ensuremath{\mathbf{x}} \| & \text{if $d > \|\ensuremath{\mathbf{x}}\|$}.
\end{cases} 
\end{align*} Since \(f\) is even,
\(f(\| \ensuremath{\mathbf{x}} \| - d) = f(d - \| \ensuremath{\mathbf{x}} \|)\),
and consequently \[
f(\| \ensuremath{\mathbf{x}} -\ensuremath{\mathbf{y}} \|) \leq e^{\epsilon} f(\| \ensuremath{\mathbf{x}} \|) \implies
f(\| \ensuremath{\mathbf{x}} \| - d) \leq e^{\epsilon} f(\| \ensuremath{\mathbf{x}} \|).
\] By (\ref{eq:yminfmax}) we conclude the \((\implies)\) case.

We now turn to \((\impliedby)\). By the reverse triangle inequality we
have
\(\|\ensuremath{\mathbf{x}} - \ensuremath{\mathbf{d}}\| \geq \|\ensuremath{\mathbf{x}}\| - \|\ensuremath{\mathbf{d}}\|\)
and since \(f\) is non-increasing on
\(\ensuremath{\mathbb{R}}_{\geq 0}\),
\(f(\|\ensuremath{\mathbf{x}} - \ensuremath{\mathbf{d}}\|) \leq f(\|\ensuremath{\mathbf{x}}\| - \|\ensuremath{\mathbf{d}}\|)\).
From this we conclude the \((\impliedby)\) case. \qedhere

\end{proof}

\begin{proof}[\bf\em Proof of Theorem \ref{thm:gensub}]\label{proof:thm:gensub}

Applying a reparameterization of \(\tau=(2/q)^{1/q}\) in (12) in
\citet{10.2307/2337215}, they show that when the independent
\(X_{i} \sim f_{p}\), where \(f_{p}\) is the
\ensuremath{\text{Subbotin}_{p}} density, the variable
\(\ensuremath{\mathbf{X}}\) has a \(\|\cdot\|_{p}\)-spherical
distribution. Since the distribution of \(\ensuremath{\mathbf{X}}\) is
\(\|\cdot\|_{p}\)-spherical, \(\ensuremath{\mathbf{X}}\) is distributed
according to density
\(g(\ensuremath{\mathbf{x}}) = h(\| \ensuremath{\mathbf{x}} \|_{p})\)
for a function \(h\). Since the marginals \(X_{i} \sim f_{p}\), we must
have that
\(g(\ensuremath{\mathbf{x}}) = C^{-1} f_{p}(\|\ensuremath{\mathbf{x}}\|_{p})\)
for a normalizing constant \(C\). The theorem then follows from Lemma
\ref{lem:iso}. \qedhere

\end{proof}

\hypertarget{proofs-from-section-4}{%
\subsection{\texorpdfstring{Proofs from Section
\ref{sec:subopt}}{Proofs from Section }}\label{proofs-from-section-4}}

\begin{proof}[\bf\em Proof of Theorem \ref{thm:inorm}]

Suppose \(\|f(R)\|_{\infty} = 0\), in which case it is an
\(l_{\infty}\)-ball and \(\Delta_{p} = 0\). Now, suppose
\(\|f(R)\|_{\infty} > 0\) and let
\((v,w) \in \arg\sup_{(x,y) \in f(R)^{2}}\|x-y\|_{\infty}\). Since
\(\|f(R)\|_{\infty} > 0\), there exists distinct \(a,b \in R\) such that
\(f(a) = v\) and \(f(b) = w\). We can now let \(d_{1} \in \{a\}^{n}\)
and \(d_{2} \in \{a\}^{n-1} \times \{b\}\). Then,
\(\|q(d_{1}) - q(d_{2})\|_{p} = |\nu(n)| \left( \sum_{i=1}^{m}|v_{i} - w_{i}|^{p} \right)^{1/p} \leq m^{1/p}|\nu(n)|\|v - w\|_{\infty} = m^{1/p} \,|\nu(n)|\,\|f(R)\|_{\infty}\).
By linearity of \(f\), we have that
\(\|q(d_{1}) - q(d_{2})\|_{p} \geq \|q(d) - q(d')\|_{p}\) for any
neighboring \(d,d' \in R^{n}\). If \(f(R)\) is a \(l_{\infty}\)-ball, we
can choose \(v\) and \(w\) such that
\(|v_{i} - w_{i}| = \|v - w\|_{\infty}\) for all \(i\). Then we get the
claimed equality. \qedhere

\end{proof}

\end{document}